\documentclass[runningheads]{llncs}
\usepackage[T1]{fontenc}
% T1 fonts will be used to generate the final print and online PDFs,
% so please use T1 fonts in your manuscript whenever possible.
% Other font encondings may result in incorrect characters.
%
\usepackage{graphicx}
% Used for displaying a sample figure. If possible, figure files should
% be included in EPS format.
%
% If you use the hyperref package, please uncomment the following two lines
% to display URLs in blue roman font according to Springer's eBook style:
%\usepackage{color}
%\renewcommand\UrlFont{\color{blue}\rmfamily}
%\urlstyle{rm}
%
% \usepackage[numbers, sort]{natbib}
\usepackage{booktabs} % For formal tables
\usepackage[ruled]{algorithm2e} % For algorithms
\usepackage{amsmath}
\usepackage{thmtools}
\usepackage{bbm}
\usepackage{cleveref}

\usepackage[libertine]{newtxmath}

\usepackage{amssymb}
\usepackage{float} 
\usepackage{algorithmic}
\usepackage{comment}

\usepackage{caption}
\usepackage{subcaption}
\usepackage{tikz}
\usepackage{ulem}
\usepackage{empheq}
\usepackage{fancybox}
\usepackage{pgfplots}

\renewcommand{\ts}{\theta^-}
\newcommand{\tf}{\theta^+}
\newcommand{\OPT}{\texttt{OPT}\xspace}
\newcommand{\SLEEPY}{\texttt{SLEEPY}\xspace}
\newcommand{\LPT}{\texttt{LPT}\xspace}
\newcommand{\GSLEEPY}{\texttt{Generalized SLEEPY}\xspace}

\begin{document}
\title{Online Makespan Minimization: Beat LPT by Dynamic Locking \thanks{This work was completed when Zhaozi Wang was an undergraduate student at Shanghai Jiao Tong University. He is now a Ph.D. candidate at New York University.}}

% %\titlerunning{Dummy short title} %TODO optional, please use if title is longer than one line

% \author{Zhaozi Wang\inst{1} \and
% Zhiwei Ying\inst{2} \and
% Yuhao Zhang\inst{2}}
% %
% \authorrunning{Zhaozi et al.}
% % First names are abbreviated in the running head.
% % If there are more than two authors, 'et al.' is used.
% %
% \institute{New York University, NY 10012, USA \email{zw4460@nyu.edu} The work is done when the author was in Shanghai Jiao Tong Uniersity \and
% Shanghai Jiao Tong University
% \email{yingzhiwei@sjtu.edu.cn}\\
% \email{zhang_yuhao@sjtu.edu.cn}
% }
\author{Zhaozi Wang\inst{1}\orcidID{0009-0002-0177-9426} \and
Zhiwei Ying\inst{1}\orcidID{0009-0008-3200-9575} \and
Yuhao Zhang\inst{1}\orcidID{0000-0001-9330-1926}}
\authorrunning{Z. Wang et al.}
% First names are abbreviated in the running head.
% If there are more than two authors, 'et al.' is used.
%

\institute{Shanghai Jiao Tong University, No. 800 Dongchuan Road, Minhang District, Shanghai, China‌\\
\email{\{ocmykr2,yingzhiwei,zhang\_yuhao\}@sjtu.edu.cn}}

\maketitle

%TODO mandatory: add short abstract of the document
\begin{abstract}
Online makespan minimization is a fundamental problem in scheduling. In this paper, we investigate its over-time formulation, where each job has a release time and a processing time. A job becomes known only at its release time and must be scheduled on a machine thereafter. The Longest Processing Time First (LPT) algorithm, established by Chen and Vestjens (1997), achieves a competitive ratio of $1.5$. For the special case of two machines, Noga and Seiden introduced the SLEEPY algorithm, which achieves a tight competitive ratio of $1.382$. However, for $m \geq 3$, no known algorithm has convincingly surpassed the long-standing $1.5$ barrier.

We propose a natural generalization of SLEEPY and show this simple approach can beat the $1.5$ barrier and achieve $1.482$-competitive when $m=3$. However, when $m$ becomes large, we prove this simple generalization fails to beat $1.5$. Meanwhile, we introduce a novel technique called dynamic locking to overcome this new challenge. As a result, we achieve a competitive ratio of $1.5-\frac{1}{O(m^2)}$, which beats the LPT algorithm ($1.5$-competitive) for every constant $m$. 

\keywords{Online scheduling \and Makespan minimization \and Dynamic locking.}

\end{abstract}

\section{Introduction}

Makespan Minimization is a well-established objective in the field of scheduling. In the standard case of identical machines, we are presented with $m$ identical machines and $n$ jobs that require scheduling. The objective is to assign each job to a specific machine to minimize the makespan, which corresponds to the last completion time among all machines. Traditional offline algorithm design focuses on accomplishing this task efficiently under the assumption that we know all the jobs' information at the beginning. However, the assumption does not hold in many real-world applications, where we are required to make online decisions. Minimizing makespan in an online scenario has garnered significant attention since the 1960s~\cite {DBLP:journals/siamam/Graham69}. There are two types of online formulations of makespan minimization problems.
\begin{itemize}
    \item Sequential (list) formulation: jobs are all released at $0$ but following an online order that the next job is revealed only after we decide irrevocably which machine is allocated for the current one.
    \item Over-time arrival formulation: Each job $j$ is associated with a release time $r_j$, and is revealed by the algorithm at $r_j$. The algorithm can start processing a job at an idle machine after the job's release time without preemption. 
\end{itemize}

Our paper focuses on the second one, under the standard competitive analysis for online algorithms. An algorithm is said to be $c$-competitive, or achieves a competitive ratio of $c$ if its makespan is at most $c$ times the optimal makespan under all possible inputs. 

Let us compare the two models from the perspective of motivation. The sequential formulation naturally fits scenarios involving short-term load balancing tasks, such as scheduling across multiple CPUs over a brief time window. In these cases, many tasks arrive almost simultaneously, so their release times can be treated as zero. Moreover, such settings typically require immediate dispatch, meaning that jobs must be scheduled in the order they arrive. In this context, minimizing the makespan serves as a reasonable objective, as it promotes balanced load distribution among the CPUs. On the other hand, the over-time formulation is more suitable for long-term manufacturing scenarios. In such settings, tasks arrive gradually over time, as part of a batch that must be completed together to trigger downstream operations. Here, the makespan becomes a natural objective, and release times play a crucial role, as jobs no longer arrive simultaneously. Given this long-term perspective, we no longer require immediate dispatch; instead, we can determine which jobs to process over time without adhering to a fixed decision order.

A common concern with using the makespan objective in the presence of release times is whether it still effectively captures load balancing across machines. In the offline setting, where all release times are known in advance, one might worry that the makespan could be dominated by the latest released jobs, potentially overlooking imbalances in scheduling earlier tasks. However, this concern is naturally mitigated under the worst-case competitive analysis framework for online algorithms. If an online algorithm allocates early jobs in an unbalanced manner, but performs well for later jobs, an adversary could exploit this by releasing only the early jobs, resulting in a poor competitive ratio. Consequently, any online algorithm with a good competitive ratio must inherently maintain balanced scheduling at all times. This makes the makespan objective with release times well-suited for long-term scheduling tasks, particularly when the underlying goal is to balance load across machines.

% Remark that there is a popular sequential formulation of the online makespan minimization problem, where jobs are all released at $0$ but following an online order that the next job is revealed only after we decide which machine is allocated for the current one, our paper considers the over-time arrival formulation by \cite{DBLP:journals/orl/ChenV97}. In this formulation, each job $j$ is associated with a processing time $p_j$ and a release time $r_j$. The job $j$ is revealed at its release time $r_j$, and we can begin processing it on one of the $m$ identical machines at a time $s_j$ greater than or equal to $r_j$ without preemption. Let $C_j=s_j+p_j$ be the completion time of $j$. The goal is to minimize the makespan $\max_j C_j$, which is the time when all jobs are done. The effectiveness of an online algorithm can be measured by the competitive ratio of the algorithm's result to the optimal solution. It is said to be $c$-competitive if its makespan is at most $c$ times the optimal makespan.

In 1997, Chen and Vestjens \cite{DBLP:journals/orl/ChenV97} proposed an online LPT (Longest Processing Time First) algorithm described as follows: at each time that there is an idle (not processing any job) machine and several released but not scheduled jobs, schedule the longest job on that machine. They prove that LPT is $1.5$-competitive, and the analysis is tight for $m\geq 2$. The tightness is established by the following one-one-two hard instance: initially, two size-$1$ jobs are released at time $0$, and LPT immediately schedules them on two separate machines. Subsequently, just after the commencement of these two jobs, $m-1$ jobs with a processing time of $2$ are released. No other jobs are released later. LPT finishes these jobs at $3$, while the optimal solution can complete them at $2$ by placing the first two small jobs on the same machine. On the hardness side for general algorithms, they show that no deterministic online algorithm has a competitive ratio better than $(\frac{5-\sqrt{5}}{2})\approx 1.382$ for $m=2$ and $1.347$ for $m\geq 3$, which is further improved to $1.3596$ by Li and Zhang \cite{DBLP:journals/jossac/LiZ16}. The most natural open question is: can we beat this simple LPT strategy in the over-time Online Makespan Minimization problem?

Let us begin by considering how to beat the competitive ratio of $1.5$ in the classic one-one-two hard instance. When two size-1 jobs arrive at time 0, processing both immediately yields a competitive ratio of $1.5$. This naturally suggests a strategy: process one job and delay the other. In 2001, Noga and Seiden~\cite{DBLP:journals/tcs/NogaS01} applied this insight and proposed an algorithm called $\tt{SLEEPY}$ for the 2-machine case. Remarkably, they achieved the optimal competitive ratio of $(\frac{5 - \sqrt{5}}{2}) \approx 1.382$. However, when the number of machines becomes larger, there are more technical challenges. In 2018, Huang et al.~\cite{DBLP:conf/approx/0002KTWZ18} managed to break the $1.5$ barrier for general $m$ by utilizing an additional power called restart (which involves stopping the processing of a job but reprocessing the whole job later). Alongside this, they introduced a key analytical tool, the left-over lemma, which provides a bound comparing the total amount of work completed by the algorithm and by the optimal schedule over a given time interval. Despite these advances, whether LPT and its $1.5$ competitive ratio can be outperformed in the original (non-restart) setting for any $m \geq 3$ has remained an open question for decades, even for small constants such as $m = 3$ or $m = 4$.

We partially answer this long-standing open question on the positive end. We generalize the \SLEEPY algorithm and propose the \GSLEEPY algorithm that works for general $m$, along with a novel technique called dynamic locking. We prove that it is $\left(1.5-\frac{1}{O(m^2)}\right)$-competitive for general $m$. In conclusion, we beat LPT for all the cases where $m$ is a constant. 

\begin{restatable}{theorem}{ratiogeneral}
\label{thm:ratio_m>3}
\GSLEEPY achieves a competitive ratio of $1.5-\frac{1}{O(m^2)}$ for the case $m\geq 4$, with dynamic locking.  
\end{restatable}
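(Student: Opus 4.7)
The plan is to normalize to $\OPT = 1$ and derive a contradiction from the assumption $C^* > 1.5 - \epsilon$, where $\epsilon = \Theta(1/m^2)$ and $C^*$ is the completion time of the job $j^*$ that finishes last in the schedule produced by \GSLEEPY. The overall skeleton follows the classical analyses of \LPT and \SLEEPY: inspect the state of the machines just before the start time $s^* = C^* - p_{j^*}$, argue that every machine is busy and nearly balanced there, and combine this with a volume accounting to bound $C^*$ in terms of $\OPT$ and $p_{j^*}$.

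First I would establish the basic structural invariants of \GSLEEPY under dynamic locking: at time $s^*$ no machine is idle (otherwise $j^*$ would have been dispatched earlier, or the current lock would have been triggered by $j^*$ itself, in which case one analyzes that locking event directly), and the residual processing times of the $m-1$ machines other than the one running $j^*$ are all close to $C^*$. Together with a left-over lemma in the spirit of Huang et al., this yields a lower bound on the total volume of work released in $[0, s^*]$, which then gives a lower bound on $s^*$ relative to $\OPT$. I would then split into the two standard regimes: the large regime $p_{j^*} \geq 1/2$, where $\OPT$ is forced into a rigid placement of $j^*$; and the small regime, in which the $m-1$ other busy machines at time $s^*$ must carry medium-to-large jobs whose total volume nearly saturates the OPT budget.

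The heart of the argument is to quantify why dynamic locking buys a $\Theta(1/m^2)$ improvement over any static delay scheme. With a fixed threshold $\theta$, the adversary can release a cohort of medium jobs exactly when the delayed small jobs become the least-loaded machines, reproducing a scaled one-one-two failure whose gap shrinks only like $\Theta(1/m)$. Dynamic locking recomputes the lock duration as a function of the current load vector so that, after the lock releases, the machines are balanced within a window of width $\Theta(1/m^2)$. I would formalize this through a potential function $\Phi$ that measures the maximum pairwise difference among the residual processing times of \GSLEEPY's machines, and show that the dynamic-locking rule forces $\Phi = O(1/m^2)\cdot \OPT$ at each decision point. Plugging this balance estimate into the volume accounting from the previous paragraph closes the gap between $C^*$ and the $1.5$ barrier.

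The main obstacle, I expect, lies in the interplay between dynamic locking and the adversary's freedom to release new jobs \emph{during} an active lock. Because the locking duration itself depends on the current schedule, releases can cascade: the adversary may attempt to trigger successive re-locks just as earlier locks are about to expire, accumulating idle time. I would control this by a charging scheme that assigns each unit of idle time produced by a lock to the volume of work that justified its invocation, using $\OPT = 1$ to bound the total charge; the $\Theta(1/m^2)$ loss then arises naturally as the coarsest granularity at which independent lock events can charge distinct pieces of the OPT budget. Subsidiary obstacles include handling corner cases in which $j^*$ itself is the job whose arrival triggers a lock, and verifying that the balance potential is maintained even when the adversary's released jobs have very small processing times and could in principle trigger many lightweight lock recomputations in a short window.
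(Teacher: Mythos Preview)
Your proposal rests on a misconception of what dynamic locking is and why it helps, and this gap is fatal to the plan as written. In the paper, the dynamic locking parameter is $\alpha_j = \lambda^{-s_j/p_j}\alpha$: it depends only on the ratio $s_j/p_j$ of the job being started, not on the current load vector, and its purpose is \emph{not} to keep the machines balanced. The point is rather a bin-packing one. When the last job $n$ is delayed, one traces back a \emph{last locking chain} $n = n_1, n_2, \dots, n_k$ of jobs each of which triggered the lock that delayed the previous one; before the head $n_k$ of this chain there is a genuinely busy interval $[r_n, s_{n_k})$, and the $m$ jobs running there together with the chain form $m+k$ \emph{critical jobs}. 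Pigeonhole forces $k$ pairs of critical jobs onto the same machine in $\pi$. The dangerous case is when all $k$ pairs consist of \emph{early} critical jobs (those with $s_j < r_n$): here one must show their total size exceeds $k$, and the only leverage is the gaps between their start times created by locking. Dynamic locking ensures that early critical jobs (small $s_j/p_j$) use a strictly larger locking parameter than the chain jobs (large $s_{n_i}/p_{n_i}$), so that the start-time gaps among the early jobs outweigh the shrinkage $\gamma \to \gamma'$ caused by the chain. This is a discrete combinatorial argument, not a potential-function balance estimate.

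Consequently, several concrete steps in your plan would not go through. The claim that ``at time $s^*$ no machine is idle'' is false in the relevant sense: machines may be \emph{locked} (neither busy nor idle) just before $s_n$, which is exactly why the locking chain must be unwound before one reaches a busy cross-section. The proposed potential $\Phi$ on residual-load differences has no counterpart in the algorithm's behavior and would not be controlled by the actual rule $\alpha_j = \lambda^{-s_j/p_j}\alpha$. The $\Theta(1/m^2)$ does not arise from a balance window; it comes from choosing $\alpha = \Theta(1/m^2)$ so that the total waste $(m-1)\alpha\hat P_\sigma$ stays $O(1)$ while the tight bin-packing condition $(2 + c\,\alpha)\gamma > 1$ (with $c$ a constant from the $\alpha_f$ lower bound) still forces $\gamma < \tfrac12$ by only $\Theta(\alpha) = \Theta(1/m^2)$. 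Finally, the case split you need is not large versus small $p_{j^*}$ but \emph{early critical pairs only} (handled by the bin-packing argument above) versus \emph{at least one late critical pair} (handled by efficiency arguments comparing $P_\sigma$ to $P_\pi$ via the left-over lemma and a bound on $\theta^+$, the end of the last idle period). Your charging idea for idle time is in the right spirit for the latter branch, but the former branch is where dynamic locking does its work, and your proposal does not yet contain that mechanism.
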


Interestingly, in the case $m=3$, a good competitive ratio can be achieved more straightforwardly without the dynamic locking technique, which is $1.482$. The result is formalized in the following theorem.

\begin{theorem}
\label{thm:ratio_m=3}
\GSLEEPY achieves a competitive ratio of $1.482$ for the case $m=3$, without dynamic locking.
\end{theorem}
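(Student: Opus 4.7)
The plan is to establish the $1.482$ bound by a worst-case analysis of the job that determines the \GSLEEPY makespan, together with a careful tuning of the algorithm's sleeping thresholds $\ts$ and $\tf$. I would first describe \GSLEEPY concretely for $m=3$: it generalizes \SLEEPY by keeping an idle machine asleep whenever the currently released-but-unprocessed jobs are all ``short'' relative to the current time, and waking it only once either a sufficiently large job arrives or enough time has elapsed that further waiting is provably wasteful. The two thresholds $\ts, \tf$ parametrize these triggers and will be optimized at the end to equalize the binding cases.

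By the standard scaling, assume $\OPT = 1$, and let $J^{\star}$ be a job whose completion realizes the \GSLEEPY makespan, with processing time $s$ and release time $r$. Then $s \le 1$ and $r + s \le$ the completion time of $J^{\star}$ in the optimal schedule, giving two easy lower bounds on the denominator. The heart of the argument is a structural lemma: whenever $J^{\star}$ is available but not yet scheduled, either all three machines are busy or \GSLEEPY is in a sleep interval whose duration is controlled by the thresholds. In particular, any sleep interval that postpones $J^{\star}$ can be charged against useful throughput on the other machines, a specialization of the left-over style of accounting mentioned in the excerpt to the non-restart setting.

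With this structural fact in hand, I would run a case analysis on the size of $s$ and on whether $J^{\star}$ was started on a machine that had just woken up. The main cases are: (i) a \emph{large} $J^{\star}$ ($s$ close to $1$), a direct descendant of the one-one-two hard instance, where the sleeping rule prevents the $m=3$ analogue from tightening to $1.5$; (ii) a \emph{small} $J^{\star}$, where the makespan is controlled by $r + s$ plus the bounded additional sleep delay; and (iii) a volume case, handled by comparing the total processing volume executed by \GSLEEPY before $J^{\star}$'s start with $3 \cdot \OPT$. In each case the derived upper bound on the makespan is an explicit function of $\ts$ and $\tf$; equating the binding inequalities and solving yields the constant $1.482$.

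The main obstacle I anticipate is the intermediate regime where $J^{\star}$ is of moderate length and is released during a sleep interval on its own machine while jobs were running on the others. Here the accounting must simultaneously track how much of the sleep time was truly wasted from $J^{\star}$'s perspective, how much useful work was being done elsewhere, and the residual load imbalance at the moment $J^{\star}$ finally starts. Getting this case to match the bound coming from the large-$J^{\star}$ (one-one-two) construction is what I expect to pin down the constant: any overly aggressive choice of $\ts, \tf$ would fail to beat $1.5$ on the one-one-two analogue, while an overly conservative choice would spoil the intermediate bound, and $1.482$ should emerge as the balanced tradeoff.
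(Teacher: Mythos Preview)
Your proposal rests on a misreading of the algorithm. In the paper, $\ts$ and $\tf$ (i.e., $\theta^-$ and $\theta^+$) are \emph{not} tunable sleeping thresholds of \GSLEEPY; they denote the endpoints of the last idle period in the schedule $\sigma$ before $s_n$, and are purely objects of the analysis. The only algorithmic parameter for the non-dynamic version is the single locking constant $\alpha$: whenever a job $j$ starts at time $s_j$, \emph{all} other machines are locked until $s_j+\alpha p_j$, regardless of whether pending jobs are ``short'' or ``long''. So your description of \GSLEEPY (``keep an idle machine asleep whenever the released-but-unprocessed jobs are all short'') and your plan to ``optimize $\ts,\tf$ to equalize the binding cases'' do not match the object you are analyzing; the constant $1.482$ comes from choosing $\alpha\approx 0.07066$ and $\gamma\approx 0.4817$, not from tuning $\theta^-,\theta^+$.

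Beyond this, your case split (large/small/volume for the last job) misses the structure that actually drives the $m=3$ proof. The paper first shows, via an efficiency argument specific to $m=3$, that the last locking chain has length one: the makespan job $n$ is \emph{pushed} (delayed by all three machines being busy), not locked. This forces three jobs $j_1,j_2,j_3$ to be processing at $s_n$; together with $n$ they form four critical jobs, and pigeonhole puts some pair $a,b$ on one machine in $\pi$. The remainder is a case analysis on this pair: if both are early ($s_a,s_b<r_n$), the $\alpha$-gap between their start times forces $p_a+p_b>1$; otherwise one uses the idle-period marker $\theta^+$ and the left-over lemma to derive $P_\sigma>P_\pi$. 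Your ``intermediate regime'' worry is real, but it is resolved by this critical-pair machinery, not by a direct size-of-$J^\star$ trichotomy. As written, your outline would not reach $1.482$ because it never exploits the locking gap between the start times of the busy-period jobs, which is precisely what beats the one-one-two instance.
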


However, the dynamic locking strategy is crucial when $m$ becomes large. In particular, we show that the competitive ratio becomes $1.5$ without dynamic locking when $m\geq 6$.

\begin{restatable}{theorem}{dynamiclocking}
\label{thm:dynamic locking}
    There is a hard instance for every $m\geq 6$, such that the competitive ratio of \GSLEEPY becomes at least $1.5$ without dynamic locking.
\end{restatable}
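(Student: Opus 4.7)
The plan is to generalize the one-one-two hardness instance of Chen and Vestjens, calibrating the arrival times to the static sleep parameters $\ts$ and $\tf$ that govern \GSLEEPY when dynamic locking is disabled. Without dynamic locking the algorithm must commit to a fixed delay pattern the moment a batch of jobs appears, and the adversary exploits precisely this inflexibility: an initial batch induces \GSLEEPY to sleep several machines, after which a second, carefully timed batch forces the algorithm into an \LPT-like assignment while \OPT consolidates the small jobs and perfectly balances the heavy ones.

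Concretely, I would first extract from the definition of \GSLEEPY the exact sleep length that a small job of unit size induces, and from it the time $t^\star$ at which the last active machine becomes committed to a small job. At time $0$ I release $k$ identical small jobs with $k < m$, choosing $k$ (as a function of $m$ and of \GSLEEPY's static threshold) so that the algorithm is forced to start processing some of them immediately rather than sleeping all $m$ machines. At time $t^\star$ I release $m-1$ identical heavy jobs of size $2$, timed so that every heavy job arrives too late to be swapped onto a running small-job machine and every sleeping machine is still waiting out its original timer. \OPT then pairs two small jobs on one machine and spreads the $m-1$ heavy jobs across the remaining machines, finishing at time $2$, whereas any continuation available to \GSLEEPY pushes some heavy job to start no earlier than time $1$, yielding makespan $3$.

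The main obstacle is verifying robustness: after seeing the second batch, \GSLEEPY could in principle reassign future slots, so I would do a short case analysis over the possible continuations, namely (i) keeping sleepers asleep until their timer fires, (ii) immediately waking a sleeper to handle a heavy job, and (iii) placing a heavy job on the first machine to finish its small job, and show each ends at makespan at least $3$. The restriction $m \ge 6$ emerges from the interplay between $\ts$ and the number of sleeping machines: for small $m$, \GSLEEPY can afford a long sleep on a single machine that absorbs both a small and a heavy job, but once $m \ge 6$ the per-machine sleep budget must shrink (otherwise \GSLEEPY's competitive ratio on other instances would blow up), pushing the wake-up times into the regime where the above case analysis yields $3$. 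Once the case analysis is locked in, the remaining work is a routine computation of the two makespans and a check that the adversary can realize the required timing for every such $m$.
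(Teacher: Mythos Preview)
Your proposal has a fundamental gap: the instance you describe does not force \GSLEEPY into a bad schedule, because you have overlooked the LPT rule inside the algorithm. If you release $k$ unit-size jobs at time $0$ and then $m-1$ size-$2$ jobs at time $t^\star>0$, then after the first small job starts at time $0$ all machines are locked until $\alpha$. If $t^\star<\alpha$, the heavy jobs are already present when the lock expires, and \GSLEEPY's LPT rule will start a \emph{heavy} job at time $\alpha$, not the second small job. So the algorithm never commits two separate machines to small jobs in your scenario, and your lower bound collapses. More generally, there is no ``case analysis over possible continuations'': once $\alpha$ is fixed, \GSLEEPY is fully deterministic, and your references to $\ts,\tf$ as tunable sleep parameters misread their role (in the paper they are artifacts of the analysis, not knobs of the algorithm).

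The paper's construction is considerably more delicate and explains where the threshold $m\ge 6$ really comes from. First one must split on $\alpha$: if $\alpha\ge\frac{1}{2(m-1)}$, simply releasing $m$ unit jobs at time $0$ already gives ratio $1.5$ via the cascading locks. The interesting case is $\alpha<\frac{1}{2(m-1)}$, where the paper releases six jobs at time $0$ with \emph{geometrically decreasing} sizes $p_i=(1-\alpha)^{i-1}p_1$. This decreasing profile is exactly what makes LPT schedule them in order $1,\dots,6$ on six different machines, and the scaling is chosen so that all six complete simultaneously at time $p_1$. A batch of $m-3$ equal jobs then arrives just after job $6$ starts; only $m-6$ machines are free, so three of these jobs must wait until $p_1$ and then form a locking chain of length three, which pushes the makespan to $p_1+(1+2\alpha)p_7\ge 1.5$ for all small $\alpha$. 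The requirement $m\ge 6$ is simply that six machines are needed to host the six calibrated initial jobs; your heuristic about a shrinking ``per-machine sleep budget'' is not the mechanism.
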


The following table summarizes the previous results and our contribution to the online (over-time) machine minimization problem.  
\begin{table}[htbp]
    \centering
    \caption{Competitive ratios (our improvement is written in red). DL means dynamic locking.}
    \begin{tabular}{ |c |c |c | c|c|}
      \hline
      & $m=1$ & $m=2$ & $m=3$ & $m\ge4$\\ 
      \hline
      \texttt{LPT\cite{DBLP:journals/orl/ChenV97}} & $1$ & $1.5$ & $1.5$ & $1.5$ \\  
      \hline
      \texttt{SLEEPY\cite{DBLP:journals/tcs/NogaS01}} & $1$ & $1.382$ (optimal) & undefined & undefined \\
      \hline
      \GSLEEPY (without DL) & $1$ & $1.382$ (optimal) & \textcolor{orange}{1.482} & \textcolor{orange}{$> 1.5\ (m\geq 6)$} \\    
      \hline
      \GSLEEPY (with DL) & $1$ & $1.382$ (optimal) & \textcolor{orange}{1.482} & \textcolor{orange}{$1.5-1/O(m^2)$} \\    
      \hline
    \end{tabular}
\end{table}

\subsection{Our Techniques}

Most analyses of competitive or approximation ratios for makespan minimization rely on two fundamental techniques for upper-bounding the power of the optimal solution (\OPT): the \emph{bin-packing} argument and the \emph{efficiency} argument. The bin-packing argument limits how many large jobs \OPT can schedule; for example, it cannot process more than $m$ jobs of size more than $0.5 \cdot \OPT$, as each machine can accommodate at most one such job. The efficiency argument, on the other hand, bounds the total workload \OPT can complete in a time window; a basic form states that \OPT cannot process more than $tm$ units of work in $[0, t)$. To prove that an online algorithm is $c$-competitive, we typically argue by contradiction: if the algorithm’s makespan exceeds $c \cdot \OPT$, then either there are too many large jobs (violating bin-packing), or too much workload is completed by the algorithm (violating efficiency).

We build on the locking idea of the \SLEEPY algorithm, which is proposed by Noga and Seiden~\cite{DBLP:journals/tcs/NogaS01}, originally designed for $m=2$. This algorithm follows the \LPT strategy but, whenever it starts a job $j$ at time $s_j$, it locks the other machine until $s_j + \alpha p_j$, where $\alpha$ is the locking parameter and $p_j$ is the processing time of job $j$. In other words, no job may start on the other machine before the lock expires.
Consider the classic one-one-two hard instance: \LPT achieves a competitive ratio of $1.5$, as the last large job is delayed from its release time $r_n$ to its start time $s_n$, with this delay amounting to $0.5 \cdot \OPT$ while both machines remain busy. The delayed interval $[r_n, s_n)$ must thus be a busy period on all machines. To create this busy delayed period, the adversary can release two size-1 jobs at time $0$, prompting \LPT to start them just before $r_n = 0 + \epsilon$, whereas \OPT can schedule both on a single machine.
However, under the locking strategy of \SLEEPY, a gap is introduced between the start times of these size-1 jobs. If the adversary still attempts to enforce a busy period at $r_n$, the first job must start strictly before $r_n$. As a result, the size of this first job must exceed the length of $[r_n, s_n)$, which is $0.5 \cdot \OPT$. This prevents \OPT from assigning both jobs to the same machine, thereby breaking the hard instance. We extend this idea to $m \geq 3$ by locking \emph{all} other machines whenever a job starts.

However, locking comes with drawbacks. It can waste machine capacity by leaving machines idle, even when there are released jobs that have not yet started. This causes the algorithm to complete less total work in certain intervals, sometimes even less than \LPT—which weakens efficiency arguments. For $m = 2$, this is manageable, since at most one machine is locked. But for $m \geq 3$, we must formally analyze the impact. We address this by combining our analysis framework with the left-over lemma from Huang et al.~\cite{DBLP:conf/approx/0002KTWZ18}.

A deeper challenge is that the last job may be delayed either because machines are busy or because they are locked. This distinction complicates bin-packing arguments: the delayed interval $[r_n, s_n)$ is no longer purely a busy period—it may contain idle time caused by locking. For instance, if job $n$ starts late because a larger job $n'$ (with $p_{n'} > p_n$ and $r_{n'} = r_n$) is locking machines, then $s_n = s_{n'} + \alpha p_{n'}$. The interval $[r_n, s_{n'})$ may be busy, but $[s_{n'}, s_n)$ is idle. As a result, the effective busy period may be shorter than $0.5 \cdot \OPT$. Even if $m$ jobs start before $r_n$, and we can show gaps between their start times—making them larger than $[r_n, s_{n'})$—it remains unclear whether two such jobs can fit on one machine in \OPT, since the size increase may not offset the busy-period reduction.

For $m = 2$, Noga and Seiden~\cite{DBLP:journals/tcs/NogaS01} showed that such locking-induced delays cannot happen, using an efficiency argument. We extend this insight to $m = 3$. However, the approach fails when $m$ becomes large (see \Cref{thm:dynamic locking}, proved in \Cref{sec:hard}).

To resolve this, we introduce a novel technique called \textbf{dynamic locking}. Our idea is to assign larger locking parameters to jobs that start before $r_n$ than to the job $n'$. In an online setting, this is nontrivial, since we cannot know in advance which job will be the last to start before $r_n$. To handle this, we set the locking parameter $\alpha$ dynamically based on each job’s processing time $p_j$ and start time $s_j$, decreasing $\alpha$ as $s_j/p_j$ increases. This implicitly ensures that the job $n'$ receives a smaller locking parameter than the jobs that start before $r_n$, leading to a larger gap between their start times. This gap prevents \OPT from fitting both jobs on a single machine.

\subsection{Paper Organization}
We first introduce some basic notations and our algorithms in \Cref{sec:pre}. Then, we briefly introduce our analysis framework and some basic properties in \Cref{sec:overview}. Next, we complete the proof of the main theorem (\Cref{thm:ratio_m>3}) in \Cref{sec:basic}, \Cref{sec:critical}, and \Cref{sec:bound}. The case for $m=3$ (proof of \Cref{thm:ratio_m=3}) is discussed in \Cref{sec:m=3}. The hardness result (\Cref{thm:dynamic locking}) is proved in \Cref{sec:hard}. 
\subsection{Further Related Work}

Focusing on the over-time formulation of the online makespan minimization problem, Li et al.~\cite{DBLP:journals/ol/LiY16} studied a special case known as the kind release time (KRT) condition, where no jobs are released while all machines are busy. They showed that under this assumption, the classic LPT algorithm achieves a competitive ratio of $\frac{5}{4}$ when $m = 2$, complementing the general upper and lower bounds discussed earlier. In the preemptive setting, jobs' processing may be interrupted and resumed later (without migration). An optimal preemptive algorithm for identical machines was presented by Chen et al.~\cite{DBLP:journals/orl/ChenVW95}, who showed that the competitive ratio is $1 / (1 - (1 - \frac{1}{m})^m)$, which equals $\frac{4}{3}$ when $m = 2$ and converges to $\frac{e}{e - 1}$ as $m \to \infty$. 

Another commonly studied model is the sequential formulation, where all jobs are released at time zero but are revealed one by one in an online manner, requiring the algorithm to assign each job to a machine upon its arrival. This setting corresponds to the classical online load balancing problem. A foundational result was established by Graham~\cite{DBLP:journals/siamam/Graham69}, who showed that the greedy algorithm—assigning each job to the machine where it can start earliest—achieves a competitive ratio of $2 - \frac{1}{m}$, which is optimal for $m \leq 3$. Subsequent improvements for larger values of $m$ have been explored by Bartal et al.~and Karger et al.~\cite{DBLP:journals/jcss/BartalFKV95,DBLP:journals/jal/KargerPT96}. A lower bound of $1.88$ for deterministic algorithms in the non-preemptive setting was established by Rudin et al.~\cite{DBLP:journals/siamcomp/RudinC03}. 
Extensions of this problem to uniform and related machines have been investigated by Aspnes et al.~and Jez et al.~\cite{DBLP:conf/stoc/AspnesAFPW93,DBLP:journals/scheduling/JezSSB13}. The best known upper bound of $5.828$ for this setting is achieved by the algorithm of Berman et al.~\cite{DBLP:journals/jal/BermanCK00}, while a lower bound of $2.564$ was established by Ebenlendr et al.~\cite{DBLP:journals/mst/EbenlendrS15}. Additional results and a broader overview of this setting can be found in the works of Albers et al., Galambos et al., and Dwibedy et al.~\cite{DBLP:journals/siamcomp/Albers99,DBLP:journals/siamcomp/GalambosW93,DBLP:journals/sigact/DwibedyM22}.

In recent years, various extensions of these basic models have been proposed. One such extension includes the use of a reordering buffer, as studied in Englert et al.~\cite{DBLP:journals/siamcomp/EnglertOW14}. Another notable direction is the model of parallel schedules, where the online algorithm is permitted to construct several candidate schedules and select the best one at the end. This approach was analyzed in Albers et al.~\cite{DBLP:journals/algorithmica/AlbersH17}, where a $(\frac{4}{3} + \epsilon)$-competitive algorithm was proposed for any $0 < \epsilon \leq 1$, using $\left(\frac{1}{\epsilon}\right)^{O(\log(1/\epsilon))}$ schedules. Additionally, a $(1 + \epsilon)$-competitive algorithm that uses $\left(\frac{m}{\epsilon}\right)^{O(\log(1/\epsilon)/\epsilon)}$ schedules was also presented.

\section{Preliminaries}
\label{sec:pre} 
First, we formally define the model. We have $m$ identical machines, and $n$ jobs arrive online, where each job $j$ is associated with a release time $r_j$ and a processing time $p_j$. The online algorithm has no information about job $j$ before time $r_j$. At any moment $t$, it can choose to process one of the released jobs $j$ on a machine non-preemptively. We denote the start time and completion time of job $j$ by $s_j$ and $C_j = s_j + p_j$, respectively. The objective is to minimize the makespan, i.e., the maximum completion time among all jobs, $\max_j C_j$.

In our context, $\mathcal{J}$ denotes the set of jobs, while $\mathcal{M}$ denotes the set of machines. The schedule produced by our algorithm, along with its corresponding makespan, is denoted by $\sigma(\mathcal{J})$. Meanwhile, $\pi(\mathcal{J})$ represents the optimal offline schedule for the given job set $\mathcal{J}$, along with its makespan. When $\mathcal{J}$ is clear from context, we may simplify the notation to $\sigma$ and $\pi$. We focus on the standard competitive analysis in our paper. We say that an algorithm is $(1+\gamma)$-competitive if $\frac{\sigma(\mathcal{J})}{\pi(\mathcal{J})} \leq 1+\gamma$ holds for all possible instances $\mathcal{J}$.

% First, we introduce some basic notions. In our context, $\mathcal{J}$ represents the set of jobs, while $\mathcal{M}$ denotes the set of machines. To recall, for each job $j\in\mathcal{J}$, we use $r_j$ and $p_j$ to denote its release time and processing time, respectively, and $C_j=s_j + p_j$ to denote its completion time where $s_j$ is the start time. The schedule made by our algorithm, along with its corresponding makespan, are both denoted by $\sigma(\mathcal{J})$. Meanwhile, $\pi(\mathcal{J})$ represents the optimal offline schedule for the given set of jobs $\mathcal{J}$ and its corresponding makespan.
% We say an algorithm is $(1+\gamma)-$competitive if we can prove that $\frac{\sigma(\mathcal{J})}{\pi(\mathcal{J})}$ is always less than or equal to $1+\gamma$ for all possible $\mathcal{J}$. When $\mathcal{J}$ is evident in the context, we will simplify the notation to $\sigma$ and $\pi$. 

\subsection{The \GSLEEPY Algorithm} 
Our algorithm is an LPT-style algorithm embedded with a specific machine-locking strategy: whenever a job starts processing on a machine, all other machines are locked for a certain period. Before presenting our algorithm in detail, we first define the state of jobs and machines at a given time $t$.
\[
\text{A job is} 
\begin{cases}
    \textbf{released} & \text{if } t \geq r_j;\\
    \textbf{processing} & \text{if } s_j \le t < C_j = s_j + p_j; \\
    \textbf{finished} & \text{if } t \geq C_j ; \\
    \textbf{pending} & \text{if } r_j \le t < s_j.
\end{cases}
\]

\[
\text{A machine is} 
\begin{cases}
    \textbf{busy} & \text{if it is currently processing a job $j$;} \\
    \textbf{locked} & \text{if the algorithm does not allow it to start a new job;} \\
    \textbf{idle} & \text{if it is neither busy nor locked.}\\
\end{cases}
\]
% For a job, we say it is: \textbf{released} if $r_j \le t$; \textbf{processing} if $s_j \le t < s_j + p_j$; \textbf{finished} if $C_j < t$; \textbf{pending} if it is released but not processing or finished, i.e., $r_j \le t < s_j$.
% \begin{itemize}
% \item \textbf{released} if $r_j \le t$;
% \item \textbf{processing} if $s_j \le t < s_j + p_j$;
% \item \textbf{finished} if $C_j < t$;
% \item \textbf{pending} if it is released but not processing or finished, i.e., $r_j \le t < s_j$.
% \end{itemize}
% For a machine, we say it is:
% \begin{itemize}
% \item \textbf{busy} if it is currently processing a job $j$;
% \item \textbf{locked} if it is not allowed to start a new job by the strategy of the algorithm. Note that a busy machine may also be marked as locked;
% \item \textbf{idle} if it is neither busy nor locked.
% \end{itemize}

We now present the algorithm. Let $\lambda \ge 1$ and $\alpha \in [0, \gamma]$ be the locking parameters. The \GSLEEPY algorithm with dynamic locking operates as follows. At any moment $t$, if there is at least one idle machine, the algorithm selects an arbitrary idle machine and performs the following two steps:

\begin{enumerate}
\item \textbf{LPT Strategy:} Assign the longest pending job $j$ to the selected idle machine and start processing it.
\item \textbf{Dynamic Locking Strategy:} Lock all machines until time $s_j + \alpha_j p_j$, where the locking parameter is given by $\alpha_j = \lambda^{-\frac{s_j}{p_j}} \alpha \le \alpha$. Note that if another machine is already busy, we can still treat it as locked until $s_j + \alpha_j p_j$; a machine can thus be marked as both busy and locked simultaneously.
\end{enumerate}

Note that when we refer to the \GSLEEPY algorithm without dynamic locking, we mean that the locking parameter $\alpha_j$ is set to a fixed constant $\alpha$ (i.e., $\lambda = 1$). We show that even this simpler version can surpass the $1.5$ barrier in the case of $m=3$; the proof is provided in \Cref{sec:m=3}. The original \SLEEPY algorithm~\cite{DBLP:journals/tcs/NogaS01} is in fact a special instance of our framework for $m=2$, obtained by setting $\lambda = 1$ (thus, no dynamic locking) and choosing $\alpha = \frac{3 - \sqrt{5}}{2}$. As a result, \GSLEEPY attains the same optimal competitive ratio of $\frac{5 - \sqrt{5}}{2}$ as \SLEEPY when there are two machines.

To analyze the competitive ratio of the algorithm, we aim to show that no job set $\mathcal{J}$ can cause the algorithm to produce a makespan satisfying $\sigma(\mathcal{J}) > (1+\gamma)\pi(\mathcal{J})$. Without loss of generality, we assume for contradiction that there exists a smallest (in terms of the number of jobs) counterexample $\mathcal{J}$ such that $\pi(\mathcal{J}) = 1$ and $\sigma(\mathcal{J}) > (1+\gamma)\pi(\mathcal{J}) = 1 + \gamma$. We label the jobs in $\mathcal{J}$ by indices $1, \dots, n$ according to their completion times in the schedule $\sigma$, so that $C_1 \le C_2 \le \dots \le C_n$.

To derive a contradiction, we focus on two types of arguments: bin-packing arguments and efficiency arguments. The goal of the bin-packing arguments is to identify a sufficiently large subset of large jobs, such as $m+1$ jobs with processing time $p_j > 0.5$, that cannot be feasibly scheduled within a makespan of $1$, contradicting $\pi(\mathcal{J}) = 1$. In contrast, the efficiency arguments aim to show that the algorithm processes a substantial amount of workload, such that it would be impossible for the optimal offline schedule to complete all that workload within $[0,1)$. As a simple example, if the total workload processed by the algorithm exceeds $m$, then a contradiction arises. However, since not all jobs are released at time $0$, the upper bound of processed workload by the optimal solution may be less than $m$. We must employ more refined arguments that account for release times in our analysis.

\subsection{Algorithm's Efficiency}

In this subsection, we introduce some useful notations to describe the efficiency of an algorithm, as preparation for applying efficiency arguments. Roughly speaking, efficiency measures the total workload completed by the algorithm within a given time interval. We also present a technical lemma from \cite{DBLP:conf/approx/0002KTWZ18} that provides an upper bound on the efficiency gap between our algorithm and the optimal solution.

% Recall that the efficiency argument is one of the two key techniques for proving competitive ratios. The main idea is to compare the upper bound on the workload that the optimal solution can complete with the lower bound on the workload that the algorithm must complete. An algorithm is said to be efficient on a specific time window if it processes a workload comparable to that of the optimal solution within the same window.
% In what follows, we introduce several notations to formalize the notion of efficiency for a given algorithm over a specific time interval.

% Efficiency is an essential benchmark for analyzing the algorithm. It evaluates the amount of the workload the algorithm processes in a given period. In this part, we introduce some notations to assess the efficiency of an algorithm. 
\begin{definition}
Consider a schedule $\sigma$. To describe whether a machine is wasting time, we define two indicators for each machine $M \in \mathcal{M}$ as follows:
\begin{itemize}
\item $\mathbbm{1}^P_\sigma(M, t)$, which indicates that machine $M$ is busy in schedule $\sigma$ at time $t$.
\item $\mathbbm{1}^W_\sigma(M, t)$, which indicates that in schedule $\sigma$, at time $t$, machine $M$ is not busy even though there are pending jobs.
\end{itemize}
It is worth noting that, under the strategy used by \GSLEEPY, the event corresponding to $\mathbbm{1}^W_\sigma(M, t)$ occurs only when a machine is locked.
\end{definition}

\begin{definition}\label{def:WPPhat}
    For an interval $[t_1,t_2)$, we define the following notations to measure the efficiency of $\sigma$ in this period. 
    \begin{itemize}
        \item $P_\sigma(t_1,t_2)$: the total processing time (workload) in $\sigma$ during $(t_1,t_2)$, i.e., 
        \[
        P_\sigma(t_1,t_2) \triangleq \sum_{M\in\mathcal{M}}\int_{t_1}^{t_2}\mathbbm{1}^P_\sigma(M, t)dt.
        \]
        \item $W_\sigma(t_1,t_2)$: the total waste of processing power in $\sigma$ during $(t_1,t_2)$, i.e., \[
        W_\sigma(t_1,t_2)  \triangleq \sum_{M\in\mathcal{M}}\int_{t_1}^{t_2} \mathbbm{1}^W_\sigma(M, t)dt.
        \]
        \item $\hat{P}_\sigma(t_1,t_2)$: the total extended  processing time in $\sigma$ during $(t_1,t_2)$ (including the unfinished parts of jobs that are being processed at time $t_2$), i.e., \[
        \hat{P}_\sigma(t_1,t_2) \triangleq P_{\sigma}(t_1,t_2) + \sum_{j: t_1 \leq s_j < t_2, C_j>t_2} (C_j - t_2) = \sum_{j:t_1\le s_j<t_2}p_j+\sum_{j:s_j<t_1<C_j}(C_j-t_1).
        \]
    \end{itemize}
\end{definition}
For convinence, we use $P_\pi$ and $P_\sigma$ denote $P_\pi(0,1)$ and $P_\sigma(0,C_n)$, which means the total processing time of $\sigma$ and $\pi$. Naturally, we have $P_\pi = P_\sigma \leq m$. 

The following lemma provides the basic efficiency argument for \GSLEEPY, with or without dynamic locking. It gives an upper bound on the total processing power wasted by the algorithm.

\begin{lemma}
\label{lem:boundwaste}
The total waste of processing power during $(t_1,t_2)$ is upper bounded using the extended processing time during $(t_1,t_2)$, i.e. $W_\sigma(t_1,t_2)\le(m-1)\alpha\hat{P}_\sigma(t_1,t_2)$.
\end{lemma}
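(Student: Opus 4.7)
The plan is to decompose the waste in $(t_1, t_2)$ by attributing every unit of locked-idle time to the specific job whose start triggered that lock, and then to bound each job's contribution in terms of its (possibly truncated) processing time.

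First I would recall the remark just above the lemma: under \GSLEEPY, waste only occurs when a machine is locked. When job $j$ begins at time $s_j$, it locks the other $m-1$ machines over the interval $[s_j, s_j + \alpha_j p_j)$, with $\alpha_j = \lambda^{-s_j/p_j}\alpha \le \alpha$. Hence the waste at time $t$ on any machine can be charged to some job whose lock-window contains $t$, and each such job contributes a total of at most $(m-1)\alpha_j p_j$ units of waste across all machines. The jobs that contribute to $W_\sigma(t_1, t_2)$ fall into two groups: (i) jobs with $s_j \in [t_1, t_2)$, whose entire lock-window can be charged; and (ii) jobs with $s_j < t_1 < s_j + \alpha_j p_j$, whose lock-window was already open at $t_1$ and only its tail falls inside $[t_1, t_2)$. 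Since $\alpha_j \le \alpha < 1$, the lock of any job expires strictly before that job itself completes, so any group-(ii) job satisfies $C_j > t_1$ and therefore appears in the second sum of the definition of $\hat P_\sigma(t_1, t_2)$.

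Next I would produce the two charging inequalities. For a job in group (i), the entire lock contributes at most $(m-1)\alpha_j p_j \le (m-1)\alpha p_j$ of waste to $(t_1, t_2)$, and $p_j$ is exactly the term $j$ adds to $\hat P_\sigma(t_1, t_2)$. For a job in group (ii), the portion of the lock inside $[t_1, t_2)$ has length at most $s_j + \alpha_j p_j - t_1$, so $j$ contributes at most $(m-1)(s_j + \alpha_j p_j - t_1)$ to $W_\sigma(t_1, t_2)$. I want to compare this to $(m-1)\alpha (C_j - t_1) = (m-1)\alpha(p_j - (t_1 - s_j))$. The desired inequality $s_j + \alpha_j p_j - t_1 \le \alpha(p_j - (t_1 - s_j))$ rearranges to
\[
(\alpha_j - \alpha)p_j \le (1 - \alpha)(t_1 - s_j),
\]
whose left side is non-positive (as $\alpha_j \le \alpha$) and whose right side is non-negative (as $s_j < t_1$ and $\alpha \le \gamma < 1$), so it holds. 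Summing the group-(i) contributions over $\{j : t_1 \le s_j < t_2\}$ and the group-(ii) contributions over $\{j : s_j < t_1 < C_j\}$ yields exactly $(m-1)\alpha\,\hat P_\sigma(t_1, t_2)$ on the right.

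The only subtle step is the group-(ii) bound, where one must simultaneously exploit $\alpha_j \le \alpha$ and the fact that $\alpha_j < 1$ (so that a still-active lock at $t_1$ implies the job itself is still in progress at $t_1$ and hence counted by $\hat P_\sigma$); I expect this to be the main thing to verify carefully. A minor bookkeeping point is to confirm there is no double-counting between the two groups — this is automatic because the groups partition by whether $s_j < t_1$ or $s_j \ge t_1$ — and to observe that jobs with $s_j \ge t_2$ cannot contribute, since their locks open only at or after $t_2$.
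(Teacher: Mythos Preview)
Your proposal is correct and follows essentially the same approach as the paper: both decompose the waste by attributing each locked interval to the job that triggered it, split into the cases $s_j \ge t_1$ and $s_j < t_1$, and bound the latter using $\alpha_j < 1$ so that an active lock at $t_1$ implies $C_j > t_1$. The only cosmetic difference is that the paper bounds the group-(ii) tail by $\alpha_j(C_j - t_1)$ and then uses $\alpha_j \le \alpha$, whereas you go directly to $\alpha(C_j - t_1)$ via the rearranged inequality $(\alpha_j - \alpha)p_j \le (1-\alpha)(t_1 - s_j)$; both are equivalent.
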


\begin{proof}
    According to the algorithm’s definition, each processing job $j$ can create at most $m-1$ locked periods, each extending from $s_j$ to $s_j + \alpha_j p_j$. Now, consider all the locked periods that fall within the interval $(t_1, t_2)$. These locked periods can be associated with processing jobs that complete after time $t_1$. There are two cases to analyze.

    First, if $s_j \geq t_1$, the length of the locked period is at most $\alpha_j p_j$. Second, if $s_j < t_1$, the length of the locked period within $(t_1, t_2)$ is at most $\alpha_j p_j - (t_1 - s_j) < \alpha_j (C_j - t_1)$.
    
    Summing these bounds over all such jobs, we find that the total locked time during $(t_1, t_2)$ is at most $(m-1) \alpha \hat{P}_\sigma(t_1, t_2)$.
\end{proof}

Finally, we compare the efficiency of the algorithm and the optimal solution over a specific time interval $[0, t)$. That is, we aim to compare $P_\pi(0, t)$ and $P_\sigma(0, t)$. $P_\pi(0, t)$ can exceed $P_\sigma(0, t)$ for two reasons. First, $\sigma$ may waste time due to machines being locked. Second, some machines in $\sigma$ may be idle simply because there are no pending jobs.

Intuitively, the first case can be bounded by $W_\sigma(0, t)$, while the second case can occur only over limited portions of the interval $(0, t)$; otherwise, the efficiency of $\pi$ would also be low. The following left-over lemma formalizes these two sources of inefficiency and provides an integrated upper bound on the difference $P_\pi(0, t) - P_\sigma(0, t)$.
% Lastly, the technical lemma we mentioned is presented below. The lemma is used to control the wasted power made by idle machines. Intuitively, if machines are idle in $\sigma$, all jobs have already been started on some machines. Therefore, there can not exist too much workload for $\pi$ to process. 

\begin{lemma}[Left-over Lemma \cite{DBLP:conf/approx/0002KTWZ18}] 
\label{lem:leftover} 
$P_\pi(0,t)-P_\sigma(0,t) \le \frac{1}{4}mt + W_\sigma(0,t)$, where $m$ is the number of machines.
\end{lemma}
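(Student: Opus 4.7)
The plan is to split the interval $[0,t)$ into two regions according to whether $\sigma$'s pending queue is empty, and to bound $P_\pi(0,t)-P_\sigma(0,t)$ separately on each. The structural fact I would rely on is specific to \GSLEEPY: any non-busy machine is either locked or else sitting idle with an empty pending queue, because otherwise the LPT step of the algorithm would immediately start the longest pending job on that machine. Hence the instantaneous state of $\sigma$ is cleanly controlled by whether the pending queue is empty, which is exactly the dichotomy I want to exploit.

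On the region where the pending queue is non-empty, every non-busy machine must be locked and therefore contributes to $W_\sigma$, so $\sigma$ uses its full capacity $m$ per unit time, split between $P_\sigma$ and $W_\sigma$. Since $\pi$ has capacity at most $m$ per unit time as well, the advantage $P_\pi - P_\sigma$ on this region is absorbed into the corresponding portion of $W_\sigma(0,t)$. On the region where the pending queue is empty, every released job is already finished or in flight on $\sigma$, so at any instant $\tau$ in this region
\[
P_\pi(0,\tau) \le \sum_{j:\, r_j \le \tau} p_j = P_\sigma(0,\tau) + R_\sigma(\tau),
\]
where $R_\sigma(\tau)$ is the total residual processing time of jobs currently on $\sigma$. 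Reducing to an endpoint of the empty-queue region at time $t$ (or handling the transition back into a non-empty region via the previous paragraph) turns the remaining task into bounding the time-integrated residual overhead on the empty-queue region.

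The technical crux, and the main obstacle I foresee, is to show that this residual-based overhead is bounded by $\tfrac{1}{4}mt$. I would attempt a charging argument coupling each unit of empty-queue machine-time, i.e.\ of idle or ``lock-only'' slots in $\sigma$, against the residual size of a job still in flight, and invoke an AM-GM-type inequality of the form $k\cdot(m-k)\le \tfrac{m^2}{4}$, where $k$ is the number of $\sigma$-machines that are busy at the given instant. A naive pointwise double counting only yields the constant $\tfrac{1}{2}$; extracting the sharper $\tfrac{1}{4}$ seems to require amortizing across maximal empty-queue intervals and carefully handling the moments when a newly released job converts an empty-queue instant into a non-empty one, or when a locking window straddles the boundary between the two regions. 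Once this product-style bound is in place, summing the two regional contributions delivers exactly $P_\pi(0,t)-P_\sigma(0,t)\le \tfrac{1}{4}mt + W_\sigma(0,t)$.
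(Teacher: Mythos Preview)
The paper does not prove this lemma; it is quoted from Huang et al.\ \cite{DBLP:conf/approx/0002KTWZ18} and explicitly noted to hold for general scheduling algorithms, so there is no in-paper argument to compare your attempt against.

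Your decomposition into empty-queue versus non-empty-queue times is the natural one, and the observation that $P_\pi - P_\sigma - W_\sigma$ is non-increasing on non-empty-queue intervals (since all $m$ machines are then either busy or wasted) is correct. But your proposal is a plan, not a proof: you openly acknowledge that the central step --- converting the pointwise bound $P_\pi(0,\tau) - P_\sigma(0,\tau) \le R_\sigma(\tau)$ on empty-queue instants into the global bound $\tfrac14 mt$ --- is something you would ``attempt'' via a charging argument, and that a naive version only yields the constant $\tfrac12$. No concrete charging scheme is given; the inequality $k(m-k)\le m^2/4$ is invoked without explaining what quantity $k(m-k)$ is supposed to control or why it dominates the residual; and the remark about ``amortizing across maximal empty-queue intervals'' describes the difficulty rather than resolving it. As written, the proposal correctly locates the obstacle but does not clear it. A smaller point: you present the structural fact (non-busy implies locked or empty queue) as specific to \GSLEEPY, but the lemma and the definition of $W_\sigma$ are stated for arbitrary schedules $\sigma$, and the paper stresses this generality; no algorithm-specific reasoning is required here.
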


We remark that this lemma works for general scheduling algorithms. In their paper, the restart operation creates wasted time, which is produced by the locking strategy in our algorithm.

\section{Overview of Our Analysis}
\label{sec:overview}

Our analysis begins by examining why the last job $n$ has such a large completion time, i.e., $C_n > 1 + \gamma$. The only possible reason is that it experiences a significant delay between its release time $r_n$ and its start time $s_n$.
% First, we introduce a critical lemma, which says that the last job $n$ must be delayed for some time. Later, we will discuss the events in the delayed period $(r_n,s_n)$ to see why $n$ is delayed. 

\begin{lemma}\label{lem:last_delay}
$s_n-r_n>\gamma$.
\end{lemma}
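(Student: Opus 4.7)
The plan is to sandwich $s_n - r_n$ between two elementary bounds that follow directly from (i) the optimal schedule's feasibility and (ii) the contradiction hypothesis $\sigma(\mathcal{J}) > 1+\gamma$. Since $n$ is, by the labeling convention, the job of largest completion time in $\sigma$, we have $C_n = s_n + p_n = \sigma(\mathcal{J}) > 1 + \gamma$, so
\[
s_n > 1 + \gamma - p_n.
\]

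On the other side, I will exploit the fact that $\pi(\mathcal{J}) = 1$. The optimal schedule is non-preemptive, feasible, and completes all jobs by time $\pi(\mathcal{J}) = 1$; in particular it processes job $n$ in some interval $[s_n^\ast, s_n^\ast + p_n)$ with $s_n^\ast \geq r_n$ and $s_n^\ast + p_n \leq 1$. Hence
\[
r_n + p_n \leq 1, \qquad \text{i.e.,} \qquad r_n \leq 1 - p_n.
\]

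Subtracting the two inequalities gives
\[
s_n - r_n > (1 + \gamma - p_n) - (1 - p_n) = \gamma,
\]
which is exactly the claim. The strict inequality is inherited from the strict inequality $C_n > 1+\gamma$ in the contradiction hypothesis.

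There is essentially no obstacle here; the only thing to double-check is that the feasibility bound $r_n + p_n \leq \pi(\mathcal{J})$ is used correctly, but this is immediate because any feasible schedule of makespan $1$ must process each job entirely within $[r_j, 1]$. Note that no property of \GSLEEPY beyond the labeling convention $C_1 \leq \cdots \leq C_n$ is invoked, so the lemma is in fact an algorithm-independent lower bound on the release-to-start delay of the last-completing job whenever the algorithm exceeds the threshold $1+\gamma$.
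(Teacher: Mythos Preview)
Your proof is correct and follows essentially the same approach as the paper: both use $s_n + p_n > 1+\gamma$ from the contradiction hypothesis and $r_n + p_n \le 1$ from the feasibility of the optimal schedule, then subtract.
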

\begin{proof}
    Since $s_n+p_n>1+\gamma$ and $r_n+p_n\le 1$, $s_n-r_n> \gamma$.
\end{proof}

\begin{lemma}
\label{lem:delay_no_idle}
$\forall j \in \mathcal{J}$, all machines are not idle during $(r_j,s_j)$ in $\sigma$. 
\end{lemma}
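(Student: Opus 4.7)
The plan is to argue by contradiction, directly invoking the \GSLEEPY algorithm's rule that no idle machine can coexist with pending jobs. Suppose toward contradiction that there exist a job $j\in\mathcal{J}$, a time $t^\ast\in(r_j,s_j)$, and a machine $M\in\mathcal{M}$ such that $M$ is idle at time $t^\ast$ in $\sigma$.

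First I would pin down the status of job $j$ at time $t^\ast$. Since $r_j<t^\ast<s_j$, the job is released but has not started, so $j$ is pending at $t^\ast$ by the definitions given in \Cref{sec:pre}. In particular, the set of pending jobs at $t^\ast$ is nonempty. Next I would invoke the \GSLEEPY update rule: at any moment $t$ at which some machine is idle (meaning neither busy nor locked), the algorithm selects an idle machine and assigns it the longest pending job. Applying this at $t=t^\ast$ with $M$ being idle and the pending set containing $j$, the algorithm must start some pending job $j'$ (not necessarily $j$ itself, but the longest pending one) on $M$ at time $t^\ast$. This turns $M$ into a busy machine at $t^\ast$, contradicting the supposition that $M$ is idle there.

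One minor point of care is the continuity of the algorithm: a machine could in principle become idle at some instant earlier than $t^\ast$ and remain so throughout an open interval ending at or containing $t^\ast$. To rule this out I would apply the same argument at the earliest instant in $(r_j,s_j)$ at which $M$ is idle: since $j$ is pending from $r_j$ onward until it starts, the pending set is nonempty throughout $(r_j,s_j)$, so the rule fires immediately and prevents any idle interval from opening. I do not anticipate any serious obstacle here; the statement is essentially a restatement of the algorithm's definition, and the only thing to watch is that "idle" excludes "locked," so the conclusion is really about the absence of both busy and locked machines simultaneously failing to cover all of $\mathcal{M}$ while $j$ waits.
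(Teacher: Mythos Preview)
Your proposal is correct and follows essentially the same approach as the paper: observe that $j$ is pending throughout $(r_j,s_j)$, and invoke the algorithm's rule that an idle machine plus a pending job forces an immediate assignment. The paper states this directly rather than by contradiction, but the content is identical.
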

\begin{proof}
    There is always at least one pending job, the job $j$, during $(r_j,s_j)$ in $\sigma$. And the algorithm will immediately process the longest pending job if a machine is idle (not locked and busy). Thus, $\forall j \in \mathcal{J}$, there is no idle machines during $(r_j,s_j)$ in $\sigma$.
\end{proof}

Next, we rule out the case where the last job $n$ is too small. Roughly speaking, if $p_n$ is small, $s_n$ should be large, then the algorithm must have processed a substantial amount of workload over the long interval $[0, s_n)$, which leads to a contradiction via the efficiency arguments. This idea is formalized in the following lemma for the size of the last job. 
\begin{lemma} [Basic Lower Bound on $p_n$]
\label{claim:lowerbound_pn}
We have the following lower bound on $p_n$:
\begin{itemize}
    \item (Used in the general case) $\displaystyle p_n>\frac{m\gamma-\frac{m}{4}-m(m-1)\alpha}{\frac{3}{4}m-1-(m-1)\alpha}$.
    \item (Used in the case of $m=3$) $\displaystyle p_n >\frac{m\gamma-\frac{m}{4}-m(m-1)\alpha+\hat{P}_\sigma(0,s_n)- P_\sigma(0,s_n)}{\frac{3}{4}m-1-(m-1)\alpha}$.
\end{itemize}
\end{lemma}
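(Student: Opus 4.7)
The plan is to sandwich $P_\sigma(0,s_n)$ between an efficiency-based lower bound and the simple upper bound $P_\sigma(0,s_n)\le P_\pi-p_n$, then solve for $p_n$. The crucial observation is that applying Lemma~\ref{lem:leftover} directly on $[0,s_n]$ loses a factor of $\tfrac{m s_n}{4}$, which is too weak to recover the claimed $m\gamma$ in the numerator. The fix is to split the interval at $r_n$ and invoke Lemma~\ref{lem:delay_no_idle}, which guarantees that every machine is busy-or-locked throughout $(r_n,s_n)$, so no $\tfrac14$ loss is paid on that sub-interval.

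On $[0,r_n]$, Lemma~\ref{lem:leftover} combined with $P_\pi(0,r_n)\ge P_\pi-m(1-r_n)$ (since OPT's work in $[r_n,1]$ is at most $m(1-r_n)$) gives
\[
P_\sigma(0,r_n)\ge P_\pi-m+\tfrac{3m r_n}{4}-W_\sigma(0,r_n).
\]
On $(r_n,s_n)$, Lemma~\ref{lem:delay_no_idle} yields $P_\sigma(r_n,s_n)+W_\sigma(r_n,s_n)=m(s_n-r_n)$. Adding these two and substituting $r_n\le 1-p_n$ together with $s_n>1+\gamma-p_n$ (from Lemma~\ref{lem:last_delay} and $C_n>1+\gamma$) yields, after simplification,
\[
P_\sigma(0,s_n)>P_\pi-\tfrac{m}{4}+m\gamma-\tfrac{3m p_n}{4}-W_\sigma(0,s_n).
\]

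Combining with the upper bound $P_\sigma(0,s_n)\le\hat{P}_\sigma(0,s_n)\le P_\pi-p_n$ (which holds because job $n$ contributes $p_n$ to $P_\pi$ but is not counted in $\hat P_\sigma(0,s_n)$) and rearranging gives $(\tfrac{3m}{4}-1)p_n>m\gamma-\tfrac{m}{4}-W_\sigma(0,s_n)$. Plugging in Lemma~\ref{lem:boundwaste} with $\hat P_\sigma(0,s_n)\le m-p_n$ (from $P_\pi\le m$) gives $W_\sigma(0,s_n)\le(m-1)\alpha(m-p_n)$; moving the $(m-1)\alpha p_n$ term across and dividing by the coefficient $\tfrac{3m}{4}-1-(m-1)\alpha$, which one checks is positive over the admissible range of $\alpha$, delivers the first inequality in the lemma.

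For the sharper $m=3$ version, I would retain the slack in the upper bound by writing $P_\sigma(0,s_n)=\hat P_\sigma(0,s_n)-\bigl(\hat P_\sigma(0,s_n)-P_\sigma(0,s_n)\bigr)\le P_\pi-p_n-\bigl(\hat P_\sigma(0,s_n)-P_\sigma(0,s_n)\bigr)$ and carry the nonnegative extra term through the same algebra, so that it lands directly in the numerator. I do not foresee a real obstacle beyond the initial insight: once the splitting at $r_n$ is in place the derivation is essentially mechanical, the only place where care is needed being the verification that the denominator $\tfrac{3m}{4}-1-(m-1)\alpha$ remains positive for the parameter regime of interest.
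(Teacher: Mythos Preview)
Your proposal is correct and follows essentially the same route as the paper's proof: split at $r_n$, apply the left-over lemma on $[0,r_n]$, use the no-idle property on $(r_n,s_n)$, substitute $r_n\le 1-p_n$ and $s_n>1+\gamma-p_n$, bound $W_\sigma(0,s_n)$ via Lemma~\ref{lem:boundwaste} with $\hat P_\sigma(0,s_n)\le m-p_n$, and solve for $p_n$. The only cosmetic difference is that the paper phrases the contradiction as $P_\sigma-P_\pi>0$ rather than sandwiching $P_\sigma(0,s_n)$; the algebra and the use of the $\hat P_\sigma(0,s_n)-P_\sigma(0,s_n)$ slack term for the $m=3$ refinement are identical.
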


\begin{proof}
The reason why we can prove the lower bound on $p_n$ is that we observe that $P_{\sigma}$ will include too much workload (larger than $P_{\pi}$) when $p_n$ is small. We divided the time period $[0,C_n]$ into two parts, $[0,r_n]$ and $(r_n,C_n]$ and analyze the difference of workload between $\sigma$ and $\pi$.  

For $[0,r_n]$, after applying \Cref{lem:leftover}, we have
\begin{equation}
\label{eqn:pnlarge_leftover}
    P_\pi(0,r_n)-P_\sigma(0,r_n) \le \frac{m}{4}r_n + W_\sigma(0,r_n).
\end{equation}

We first discuss the algorithm for $(r_n, C_n]$. Recall the definition of $\hat{P}_\sigma(0,s_n)$. It does not include $p_n$ because $n$ starts at $s_n$. Therefore, we have a lower bound on $P_{\sigma}(r_n, C_n)$.
\[
P_{\sigma}(r_n,C_n) \geq P_\sigma(r_n,s_n) + p_n + \hat{P}_\sigma(0,s_n) - P_\sigma(0,s_n).
\]
Remark that $\hat{P}_\sigma(0,s_n) - P_\sigma(0,s_n)$ is a lower bound on workload completed by $\sigma$ after $s_n$ except $p_n$. In general, we will consider the worst case that $\hat{P}_\sigma(0,s_n) - P_\sigma(0,s_n) = 0$. However, when $m=3$, we have some observations to give a better lower bound for it. So, we keep this term in the inequality. 
By \Cref{lem:delay_no_idle}, machines cannot be idle during $(r_n,s_n)$ in $\sigma$. Therefore, we have 
\[
P_{\sigma}(r_n,C_n) \geq m(s_n - r_n) - W_\sigma(r_n,s_n) + p_n + \hat{P}_\sigma(0,s_n) - P_\sigma(0,s_n).
\]
On the other hand, we have a trivial upper bound of $\pi$, which is
\[
P_{\pi}(r_n,C_n) \leq m(1-r_n).
\]
The difference in workload during this period is
\begin{align}
P_{\sigma}(r_n,C_n) - P_{\pi}(r_n,C_n) & \geq  m(s_n - r_n) - W_\sigma(r_n,s_n) + p_n + \hat{P}_\sigma(0,s_n) - P_\sigma(0,s_n) - m(1-r_n)    \notag \\
& =  m(s_n- 1) - W_\sigma(r_n,s_n) + p_n + \hat{P}_\sigma(0,s_n) - P_\sigma(0,s_n) \notag \\ 
& \geq m(s_n + p_n - 1 - p_n)  - W_\sigma(r_n,s_n) + p_n + \hat{P}_\sigma(0,s_n) - P_\sigma(0,s_n) \notag \\
& > m(\gamma - p_n)  - W_\sigma(r_n,s_n) + p_n + \hat{P}_\sigma(0,s_n) - P_\sigma(0,s_n).\label{eqn:pnlarge_secondperiod}
\end{align}
Combining two cases (\Cref{eqn:pnlarge_leftover} and \Cref{eqn:pnlarge_secondperiod}) together, we have a lower bound on $P_{\sigma} - P_{\pi}$.
\begin{align*}
P_{\sigma} - P_{\pi} &> - \frac{m}{4}r_n - W_\sigma(0,r_n) + m(\gamma - p_n)  - W_\sigma(r_n,s_n) + p_n + \hat{P}_\sigma(0,s_n) - P_\sigma(0,s_n) \\ 
&= - \frac{m}{4}r_n + m(\gamma - p_n)  - W_\sigma(0,s_n) + p_n + \hat{P}_\sigma(0,s_n) - P_\sigma(0,s_n). 
\end{align*}
Next, we aim to bound $W_\sigma(0,s_n)$. By the basic efficiency of the algorithm (\Cref{lem:boundwaste}), we have 
\[
W_\sigma(0,s_n) \leq (m-1)\alpha \hat{P}_\sigma(0,s_n). 
\]
Further, because $\hat{P}_\sigma(0,s_n)$ does not include $p_n$, $\hat{P}_\sigma(0,s_n) \leq m - p_n$, we have 
\[
W_\sigma(0,s_n) \leq (m-1)\alpha (m-p_n).
\]
Therefore, by using this upper bound of $W_\sigma(0,s_n)$, together with $r_n \leq 1 - p_n$ (because otherwise $p_n$ cannot be completed before $1$), we have
\begin{align*}
P_{\sigma} - P_{\pi} &> - \frac{m}{4}r_n + m(\gamma - p_n)  - W_\sigma(0,s_n) + p_n + \hat{P}_\sigma(0,s_n) - P_\sigma(0,s_n) \\
&\geq - \frac{m}{4}(1-p_n) + m(\gamma - p_n)  - (m-1)\alpha (m-p_n) + p_n + \hat{P}_\sigma(0,s_n) - P_\sigma(0,s_n).
\end{align*}
Let 
\begin{align*}
f(p_n) & = - \frac{m}{4}(1-p_n) + m(\gamma - p_n)  - (m-1)\alpha (m-p_n) + p_n + \hat{P}_\sigma(0,s_n) - P_\sigma(0,s_n) \\ 
& = (-\frac{3}{4}m + 1 + (m-1)\alpha)p_n - \frac{m}{4} + m\gamma  - m(m-1)\alpha + \hat{P}_\sigma(0,s_n) - P_\sigma(0,s_n).
\end{align*}
Because $m\geq 2$ and $\alpha <0.5$, the coefficient of $p_n$ is negative. Hence, $f(p_n)$ is increasing when $p_n$ decrease. Therefore, $p_n$ cannot be too small, since we have $f(p_n) \leq P_{\sigma} - P_{\pi} = 0$. In particular, 
\[
p_n > \frac{m\gamma-\frac{m}{4}-m(m-1)\alpha+\hat{P}_\sigma(0,s_n) - P_\sigma(0,s_n)}{\frac{3}{4}m-1-(m-1)\alpha}.
\]
For general case, we use $\hat{P}_\sigma(0,s_n) - P_\sigma(0,s_n) \geq 0$. The inequality becomes
\[
p_n > \frac{m\gamma-\frac{m}{4}-m(m-1)\alpha}{\frac{3}{4}m-1-(m-1)\alpha}.
\]
\end{proof}

To handle the remaining cases when $p_n$ is large, our analysis establishes a series of inequalities involving the parameters $\lambda$, $\alpha$ (the locking parameters), and $\gamma$. These properties play a key role in reaching a contradiction and impose specific constraints on the choice of $\lambda$, $\alpha$, and $\gamma$. Ultimately, by setting $\alpha = \frac{1}{4m^2}$, $\lambda = 4^{25/6}$, and $\gamma = \frac{1}{2} - \frac{1}{4^{20}m^2}$, we arrive at the desired contradiction.
While some of these conditions may appear complex or unintuitive, and only a few turn out to be tight when the parameters are optimized, we have explicitly stated all such conditions within the corresponding properties in \Cref{sec:bound}. This not only allows readers to verify the correctness of our analysis but also sets the stage for a more transparent discussion on parameter selection. 

The next step in our analysis is to examine what happens in the algorithm after time $r_n$, to understand why the last job $n$ is delayed. We refer to this interval as the \emph{last scheduling period}. In \Cref{sec:basic}, we first build a structural understanding of this period within the algorithm’s schedule.

Roughly speaking, this period can be divided into two parts: a \emph{last locking chain} and a \emph{busy period}. The last locking chain consists of a sequence of jobs that lock the next one in a cascading manner, from $n_1 = n$ to $n_k$, until we reach a job $n_k$ that is no longer delayed due to locking. We then show that $s_{n_k}$ is still greater than $r_n$, meaning there is a busy period $[r_n, s_{n_k})$ that delays job $n_k$. Since all machines are busy during this interval, there must be $m$ jobs processing on these machines. We refer to these $m$ jobs, together with the $k$ jobs in the last locking chain, as the \emph{critical jobs}. Refer to \Cref{fig:locking_chain&critical_jobs_pre} as an example.

\begin{figure}[htbp]
    \centering
    \includegraphics[width=0.5\textwidth]{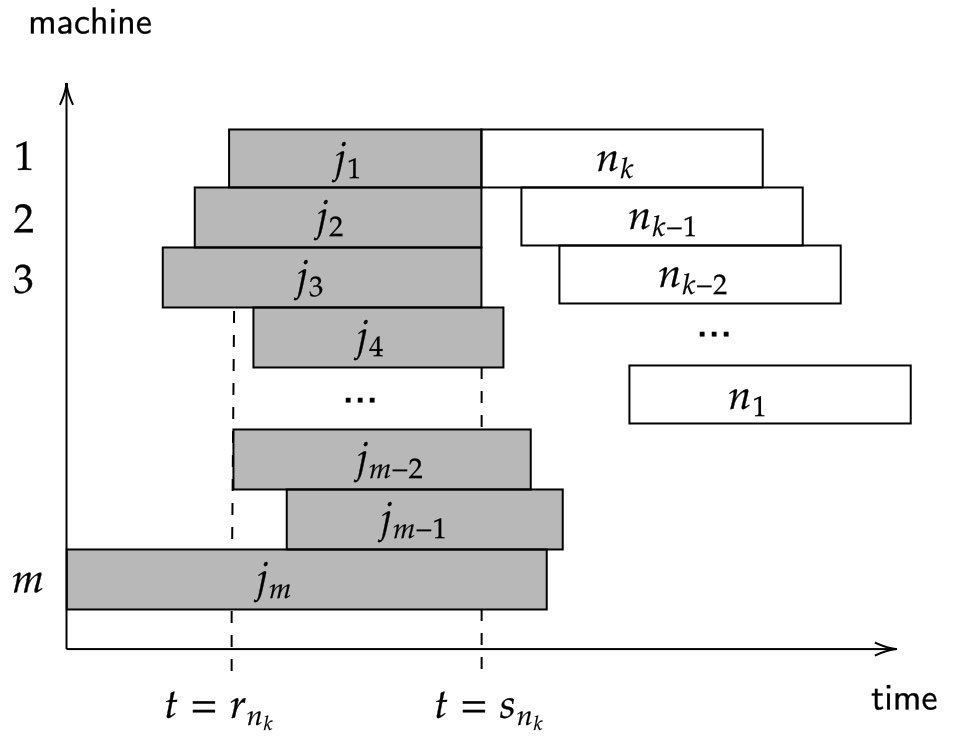}
    \caption{Locking chain (the white jobs) and critical jobs (the white and gray jobs).}
    \label{fig:locking_chain&critical_jobs_pre}
\end{figure}

We categorize these critical jobs into two types: 
\begin{itemize}
    \item early critical jobs, where $s_j < r_n$;
    \item late critical jobs, where $s_j \geq r_n$.
\end{itemize}

By proving that no three critical jobs can be scheduled on the same machine, we conclude, via the Pigeonhole Principle, that there must be $k$ pairs of critical jobs scheduled on the same machine in $\pi$. We then distinguish two cases:
\begin{itemize}
    \item If all these $k$ pairs consist solely of early critical jobs, we have $k$ machines, each processing an early critical pair. We handle this case using a generalized bin-packing argument and resolve it by carefully setting the rescaling factor $\lambda$ in \Cref{sec:earlypair}. Roughly speaking, we show that the total processing time of these critical jobs is at least $k$, so at least one machine must have a makespan exceeding $1$, contradicting $\pi = 1$.
    \item Otherwise, if at least one pair involves a late critical job, we apply the efficiency argument by comparing the efficiency of our algorithm with that of the optimal solution. This analysis is detailed in \Cref{sec:case1} and \Cref{sec:case2}, where we derive a contradiction.
\end{itemize}

Finally, we conclude the proof of \Cref{thm:ratio_m>3} with a sketch. Moreover, we also note why we can achieve a better ratio when $m=3$ even without dynamic locking. Using an efficiency argument, we can show (in \Cref{claim:nispushed}) that the last locking chain can consist only of the last job $n$, which significantly simplifies the challenge posed by the last locking chain.

\ratiogeneral*

\begin{proof}
The proof combines the following key lemmas:

\begin{itemize}
    \item \Cref{claim:no3} in \Cref{sec:basic}: we prove that no three critical jobs can be scheduled on the same machine in $\pi$.
    \item \Cref{claim:sasbrnk} in \Cref{sec:earlypair}: we show that it is impossible for all critical pairs to consist of early critical jobs.
    \item \Cref{claim:sasbearly} in \Cref{sec:case1} and \Cref{lem:latecase2main} in \Cref{sec:case2}: these two subcases handle situations where a critical pair includes a late critical job.
\end{itemize}

We emphasize again that these lemmas are conditional results, holding only when certain listed conditions are satisfied. In \Cref{sec:bound}, we show that by setting $\alpha = \frac{1}{4m^2}$, $\lambda = 4^{25/6}$, and $\gamma = \frac{1}{2} - \frac{1}{4^{20}m^2}$, all these conditions indeed hold. This establishes that our algorithm is $(1.5 - \frac{1}{O(m^2)})$-competitive.
\end{proof}

\section{Basic Discussion: Last Scheduling Period}
\label{sec:basic}
This section discusses the critical period in $\sigma$, i.e., the period after $r_n$. We aim to understand why the last job $n$ is delayed until time $s_n$. The delay of the last job in our setting can occur for two different reasons.

On the one hand, the job may be delayed because all machines are busy. In this case, we call it a pushed job. On the other hand, the delay may occur because machines are locked, and the job is scheduled immediately once a machine becomes unlocked. We refer to such jobs as locked jobs. Note that in a corner-case scenario where a job is released at the same moment a machine becomes unlocked and is immediately scheduled, we still classify it as a locked job.

Now, we scan backward starting from the last job, with the goal of finding the first job in this sequence that is not locked. If the last job $n_1 = n$ is locked, we identify the job responsible for locking the machines, denoted as $n_2$. If $n_2$ is also a locked job, we repeat this process until we find a job $n_k$ that is not a locked job (i.e., it is either pushed or scheduled immediately). We define the set $S = \{n_1 = n, n_2, \dots, n_k\}$ as the \textbf{last locking chain} in $\sigma$.

By the definition of the locking chain, we have the following claim.

\begin{proposition}
\label{claim:lockingchain_basic}
For all $2\leq i \leq k$, $s_{n_{i-1}} = s_{n_{i}} + \alpha_{n_{i}} p_{n_{i}}$. 
\end{proposition}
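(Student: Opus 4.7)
The claim is essentially a bookkeeping consequence of the two definitions introduced just above it (``locked job'' and ``last locking chain'') combined with the dynamic locking rule of \GSLEEPY. The plan is therefore to unfold those definitions and read off the equality, rather than to do any real work.

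First, I would observe that the backward scan terminates exactly when it first encounters a non-locked job, which is $n_k$. Consequently, for every $i \in \{2, \dots, k\}$, the job $n_{i-1}$ is by construction a \emph{locked job}. By the definition given in the preamble of this section, a locked job begins processing at the very instant some machine it eventually runs on transitions from locked to unlocked; that is, $s_{n_{i-1}}$ coincides with a lock-expiration event on the machine on which $n_{i-1}$ starts. Call this machine $M$.

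Second, I would invoke the way $n_i$ was chosen. The backward scan identifies $n_i$ as ``the job responsible for locking the machines'' at time $s_{n_{i-1}}$, which in our setting means the previously scheduled job whose lock on $M$ is the one that just expires at $s_{n_{i-1}}$ (if several locks on $M$ happen to expire simultaneously, this singles out one specific job, and the equality will hold for whichever is chosen). Then the dynamic locking step of the algorithm says that when $n_i$ started at $s_{n_i}$ it locked every other machine, including $M$, until exactly $s_{n_i} + \alpha_{n_i} p_{n_i}$. Matching the two expressions for the expiration time yields $s_{n_{i-1}} = s_{n_i} + \alpha_{n_i} p_{n_i}$, which is the desired identity.

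The only subtlety worth spelling out — and the one step I would write up carefully — is the corner case in which $M$ also ceases to be busy at the moment $s_{n_{i-1}}$, or in which $n_{i-1}$ is only released at that moment. Both situations are explicitly admitted by the definition of a locked job (a job released exactly when a machine becomes unlocked is still classified as locked, and a machine that is simultaneously busy-and-locked remains locked as long as its latest lock persists). In each case the definition still forces $s_{n_{i-1}}$ to coincide with a lock-expiration event on $M$, and the chain construction still pins down $n_i$ as the job creating that lock, so the formula is unaffected. Since no nontrivial calculation or additional lemma is needed, I expect this to be the end of the proof.
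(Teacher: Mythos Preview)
Your proposal is correct and takes essentially the same approach as the paper: the paper simply states that the claim follows ``by the definition of the locking chain,'' and your write-up is a careful unfolding of exactly that definition together with the locking rule. Your additional remarks on the corner cases are sound but ultimately unnecessary, since once a job starts it locks all machines and no further start can occur until that lock expires, so the ``job responsible for locking'' is always uniquely the most recently started job.
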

% \begin{claim}
% \label{lem:push_length}
% $p_n>p_{n_k}-\sum_{i=2}^k\alpha p_{n_i}$.
% \end{claim}
% \begin{proof}
%     It is derived from the definition of a locked job.
%     % For the second one, since $s_n+p_n = C_n \geq C_{n_k} = s_{n_k}+p_{n_k}$, we have $p_n\geq p_{n_k}+s_{n_k}-s_n$. Then because of the first property and \Cref{def:af}, $p_n\geq p_{n_k}+s_{n_k}-s_n=p_{n_k}-\sum_{i=2}^k\alpha_{n_i} p_{n_i}\geq p_{n_k}-\alpha_f\sum_{i=2}^k p_{n_i}$, which concludes the second property. 
% \end{proof}
Then, we have the following advanced property under a specific condition of the three parameters $\lambda$, $\alpha$, and $\gamma$. 

\begin{definition}\label{def:af}
    We denote by $\alpha_f$ the largest locking parameter of jobs in $S$, i.e. $\alpha_f\triangleq\max_{j\in S}\alpha_j$.
\end{definition}
\begin{proposition}\label{lem:finallockingchain}
    % Conditioned on 
    % \begin{equation}
    %     \label{condition:lockingchain}
    %     \frac{\gamma}{\alpha} > m,  
    %     \tag{Condition 1}
    % \end{equation}
   We have the following advanced properties for the final locking chain:
    \begin{enumerate}
        \item When $2 \leq k \leq m + 1$, for all $1 \leq i \leq k - 1$, we have:
        \[
        p_{n_{i+1}} \geq p_{n_i} \quad \text{and} \quad s_{n_{i+1}} - r_{n_{i+1}} > \gamma - \alpha_f \sum_{j=2}^{i+1} p_{n_j}.
        \]
        \item The chain length satisfies $|S| = k \leq m$.
    \end{enumerate}
    % For any $i < k\le m+1$, it holds that $p_{n_{i + 1}} \geq p_{n_i}$ and $s_{n_{i+1}} - r_{n_{i+1}} > \gamma - \alpha \sum_{j=2}^{i+1} p_{n_j}$, conditioned on $\frac{\gamma}{\alpha} > m$.  
\end{proposition}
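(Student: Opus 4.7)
I plan to prove both parts by induction on $i$, using the chain relation $s_{n_i} = s_{n_{i+1}} + \alpha_{n_{i+1}} p_{n_{i+1}}$ from \Cref{claim:lockingchain_basic} together with \Cref{lem:last_delay}.

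For the monotonicity $p_{n_{i+1}} \geq p_{n_i}$, I first aim to show that $n_i$ is still pending at the moment $s_{n_{i+1}}$, so that the LPT rule at that instant forces $p_{n_{i+1}} \geq p_{n_i}$. The inductive hypothesis $s_{n_i} > r_{n_i} + \gamma - \alpha_f \sum_{j=2}^{i} p_{n_j}$ (with \Cref{lem:last_delay} supplying the base when $i = 1$), combined with the chain relation, gives $s_{n_{i+1}} > r_{n_i} + \gamma - \alpha_f \sum_{j=2}^{i+1} p_{n_j}$. Under the mild parameter condition $\gamma > \alpha_f \sum_{j=2}^{k} p_{n_j}$ (which holds since the sum is bounded by the total load $\leq m$ and we will later choose $\alpha \ll \gamma/m$), this forces $s_{n_{i+1}} > r_{n_i}$, so $n_i$ is indeed pending; LPT then yields the inequality.

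For the delay bound on $n_{i+1}$, I telescope the chain relation to get $s_{n_{i+1}} = s_{n_1} - \sum_{j=2}^{i+1} \alpha_{n_j} p_{n_j} \geq s_{n_1} - \alpha_f \sum_{j=2}^{i+1} p_{n_j}$, and then substitute $s_{n_1} > r_n + \gamma$ from \Cref{lem:last_delay}. This yields $s_{n_{i+1}} > r_n + \gamma - \alpha_f \sum_{j=2}^{i+1} p_{n_j}$, so the desired bound reduces to the auxiliary claim $r_{n_{i+1}} \leq r_n$. I expect this auxiliary claim to be the main obstacle. The approach I would attempt is by contradiction: if $r_{n_{i+1}} > r_n$, then $n_{i+1}$ is released during $(r_n, s_n)$, but \Cref{lem:delay_no_idle} applied to $n$ forbids any machine from being idle for positive duration in this interval. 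Combining this constraint with the LPT priority at $s_{n_{i+1}}$ (which gives $p_{n_{i+1}} \geq p_n$) and the minimality of the counterexample $\mathcal{J}$, I expect that a careful case analysis on which event could have freed a machine at $s_{n_{i+1}}$ yields the contradiction.

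For Part 2, I plan to examine the algorithm's state just before $s_{n_1}$. A chain job $n_j$ with $j \geq 2$ is still being processed at time $s_{n_1}^-$ precisely when $p_{n_j} > s_{n_1} - s_{n_j}$. Since $s_{n_1} - s_{n_j} = \sum_{\ell=2}^{j} \alpha_{n_\ell} p_{n_\ell} \leq \alpha_f \sum_{\ell=2}^{j} p_{n_\ell} \leq \alpha_f m$, and by Part~1 combined with \Cref{claim:lowerbound_pn} we have $p_{n_j} \geq p_n$ with $p_n$ comfortably exceeding $\alpha_f m$ under the parameter choices $\alpha = \Theta(1/m^2)$ and $\gamma = 1/2 - o(1)$, every chain job $n_2, \ldots, n_k$ is simultaneously busy on $k-1$ distinct machines at $s_{n_1}^-$ (consecutive chain jobs must lie on different machines since the lock gap $\alpha_{n_j} p_{n_j}$ is smaller than $p_{n_j}$). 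Because $n_1$ additionally needs an idle machine to start on at $s_{n_1}$, we need at least $k$ machines in total, forcing $k \leq m$.
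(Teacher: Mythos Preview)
Your monotonicity argument in Part~1 and your approach to Part~2 are both essentially correct and match the paper's reasoning (the paper phrases Part~2 as ``if two chain jobs shared a machine, the earlier one would have to finish before the later one starts, but it doesn't'', which is equivalent to your ``all chain jobs are still running at $s_{n_1}^-$'').

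The genuine gap is in the delay bound for Part~1. Your auxiliary claim $r_{n_{i+1}} \leq r_n$ is \emph{false} in general, so no case analysis will rescue it. A simple counterexample: release many jobs at time~$0$ that occupy all machines; release $n$ at some $r_n > 0$; then release a strictly larger job $n_2$ at some $r_{n_2} > r_n$. When a lock expires, LPT selects the larger $n_2$ first, locks the other machines, and $n$ starts only when that lock expires. This produces a valid locking chain with $r_{n_2} > r_n$. Your proposed contradiction via \Cref{lem:delay_no_idle} does not go through: that lemma only says no machine is \emph{idle} in $(r_n, s_n)$, but the machine that eventually runs $n_2$ can be busy or locked throughout that interval until the moment $s_{n_{2}}$, with $n_2$ released in the meantime.

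The fix is to avoid comparing $r_{n_{i+1}}$ to $r_n$ at all. You already have the telescoped bound $s_{n_{i+1}} \geq s_{n_1} - \alpha_f \sum_{j=2}^{i+1} p_{n_j}$ and the monotonicity $p_{n_{i+1}} \geq p_{n_1}$. Combine these with the two inequalities $r_{n_{i+1}} + p_{n_{i+1}} \leq 1$ (from $\pi = 1$) and $s_{n_1} + p_{n_1} > 1+\gamma$: adding $p_{n_{i+1}}$ to the start-time bound and using $p_{n_{i+1}} \geq p_{n_1}$ gives
\[
s_{n_{i+1}} - r_{n_{i+1}} \;\geq\; (s_{n_1} + p_{n_1}) - (r_{n_{i+1}} + p_{n_{i+1}}) - \alpha_f \sum_{j=2}^{i+1} p_{n_j} \;>\; \gamma - \alpha_f \sum_{j=2}^{i+1} p_{n_j},
\]
which is exactly the desired bound. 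This is the route the paper takes.
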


\begin{proof}
We prove the first property for $2 \leq k \leq m+1$ by induction on $k$, and then prove the second property by contradiction.

For the base case $k=2$, by \Cref{lem:last_delay} and the definition of the last locking chain, we have
\[
s_{n_1} - s_{n_2} = \alpha_{n_2} p_{n_2} \leq \alpha \leq \gamma < s_{n_1} - r_{n_1}.
\]
This implies $s_{n_2} > r_{n_1}$, which yields $p_{n_2} \geq p_{n_1}$ by the LPT strategy of our algorithm. Next, we show the delay bound:
\begin{align*}
s_{n_2} - r_{n_2} 
&= s_{n_1} - \alpha_{n_2} p_{n_2} - r_{n_2} \\
&\geq (s_{n_1} + p_{n_1}) - \alpha_{n_2} p_{n_2} - (r_{n_2} + p_{n_2}) 
\end{align*}
Since $s_{n_1} + p_{n_1} > 1 + \gamma$ and $r_{n_2} + p_{n_2} \leq 1$, it follows that
\[
s_{n_2} - r_{n_2} > \gamma - \alpha_{n_2} p_{n_2} \geq \gamma - \alpha_f p_{n_2}.
\]

When $3 \leq k \leq m+1$, we first note:
\[
s_{n_1} - s_{n_{k-1}} = \sum_{j=2}^{k-1} \alpha_{n_j} p_{n_j} \leq (k - 2)\alpha_f.
\]
Assume by the induction hypothesis that the statement holds for all $i < k - 1$. Then,
\begin{align*}
    s_{n_{k-1}} - r_{n_{k-1}} 
    &> \gamma - \alpha_f \sum_{j=2}^{k-1} p_{n_j}  \\
    &\geq \gamma - (k - 2)\alpha_f  \\
    &> \gamma - (m - 1)\alpha_f.
\end{align*}
Under the condition
\begin{empheq}[box=\shadowbox*]{equation}
    \label{condition:lockingchain}
    \frac{\gamma}{\alpha} > m~, \tag{Condition 1}
\end{empheq}
which implies $\frac{\gamma}{\alpha_f} \geq \frac{\gamma}{\alpha} > m$, we have
\[
r_{n_{k-1}} < s_{n_{k-1}} - \gamma + (m-1)\alpha_f < s_{n_{k-1}} - \alpha_f \leq s_{n_k},
\]
where the last inequality follows from the fact that $s_{n_{k-1}} = s_{n_k} + \alpha_{n_k} p_{n_k} \leq s_{n_k} + \alpha_f$. Thus, by the algorithm definition, $p_{n_k} \geq p_{n_{k-1}}$.

We now analyze the delay of $n_k$:
\begin{align*}
    s_{n_k} - r_{n_k} 
    &= (s_{n_{k-1}} - \alpha_{n_k} p_{n_k}) - r_{n_k} \\
    &\geq (s_{n_{k-1}} - \alpha_{n_k} p_{n_k} + p_{n_{k-1}}) - (r_{n_k} + p_{n_k}) 
    \quad \tag{since $p_{n_k} \geq p_{n_{k-1}}$} \\
    &\geq (s_{n_{k-1}} - \alpha_{n_k} p_{n_k} + p_{n_1}) - (r_{n_k} + p_{n_k}) 
    \quad \tag{since $p_{n_{k-1}} \geq p_{n_1}$ by induction hypothosis} \\
    &= (s_{n_1} - \sum_{j=2}^{k-1} \alpha_{n_j} p_{n_j} - \alpha_{n_k} p_{n_k} + p_{n_1}) - (r_{n_k} + p_{n_k}) 
    \quad \tag{by unrolling $s_{n_{k-1}}$} \\
    &\geq (s_{n_1} - \sum_{j=2}^{k} \alpha_{n_j} p_{n_j} + p_{n_1}) - 1 
    \quad \tag{as $r_{n_k} + p_{n_k} \leq 1$ since $\pi = 1$} \\
    &> \gamma - \sum_{j=2}^{k} \alpha_{n_j} p_{n_j} 
    \quad \tag{as $\sigma = C_n = s_{n_1} + p_{n_1} > 1 + \gamma$} \\
    &\geq \gamma - \alpha_f \sum_{j=2}^{k} p_{n_j}.
\end{align*}

Consequently, the first property holds for all $2 \leq k \leq m+1$ by this inductive proof. 

Next, we prove the second property by contradiction. Suppose $k \geq m + 1$. Then, among the $k$ jobs in the chain, at least two must be processed on the same machine. That is, there exist indices $i < j$ such that both $n_i$ and $n_j$ are assigned to the same machine, and hence
\[
C_{n_j} = s_{n_j} + p_{n_j} \leq s_{n_i}.
\]

Among all such pairs $(i, j)$, choose the one with the smallest $j$. By the pigeonhole principle, we have $j \leq m + 1$. Since the jobs form a locking chain, it holds that
\[
s_{n_i} - s_{n_j} = \sum_{k = i+1}^{j} \alpha_{n_k} p_{n_k}, \quad \text{and} \quad p_{n_j} \geq p_{n_{j-1}} \geq \dots \geq p_{n_{i+1}}.
\]
Then we estimate:
\begin{align*}
s_{n_j} + p_{n_j} 
&= s_{n_i} - \sum_{k = i+1}^{j} \alpha_{n_k} p_{n_k} + p_{n_j} \\
&\geq s_{n_i} - (j - i)\alpha_f p_{n_j} + p_{n_j} \\
&= s_{n_i} + \left(1 - (j - i)\alpha_f\right) p_{n_j} \\
&\geq s_{n_i} + (1 - m\alpha_f) p_{n_j} > s_{n_i},
\end{align*}
where the last inequality follows from \eqref{condition:lockingchain}, which implies $m\alpha_f < 1$. 
This contradicts the assumption that $s_{n_j} + p_{n_j} \leq s_{n_i}$. Therefore, no two jobs from $S$ are processed on the same machine. Since there are only $m$ machines, we conclude that $|S| \leq m$.
\begin{comment}
    % ===
% We prove the first property for $ 2 \leq k\leq m+1$ by induction on $k$ and then prove the second property by contradiction. 

% In the base case where $k=2$, by \Cref{lem:last_delay} and the definition of last locking chain, we have $s_{n_1}-s_{n_2}=\alpha_{n_2}p_{n_2}\leq \alpha \leq \gamma< s_{n_1}-r_{n_1}$. This yields that $s_{n_2}>r_{n_1}$, which implies $p_{n_2}\ge p_{n_1}$ by the LPT strategy of our algorithm. Then, we have $s_{n_2}-r_{n_2}=s_{n_1}-\alpha_{n_2} p_{n_2}-r_{n_2}\ge (s_{n_1}+p_{n_1})-\alpha_{n_2} p_{n_2}-(r_{n_2}+p_{n_2})$. Since $s_{n_1}+ p_{n_1}>1+\gamma$ and $r_{n_2}+p_{n_2}\leq1$, we have that 
% $s_{n_2}-r_{n_2}>\gamma-\alpha_{n_2} p_{n_2}\ge \gamma-\alpha_f p_{n_2}$.

% When $3\le k\le m+1$, by induction hypothesis, the statement holds for all $i < k-1$. We have $s_{n_1}-s_{n_{k-1}}= \sum_{j=2}^{k-1}\alpha_{n_j} p_{n_j}\leq (k-2)\alpha_f$ and 
% \begin{align*}
%     s_{n_{k-1}}-r_{n_{k-1}} 
%     &>\gamma -\alpha_f\sum_{j=2}^{k-1}p_{n_j} \\
%     &\geq \gamma -(k-2) \alpha_f \\
%     &>\gamma -(m-1) \alpha_f .
% \end{align*}
% Then, conditioned on 
% \begin{empheq}[box=\shadowbox*]{equation}
%     \label{condition:lockingchain}
%         \frac{\gamma}{\alpha} > m~,  
%         \tag{Condition 1}
% \end{empheq}
% which implies 
% $\frac{\gamma}{\alpha_f} \geq \frac{\gamma}{\alpha} > m$, we have that $r_{n_{k-1}}<s_{n_{k-1}}-\alpha_f< s_{n_k}$, which implies $p_{n_k}\ge p_{n_{k-1}}$ by the definition of our algorithm. Therefore,
% \begin{align*}
%     s_{n_k} - r_{n_k} & = (s_{n_{k-1}} - \alpha_{n_k} p_{n_k}) - r_{n_k}\\
%     & \geq (s_{n_{k-1}} - \alpha_{n_k} p_{n_k} + p_{n_{k-1}}) - (r_{n_k} + p_{n_k}) \\
%     & = (s_{n_1} - \sum_{j=2}^{k-1} \alpha_{n_j} p_{n_j} - \alpha_{n_k} p_{n_k} + p_{n_{k-1}}) - (r_{n_k} + p_{n_k}).
% \end{align*}
% By induction hypothesis, $p_{n_{k-1}}\geq p_{n_1}$. Since $\pi\leq1$, $r_{n_k}+p_{n_k}\leq1$. Thus, with $s_{n_1}+p_{n_1}>1+\gamma$, 
% \begin{align*}
%     s_{n_k} - r_{n_k} & \geq (s_{n_1} - \sum_{j=2}^{k-1} \alpha_{n_j} p_{n_j} - \alpha_{n_k} p_{n_k} + p_{n_{k-1}}) - (r_{n_k} + p_{n_k})\\
%     & \geq (s_{n_1}- \sum_{j=2}^{k} \alpha_{n_j} p_{n_j} + p_{n_1}) - 1\\
%     & > \gamma - \sum_{j=2}^k \alpha_{n_j}p_{n_j}\\
%     & \geq \gamma - \alpha_f\sum_{j=2}^kp_{n_j}.
% \end{align*}
% Consequently, the first two properties hold for all $k\le m+1$ by the Principle of Finite Induction. 

% Assume that jobs $n_i$ and $n_j$ (where $i<j$) are processed on the same machine, which means that $C_{n_j}=s_{n_j}+p_{n_j}\leq s_{n_i}$. If there is more than $1$ pair of such $(i,j)$, find the one with the smallest $j$. We have $j \leq m + 1$ by the pigeonhole principle. It is clear that $s_{n_i}-s_{n_j}=\sum_{k=i+1}^j\alpha_{n_k} p_{n_k}$ and $p_{n_j} \geq p_{n_{j-1}}\ge\dots\ge p_{n_{i+1}}$. Then, 
% \begin{align*}
%     s_{n_j}+p_{n_j}=s_{n_i}-\sum_{k=i+1}^j\alpha_{n_k} p_{n_k}+p_{n_j}\ge s_{n_i} - (j-i)\alpha_f p_{n_j}+p_{n_j}\ge s_{n_i} + (1-m\alpha_f) p_{n_j} > s_{n_i}.
% \end{align*}
% This is a contradiction. Therefore, no two jobs from $S$ are processed in the same machine. Because the total number of machines is $m$, we obtain that $|S| \leq m$.
\end{comment}
\end{proof}
\begin{definition}\label{def:gamma'}
    For simplicity, define $\gamma'\triangleq \gamma-\alpha_f\sum_{j=2}^{k} p_{n_j}$.
\end{definition}
Then, we have the following corollaries.
\begin{corollary}\label{cor:Cnk}
    $s_{n_k}+p_{n}>1+\gamma'$.
\end{corollary}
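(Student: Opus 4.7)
The plan is to unroll the locking-chain identity $s_{n_{i-1}} = s_{n_i} + \alpha_{n_i} p_{n_i}$ from \Cref{claim:lockingchain_basic} by telescoping, and then plug in the assumption $C_n > 1 + \gamma$ together with the definition of $\alpha_f$ and $\gamma'$.

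Concretely, I would first note the trivial base case $k = 1$: here the sum defining $\gamma'$ is empty, so $\gamma' = \gamma$, and the claim $s_{n_k} + p_n = s_n + p_n = C_n > 1 + \gamma = 1 + \gamma'$ is just the standing assumption $\sigma(\mathcal{J}) > 1 + \gamma$.

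For $k \geq 2$, I would telescope the identity from \Cref{claim:lockingchain_basic} to obtain
\begin{equation*}
s_{n_1} - s_{n_k} = \sum_{i=2}^{k} (s_{n_{i-1}} - s_{n_i}) = \sum_{i=2}^{k} \alpha_{n_i} p_{n_i}.
\end{equation*}
Adding $p_n = p_{n_1}$ to both sides rearranges this into
\begin{equation*}
s_{n_k} + p_n = (s_{n_1} + p_{n_1}) - \sum_{i=2}^{k} \alpha_{n_i} p_{n_i} = C_n - \sum_{i=2}^{k} \alpha_{n_i} p_{n_i}.
\end{equation*}
Using $C_n > 1 + \gamma$ and the bound $\alpha_{n_i} \leq \alpha_f$ from \Cref{def:af}, I would conclude
\begin{equation*}
s_{n_k} + p_n > 1 + \gamma - \alpha_f \sum_{i=2}^{k} p_{n_i} = 1 + \gamma',
\end{equation*}
matching \Cref{def:gamma'}.

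There is no real obstacle here; the only thing to be careful about is remembering that $n = n_1$ so $p_n = p_{n_1}$ (not $p_{n_k}$), and that the case $k = 1$ must be handled separately since \Cref{claim:lockingchain_basic} is only stated for $i \geq 2$. Everything else follows from a one-line telescoping and substitution of the hypothesized lower bound on $C_n$.
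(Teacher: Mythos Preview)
Your proof is correct and follows essentially the same approach as the paper: telescope the locking-chain identity to express $s_{n_k}$ in terms of $s_{n_1}$, then apply $C_n > 1+\gamma$ and $\alpha_{n_i} \le \alpha_f$. The paper's proof is simply a compressed one-line version of yours (it leaves the empty-sum case $k=1$ implicit), so there is no substantive difference.
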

\begin{proof}
For $i = 1, \dots, k$, we have 
$s_{n_k} + p_n = s_{n_1} - \displaystyle\sum_{j=2}^k \alpha_{n_j} p_{n_j} + p_n > 1 + \gamma - \alpha_f \displaystyle\sum_{j=2}^k p_{n_j} = 1 + \gamma'$ 
by \Cref{def:gamma'} and $s_n + p_n > 1 + \gamma$.
\end{proof}
\begin{corollary}\label{cor:snk}
    $s_{n_k}>\gamma'+r_n$.
\end{corollary}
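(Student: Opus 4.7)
The corollary asserts $s_{n_k} > \gamma' + r_n$, which is a very short algebraic consequence of what has already been established. My plan is simply to combine the completion lower bound from \Cref{cor:Cnk} with the feasibility constraint on the last job in the optimal schedule, so I will spend most of the effort on laying out the ingredients cleanly rather than on a long derivation.

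The first step is to invoke \Cref{cor:Cnk}, which gives $s_{n_k} + p_n > 1 + \gamma'$. The second step is to observe that, since $\pi(\mathcal{J}) = 1$ and job $n$ must be processed by the optimal schedule in the interval $[r_n, 1]$, we must have $r_n + p_n \le 1$, i.e., $p_n \le 1 - r_n$. This is exactly the same inequality that was used implicitly in the proof of \Cref{lem:last_delay} and explicitly in \Cref{claim:lowerbound_pn}, so it requires no new justification.

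Combining the two inequalities, one gets $s_{n_k} > 1 + \gamma' - p_n \ge 1 + \gamma' - (1 - r_n) = \gamma' + r_n$, which is the desired bound. There is no real obstacle here; the only thing to be careful about is making sure the strict inequality from \Cref{cor:Cnk} is preserved when the weak inequality $p_n \le 1 - r_n$ is substituted, which it is, since one is subtracting a quantity that is at most $1 - r_n$ from a strict lower bound. So the proof is just a two-line chain of inequalities citing \Cref{cor:Cnk} and the release-time feasibility of job $n$ in $\pi$.
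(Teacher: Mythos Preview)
Your proof is correct. It differs slightly from the paper's own argument: the paper does not invoke \Cref{cor:Cnk} but instead re-expands $s_{n_k} = s_n - \sum_{j=2}^k \alpha_{n_j} p_{n_j} \ge s_n - \alpha_f \sum_{j=2}^k p_{n_j} = s_n + \gamma' - \gamma$ directly from \Cref{claim:lockingchain_basic} and \Cref{def:gamma'}, and then applies \Cref{lem:last_delay} ($s_n - r_n > \gamma$) to conclude. Your route instead reuses the already-packaged bound $s_{n_k} + p_n > 1 + \gamma'$ from \Cref{cor:Cnk} and combines it with $r_n + p_n \le 1$. The two arguments are really the same computation split differently: \Cref{lem:last_delay} is precisely the combination of $s_n + p_n > 1 + \gamma$ (baked into \Cref{cor:Cnk}) with $r_n + p_n \le 1$ (which you invoke directly). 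Your version is arguably tidier since it avoids redoing the chain-sum bound, at the small cost of depending on \Cref{cor:Cnk}.
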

\begin{proof}
    By \Cref{claim:lockingchain_basic}, we have $s_{n_k} = s_n - \displaystyle\sum_{j=2}^k \alpha_{n_j} p_{n_j}$, and by \Cref{def:af}, $s_{n_k} \ge s_n - \alpha_f \displaystyle\sum_{j=2}^k p_{n_j}$. Then, using \Cref{def:gamma'} and \Cref{lem:last_delay}, we obtain $s_{n_k} \ge s_n + \gamma' - \gamma > \gamma' + r_n$.
\end{proof}

Next, since $s_{n_k} > r_n$ and $n_k$ is not a locked job by definition, it follows that there must be $m$ jobs being processed at time $s_{n_k}$. Otherwise, it would be impossible to delay job $n$ to a start time $s_n > s_{n_k}$. We denote these jobs by $j_1, j_2, \dots, j_m$ and refer to them as \emph{\textbf{critical jobs}}, together with the jobs in the locking chain. That is, we define the critical job set as
\[
\mathcal{J}_C = \{j_1, \dots, j_m\} \cup \{n_1, \dots, n_k\}.
\]
Refer to \Cref{fig:locking_chain&critical_jobs} for illustration.

\begin{figure}[H]
    \centering
    \includegraphics[width=0.5\textwidth]{Figures/new_critical_general2.jpg}
    \caption{Locking chain (the white jobs) and critical jobs (the white and gray jobs).}
    \label{fig:locking_chain&critical_jobs}
\end{figure}

\begin{definition}[Early and Late Critical Jobs]
\label{lem:earlyandlate}
For a critical job $j$, we call it
\begin{itemize}
    \item \textbf{Early Critical Job}, if $s_{j} < r_{n}$;
    \item \textbf{Late Critical Job}, if $s_{j} \geq r_{n}$. 
\end{itemize}
\end{definition}
\begin{lemma}
    \label{lem:critical_large}
    If a critical job $j$ is an early critical job, then $p_{j} > \gamma'$. Otherwise, if it is a late critical job, $p_{j} \geq p_n$.
\end{lemma}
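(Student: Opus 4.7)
The plan is to split the critical jobs by their origin—members of the locking chain $\{n_1,\dots,n_k\}$ versus the $m$ jobs $\{j_1,\dots,j_m\}$ processing at time $s_{n_k}$—and then handle the early/late dichotomy within each origin. The two cases rely on genuinely different tools: the locking chain case is driven by \Cref{lem:finallockingchain} (monotonicity of $p_{n_i}$), while the other case is driven by the LPT rule together with \Cref{cor:snk}.

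First I would handle the locking chain. The key structural observation is that every job in the chain is in fact a late critical job. Indeed, by \Cref{claim:lockingchain_basic}, $s_{n_{i-1}} = s_{n_i} + \alpha_{n_i} p_{n_i} \geq s_{n_i}$, so $s_{n_i} \geq s_{n_k}$ for all $1 \leq i \leq k$, and by \Cref{cor:snk} we have $s_{n_k} > \gamma' + r_n > r_n$. Hence $s_{n_i} > r_n$ for all $i$, i.e.\ each $n_i$ falls into the late category. It therefore suffices to verify $p_{n_i} \geq p_n$: for $i=1$ this is trivial since $n_1 = n$, and for $i \geq 2$ we apply \Cref{lem:finallockingchain} inductively to obtain $p_{n_i} \geq p_{n_{i-1}} \geq \dots \geq p_{n_1} = p_n$.

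Next I would handle a critical job $j \in \{j_1,\dots,j_m\}$. Such a $j$ is processing at time $s_{n_k}$, so $s_j \leq s_{n_k} < s_n$ (strictness again coming from \Cref{claim:lockingchain_basic}). If $j$ is early, i.e.\ $s_j < r_n$, then it is still processing at $s_{n_k}$, so $C_j > s_{n_k}$, and \Cref{cor:snk} yields
\[
p_j \;=\; C_j - s_j \;>\; s_{n_k} - r_n \;>\; \gamma'.
\]
If instead $j$ is late, then $r_n \leq s_j < s_n$, so at time $s_j$ the last job $n$ is already released and still pending. Since the algorithm starts $j$ at that moment, the LPT rule forces $p_j \geq p_n$.

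I do not foresee a genuine obstacle here: the lemma is essentially a direct packaging of \Cref{cor:snk}, \Cref{lem:finallockingchain}, and the LPT rule. The only mildly subtle point to be careful about is the strict inequality $s_{n_k} < s_n$, which is needed to conclude that $n$ is still pending when any late job in $\{j_1,\dots,j_m\}$ starts; this follows because each $\alpha_{n_i} p_{n_i}$ is strictly positive in the chain. Once that is noted, the four subcases close cleanly.
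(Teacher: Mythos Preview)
Your proof is correct in substance and follows the same route as the paper: \Cref{cor:snk} for early critical jobs and the LPT rule for late ones. The paper's version is simply terser---it does not split by origin, arguing directly that any early critical job $j$ has $p_j \ge s_{n_k}-s_j > s_{n_k}-r_n > \gamma'$ and any late one satisfies $p_j \ge p_n$ by LPT; your use of the monotonicity in \Cref{lem:finallockingchain} for the chain is a valid alternative to invoking LPT there.

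One small correction: the strict inequality $s_{n_k} < s_n$ that you flag as the ``mildly subtle point'' is \emph{false} when $k=1$, since then $n_k = n$. Your justification via \Cref{claim:lockingchain_basic} is vacuous in that case. The fix is that the $m$ jobs $j_1,\dots,j_m$ are the ones occupying all machines just \emph{before} $s_{n_k}$ (since $n_k$ is pushed), so in fact $s_{j_i} < s_{n_k} \le s_n$ always holds, and the LPT step goes through regardless of $k$.
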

\begin{proof}
If $s_j<r_n$, $p_j\ge s_{n_k}-s_j>s_{n_k}-r_n>\gamma'$. If $s_j\ge r_n$, then $p_j\geq p_n$ because of the LPT strategy.    
\end{proof}

\begin{lemma}
\label{claim:no3}
No $3$ jobs $\in \mathcal{J}_C$ are executed in the same machine in $\pi$.
\end{lemma}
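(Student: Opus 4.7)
The plan is to suppose for contradiction that three critical jobs $a, b, c \in \mathcal{J}_C$ are processed on a common machine in $\pi$. Since $\pi = 1$, this forces $p_a + p_b + p_c \le 1$, and my goal is to refute this by producing a strictly-greater-than-$1$ lower bound on the sum.

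First I would collect the available size bounds on critical jobs. By \Cref{lem:critical_large}, every early critical job satisfies $p_j > \gamma'$, and every late critical job satisfies $p_j \ge p_n$. Note that every element $n_i$ of the locking chain is in fact late, because $s_{n_i} \ge s_{n_k} > r_n$ by \Cref{cor:snk}. To make $\gamma'$ usable as a concrete constant I would derive the uniform estimate $\gamma' \ge \gamma - (m-1)\alpha$, combining $\alpha_f \le \alpha$ (immediate from $\alpha_j = \lambda^{-s_j/p_j}\alpha$) with $\sum_{j=2}^{k} p_{n_j} \le k - 1 \le m - 1$; here each $p_{n_j} \le 1$ because $\pi = 1$, and $k \le m$ by \Cref{lem:finallockingchain}.

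Next I would case-split on how many of $\{a, b, c\}$ are late critical jobs. The four cases reduce respectively to showing (i) $3\gamma' > 1$, (ii) $2\gamma' + p_n > 1$, (iii) $\gamma' + 2p_n > 1$, and (iv) $3p_n > 1$, and all four are implied by the joint statement $\gamma' > 1/3$ and $p_n > 1/3$. The first of these follows at once from $\gamma' \ge \gamma - (m-1)\alpha$ in the intended parameter regime $\gamma \approx 1/2$, $\alpha = O(1/m^2)$.

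The main obstacle is case (iv), which needs $p_n > 1/3$ and thus depends delicately on the parameters. Here I would substitute the general-case lower bound $p_n > \tfrac{m\gamma - m/4 - m(m-1)\alpha}{3m/4 - 1 - (m-1)\alpha}$ from \Cref{claim:lowerbound_pn} into $3p_n > 1$, clear the denominator, and register the resulting inequality as a fresh boxed parametric condition, in the same spirit as \eqref{condition:lockingchain}. The eventual choice $\alpha = 1/(4m^2)$, $\gamma = \tfrac{1}{2} - 1/(4^{20}m^2)$ carried out in \Cref{sec:bound} should then leave strict slack in this inequality, yielding the desired contradiction $p_a + p_b + p_c > 1$ and completing the proof.
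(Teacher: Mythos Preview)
Your proposal is correct and follows essentially the same approach as the paper: both reduce the claim to showing $\gamma' > 1/3$ and $p_n > 1/3$, the former via $\gamma' \ge \gamma - (m-1)\alpha$ (the paper records this as Condition~2, with the slightly stronger threshold $0.4$) and the latter by plugging \Cref{claim:lowerbound_pn} into a boxed parametric condition (the paper's Condition~3). Your explicit four-way case split on the number of late jobs among $\{a,b,c\}$ is a minor elaboration; the paper simply observes that every critical job has size at least $\min(\gamma',p_n) > 1/3$ and concludes directly.
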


\begin{proof}
    With condition
    \begin{empheq}[box=\shadowbox*]{equation}
    \label{condition:gamma'}
        \gamma-(m-1)\alpha\ge0.4,     
    \tag{Condition 2}  
    \end{empheq}
    and the definition of $\gamma'$, we have $\gamma'=\gamma-\alpha_f\sum_{j=2}^kp_{n_j}\geq\gamma-(m-1)\alpha\geq0.4$.
    With condition
    \begin{empheq}[box=\shadowbox*]{equation}
    \label{condition:pnlarge}
        \frac{m\gamma-\frac{m}{4}-m(m-1)\alpha}{\frac{3}{4}m-1-(m-1)\alpha}>\frac{1}{3}, 
        \tag{Condition 3}
    \end{empheq}
    and \Cref{claim:lowerbound_pn}, we have $p_n>\frac{1}{3}$. 
    Since all jobs in $\mathcal{J}_C$ have processing time at least either $p_n$ or $\gamma'$ (by \Cref{lem:critical_large}), and both are at least $\frac{1}{3}$ due to the condition, it is impossible to arrange $3$ jobs $\in \mathcal{J}_C$ on the same machine in $\pi$.
\end{proof}

Now, we have completed the preparation for our analysis. In the critical period ($t > r_n$), we identify $m + k$ critical jobs. By \Cref{claim:no3} and the pigeonhole principle, at least $k$ pairs of these jobs must be executed on the same machine in $\pi$. We refer to each such pair as a \textbf{critical job pair}.

\section{Further Discussion of Critical Job Pairs}
\label{sec:critical}
In this section, we analyze different cases among the $k$ critical job pairs. We classify them into two types:
\begin{itemize}
    \item \textbf{Early Critical Pair}: both jobs are early critical jobs.
    \item \textbf{Late Critical Pair}: at least one job is a late critical job.
\end{itemize}

We first eliminate the case where all critical pairs are early in \Cref{sec:earlypair}, which is a key scenario for the dynamic locking technique. In this section, we will also fix the choice of $\lambda$. Then, in \Cref{sec:case1} and \Cref{sec:case2}, we analyze the efficiency of both our algorithm and the optimal solution in the presence of at least one late critical pair, ultimately leading to a contradiction.

\subsection{Early Critical Pairs Only}
\label{sec:earlypair}
This section eliminates the case that all critical job pairs consist of early critical jobs. First, we give a lower bound for $\alpha_f$, which promises that the critical jobs will use a large enough locking parameter. 
\begin{proposition}
\label{claim:aflow}
$\alpha_f>\lambda^{-4.5}\alpha$.
\end{proposition}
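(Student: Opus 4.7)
The plan is to prove the claim by contradiction, after first reducing it to a statement about $s_n/p_n$. Along the chain, we have $s_{n_{i-1}} = s_{n_i} + \alpha_{n_i}p_{n_i}$, so $s_{n_i}$ is strictly decreasing in $i$; and $p_{n_i}$ is non-decreasing in $i$ by \Cref{lem:finallockingchain}. Thus $s_{n_i}/p_{n_i}$ is strictly decreasing in $i$, and $\alpha_{n_i} = \lambda^{-s_{n_i}/p_{n_i}}\alpha$ is strictly increasing in $i$. In particular $\alpha_f = \alpha_{n_k} \ge \alpha_{n_1} = \lambda^{-s_n/p_n}\alpha$, so it suffices to show $s_n/p_n < 4.5$.

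Suppose for contradiction that $s_n \ge 4.5\,p_n$. The general lower bound in \Cref{claim:lowerbound_pn}, together with \eqref{condition:pnlarge}, gives $p_n > 1/3$, and hence $s_n > 1.5$. I would then derive a conflicting upper bound on $s_n$ through an efficiency argument. By \Cref{lem:delay_no_idle}, no machine is idle during $[r_n, s_n]$, so $\mathrm{idle}(0, s_n) = \mathrm{idle}(0, r_n)$, and the identity $m \cdot s_n = P_\sigma(0, s_n) + W_\sigma(0, s_n) + \mathrm{idle}(0, r_n)$ combined with $P_\sigma(0, s_n) \le P_\sigma - p_n \le m - p_n$ and $W_\sigma(0, s_n) \le (m-1)\alpha(m - p_n)$ from \Cref{lem:boundwaste} yields
\[
s_n \;\le\; \Bigl(1 - \tfrac{p_n}{m}\Bigr)\bigl(1 + (m-1)\alpha\bigr) + \tfrac{1}{m}\,\mathrm{idle}(0, r_n).
\]

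To control the idle term, I would apply \Cref{lem:leftover} on the interval $[0, r_n]$ and rearrange to get $\mathrm{idle}(0, r_n) \le 5m r_n/4 - P_\pi(0, r_n)$. The bin-packing-style credit $P_\pi \ge \sum_{j \in \mathcal{J}_C} p_j \ge (m + k)\,p_n$ (using that every critical job has processing time at least $\min(p_n, \gamma') = p_n$), combined with $P_\pi(0, r_n) \ge P_\pi - m(1 - r_n)$ and $r_n \le 1 - p_n$, gives an explicit upper bound on $s_n$ as a function of $p_n$, $\alpha$, $k$, and $m$. Solving the resulting inequality against $s_n \ge 4.5\, p_n$ yields a bound on $p_n$ that contradicts \Cref{claim:lowerbound_pn} under a new parameter condition; I would list this new condition alongside \eqref{condition:lockingchain}–\eqref{condition:pnlarge} and defer its verification under $\alpha = 1/(4m^2)$, $\lambda = 4^{25/6}$, $\gamma = 1/2 - 1/(4^{20}m^2)$ to \Cref{sec:bound}.

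I expect the main obstacle to be tightness. The naive delay bound $s_n \le 2 - p_n + (m-1)\alpha$ alone only rules out $s_n/p_n \ge 5.5$, not $s_n/p_n \ge 4.5$; the $(m + k)p_n$ credit on $P_\pi$ has to be threaded through the left-over coupling carefully to close the remaining gap. The constant $4.5$ is in fact tight to the scenario $s_n \approx 1 + \gamma \approx 1.5$ and $p_n \approx 1/3$, so any slack in the bound would fail. Additional care is likely required for the degenerate case $k = 1$, where the chain contributes only a single unit of bin-packing credit; there I would additionally exploit the fact that $n$ being \emph{pushed} at $s_n$ forces all $m$ jobs $j_1, \dots, j_m$ to be processing strictly past $s_n$, sharpening $P_\sigma(s_n, \infty) \ge p_n$ to $p_n + \sum_i (C_{j_i} - s_n)$, which recovers the missing workload credit.
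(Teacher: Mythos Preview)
Your reduction to showing $s_n/p_n < 4.5$ via $\alpha_f \ge \alpha_{n_1} = \lambda^{-s_n/p_n}\alpha$ is correct and matches the paper. But you miss the one-line argument the paper actually uses to finish: by minimality of the counterexample $\mathcal{J}$, we have $C_{n-1} \le 1+\gamma$ (otherwise delete the job with the latest start time among those with $C_j > 1+\gamma$ to obtain a strictly smaller counterexample). Since job $n$ either starts at $r_n$, or when a lock expires (so $s_n = s_j + \alpha_j p_j \le C_j \le C_{n-1}$), or when a busy machine frees up (so $s_n = C_j \le C_{n-1}$), it follows that $s_n \le 1+\gamma$. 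Combined with $p_n > 1/3$ from \eqref{condition:pnlarge} and $\gamma \le 1/2$, this gives $s_n/p_n < 3(1+\gamma) \le 4.5$ directly, with no new conditions.

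Your proposed efficiency route has a genuine gap. Carrying out your bounds with $P_\sigma(0,s_n)\le m-p_n$, $W_\sigma(0,s_n)\le (m-1)\alpha(m-p_n)$, $\mathrm{idle}(0,r_n)\le (5m/4)r_n - \bigl((m+k)p_n - m(1-r_n)\bigr)$, $k\ge 1$, and $r_n\le 1-p_n$ gives, for small $(m-1)\alpha$, roughly $s_n \le \tfrac{9}{4} - \tfrac{5}{4}p_n$. Under $s_n \ge 4.5\,p_n$ this only forces $p_n \le 9/23 \approx 0.391$, which does \emph{not} contradict $p_n > 1/3$. The ``new parameter condition'' you would need --- that the lower bound in \Cref{claim:lowerbound_pn} exceed $9/23$ --- is infeasible with the paper's parameters: with $\gamma < 1/2$ and $\alpha = 1/(4m^2)$ that lower bound tends to $1/3$ as $m\to\infty$. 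The $k=1$ sharpening you propose does not rescue this, since in the worst case all $m$ jobs $j_i$ can finish exactly at $s_n$, so $\sum_i (C_{j_i}-s_n)=0$. (Your side claim $\min(p_n,\gamma')=p_n$ is also unjustified; $p_n$ can exceed $\gamma'\ge 0.4$.) The fix is not a sharper efficiency estimate but the minimality argument above.
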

\begin{proof}
    First, we prove $s_n \leq 1 + \gamma$. If $s_n = r_n$, we have $s_n = r_n < 1$. Otherwise, either it is locked, i.e., $\exists j, s_n = s_j + \alpha_j p_j \leq C_j \leq C_{n-1}$, or it is pushed, i.e., $\exists j, s_n = C_j \leq C_{n-1}$. Finally, we conclude $s_n \leq 1+ \gamma$ by showing $C_{n-1} \leq 1+\gamma$. The fact comes from that $\mathcal{J}$ is minimum.
    If there is a job set $\mathcal{J}=\{n-k,\dots,n\}(k\ge 1)$ with $\forall j \in \mathcal{J}, C_j>1+\gamma$, then delete the job with the latest start time in $\mathcal{J}$ from $\mathcal{J}$. Denote the new job set as $\mathcal{J}'$. We have $\sigma(\mathcal{J}')=\sigma(\mathcal{J}) > 1+\gamma$ and $\pi(\mathcal{J}')\le \pi(\mathcal{J})=1$. It contradicts the notion that $\mathcal{J}$ is the minimum. Therefore, we have $s_n \leq 1 + \gamma$.
    
    Then, by \Cref{claim:lowerbound_pn} and \eqref{condition:pnlarge}, we have $p_n > \frac{1}{3}$. It implies that $\frac{s_n}{p_n}< 3+3\gamma\le4.5$. Thus, 
    \[
    \alpha_f=\max_{j\in S}\alpha_j=\max_{j\in S}\lambda^{-\frac{s_j}{p_j}}\alpha\ge\lambda^{-\frac{s_n}{p_n}}\alpha>\lambda^{-4.5}\alpha.
    \]
\end{proof}
Hereafter, we show that with a proper choice of $\lambda$, we can eliminate the possibility of all critical job pairs consisting of early critical jobs. 
\begin{lemma}
\label{claim:sasbrnk}
Fix $\lambda^{0.24} = 4$. There exists at least one late critical pair, i.e., at least one critical job pair ($a$ and $b$) satisfies that either $s_a\ge r_{n}$ or $s_b\ge r_{n}$.
\end{lemma}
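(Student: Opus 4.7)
The plan is to proceed by contradiction: suppose that every critical pair $(a_i,b_i)$ consists of two early critical jobs, so the $2k$ early critical pair jobs partition into $k$ pairs in $\pi$, each occupying a single machine with $p_{a_i}+p_{b_i}\le 1$. Combined with the lower bound $p_j>\gamma'$ from Lemma~\ref{lem:critical_large}, this yields the two-sided estimate $\gamma'<p_j<1-\gamma'$ for every early critical pair job.

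The first structural step is to establish a narrow window containing all $2k$ start times. Since each such job is processing in $\sigma$ at time $s_{n_k}$, we have $p_j>s_{n_k}-s_j$, which combined with the upper bound $p_j<1-\gamma'$ gives $s_j>s_{n_k}+\gamma'-1$. Together with $s_j<r_n$ (the job is early) and Corollary~\ref{cor:snk} ($s_{n_k}>r_n+\gamma'$), all $2k$ start times lie in an interval of width at most $1-2\gamma'$.

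I then exploit dynamic locking to show this window is too small to accommodate all $2k$ jobs. When a job $j$ starts, every other machine is locked for $\alpha_j p_j$ time units, so ordering the start times $s_{(1)}<\cdots<s_{(2k)}$ yields $s_{(i+1)}-s_{(i)}\ge\alpha_{(i)}p_{(i)}=\lambda^{-s_{(i)}/p_{(i)}}\,\alpha\,p_{(i)}$. The bounds $s_{(i)}<r_n$ and $p_{(i)}>\gamma'$ imply $\alpha_{(i)}p_{(i)}>\lambda^{-r_n/\gamma'}\alpha\gamma'$, so summing the $2k-1$ consecutive gaps and comparing with the window width produces the necessary condition
\[
(2k-1)\,\lambda^{-r_n/\gamma'}\,\alpha\,\gamma'\;\le\;1-2\gamma'.
\]

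To close the argument I expand $1-2\gamma'=(1-2\gamma)+2\alpha_f\sum_{j=2}^{k}p_{n_j}$ and use the dynamic-locking bound $\alpha_f\le\lambda^{-c_f}\alpha$, where $c_f\ge s_{n_k}\ge r_n+\gamma'$ follows from $p_{n_k}\le 1$. This reduces the displayed inequality to an explicit condition on $\lambda$. The parameter choice $\lambda^{0.24}=4$, together with the eventual settings $\alpha=1/(4m^2)$ and $\gamma=1/2-1/(4^{20}m^2)$ fixed in the proof of Theorem~\ref{thm:ratio_m>3}, is calibrated so that this condition is violated for every admissible $k\ge 1$, delivering the required contradiction. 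The principal obstacle is this numerical calibration: both sides are small, and the exponent $0.24$ is chosen precisely so that, using the sharp upper bound on $r_n/\gamma'$ (coming from $p_n>1/3$ via Lemma~\ref{claim:lowerbound_pn} and $\gamma'\ge 0.4$ via~\eqref{condition:gamma'}) together with the lower bound on $c_f$, the dynamic-locking gap strictly dominates the window for every $k$.
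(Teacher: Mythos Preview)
Your structural argument is sound and in fact lands on the same core inequality as the paper: both approaches reduce the all-early-pairs scenario to violating $(2k-1)\alpha_s\gamma'\le 1-2\gamma'$, where $\alpha_s$ lower-bounds the locking parameters of the $2k$ early-pair jobs. Your window-plus-gaps derivation (each $p_j<1-\gamma'$ forces $s_j>s_{n_k}-1+\gamma'$, so all $2k$ start times sit in a window of length $<1-2\gamma'$) is a clean alternative to the paper's total-processing-time count, and your $k=1$ case closes correctly.

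The genuine gap is your upper bound on $\alpha_f$ for $k\ge 2$. Using only $p_{n_k}\le 1$ to obtain $c_f\ge s_{n_k}>r_n+\gamma'$ is too crude: the exponent difference you need, namely $(r_n+\gamma')-r_n/\gamma'=r_n(1-1/\gamma')+\gamma'$, becomes \emph{negative} as soon as $r_n>\gamma'^2/(1-\gamma')$. Concretely, take $k=2$, $r_n=0.5$, $\gamma'\approx 0.48$ (fully consistent with $\alpha=1/(4m^2)$, since then $\gamma-\gamma'=\alpha_f p_{n_2}\le\alpha\le 1/64$). Your displayed inequality then requires, after cancelling $\alpha$, roughly $3\gamma'/2>\lambda^{\,r_n/\gamma'-(r_n+\gamma')}\approx\lambda^{0.06}=4^{0.25}=\sqrt{2}$, which fails ($0.72<1.41$). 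So the calibration $\lambda^{0.24}=4$ does not rescue this bound; it was designed for a different exponent comparison.

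What the paper uses instead is the sharper estimate $p_{n_i}\le p_{n_k}<p_n+(\gamma-\gamma')\le 1-r_n+\gamma-\gamma'$, giving $c_f>(r_n+\gamma')/(1-r_n+\gamma-\gamma')$. The denominator here shrinks as $r_n$ grows, and the paper shows by explicit minimization over $r_n$ that the resulting exponent gap $f(r_n)=(r_n+\gamma')/(1-r_n+\gamma-\gamma')-r_n/\gamma'$ stays above $0.24$ uniformly (under Condition~4); this is precisely the origin of the choice $\lambda^{0.24}=4$ and yields $\alpha_s>4\alpha_f$. The paper then reparametrizes both sides of the inequality in terms of $\alpha_f$, verifies monotonicity in $k$, and closes the argument via the \emph{lower} bound $\alpha_f>\lambda^{-4.5}\alpha$ from Proposition~\ref{claim:aflow} together with \eqref{condition:2gamma1}. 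Your skeleton is right, but you need the tighter $p_{n_k}$ bound to make the dynamic-locking comparison work for $k\ge 2$.
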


\begin{proof}
Assume that every critical job pair $a$ and $b$ satisfies that $s_a< r_{n}$ and $s_b< r_{n}$. In the analysis below, we can conclude that $\pi(\mathcal{J})>1$, which leads to a contradiction. 

By definition, these $2k$ critical jobs are all in $\{j_1,\dots,j_m\}$. A straightforward observation is that $2k\leq m$. Identify them with integer $l_1,\dots,l_{2k}$ such that $s_{l_1}<\dots<s_{l_{2k}}$. Define $\alpha_s\triangleq \min_{j\in \{l_1,\dots,l_{2k}\}}\alpha_j$. The total processing time of these jobs is at least 
\[2k(s_{n_k}-s_{l_{2k}})+(2k-1)\alpha_sp_{l_{2k-1}}+\dots+\alpha_sp_{l_1}>2k\gamma'+k(2k-1)\alpha_s\gamma'.\]
Since these jobs are processed on $k$ machines in $\pi$, the total processing time of these jobs is at most $k$. Otherwise, $\pi(\mathcal{J})>1$. However, we can get a lower bound of $\alpha_s$ as follows.

By the definition of $\gamma'$ and \eqref{condition:gamma'}, we have $\gamma' \ge \gamma-(m-1)\alpha\ge 0.4$. For job $j\in \{l_1,\dots,l_{2k}\}$, we have $s_j < r_{n}$ and $p_j > \gamma'$. Then, 
\begin{equation}
    \frac{s_j}{p_j}<\frac{r_{n}}{\gamma'}.
    \label{sj/pj}
\end{equation}
For job $n_i\in \{n_1,\dots,n_k\}$, by \Cref{cor:snk}, $s_{n_i}\geq s_{n_k} > \gamma'+ r_{n}$, and $p_i \leq p_{n_k} < p_n + \gamma-\gamma'\le1-r_n+\gamma-\gamma'$. Then,
\begin{equation}
    \frac{s_{n_i}}{p_{n_i}}\ge\frac{s_{n_k}}{p_{n_k}}>\frac{\gamma'+ r_{n}}{1-r_n+\gamma-\gamma'}.
    \label{sni/pni}
\end{equation}
By definition of $\alpha_s$ and $\alpha_f$, with \Cref{sj/pj} and \Cref{sni/pni}, we obtain that 
\begin{align*}
    \frac{\alpha_s}{\alpha_f}>\lambda^{-(\frac{r_{n}}{\gamma'}-\frac{\gamma'+ r_{n}}{1-r_n+\gamma-\gamma'})}.
\end{align*}
Let $f(r_n)\triangleq-\frac{r_{n}}{\gamma'}+\frac{\gamma'+ r_{n}}{1-r_n+\gamma-\gamma'}$. Then, 
\[\frac{\partial f}{\partial r_n}=-\frac{1}{\gamma'}+\frac{1+\gamma}{(1-r_n+\gamma-\gamma')^2}.\]
We get $f'(r_n)<0$ when $r_n\in[0,1+\gamma-\gamma'-\sqrt{\gamma'(1+\gamma)})$ and $f'(r_n)>0$ when $r_n\in(1+\gamma-\gamma'-\sqrt{\gamma'(1+\gamma)},1)$. Therefore, 
\[\frac{\alpha_s}{\alpha_f}>\lambda^{f(r_n)}\geq \lambda^{f(1+\gamma-\gamma'-\sqrt{\gamma'(1+\gamma)})}=\lambda^{2\sqrt{\frac{1+\gamma}{\gamma'}}-\frac{1+\gamma}{\gamma'}}.\]
Conditioned on 
\begin{empheq}[box=\shadowbox*]{equation}
    \gamma\ge\frac{1+3.5(m-1)\alpha}{2.5},
    \tag{Condition 4}
\end{empheq}
we can get $\frac{1+\gamma}{\gamma'}\leq \frac{1+\gamma}{\gamma-(m-1)\alpha}\leq3.5$ and 
$\frac{1+\gamma}{\gamma'}\geq\frac{1}{\gamma'}+1>3$. Let $h(x)\triangleq2\sqrt{x}-x$. We know $h(x)$ strictly decreases in $[1,+\infty)$. Then we have that $2\sqrt{\frac{1+\gamma}{\gamma'}}-\frac{1+\gamma}{\gamma'}\geq 2\sqrt{3.5}-3.5>0.24$. Therefore, $\frac{\alpha_s}{\alpha_f}>\lambda^{0.24}=4$. Thus,
\begin{align*}
2k\gamma'+k(2k-1)\alpha_s\gamma'&=k(2+(2k-1)\alpha_s)(\gamma-\alpha_f\sum_{j = 2}^{k}p_{n_j})\\
&\ge k(2+(2k-1)\alpha_s)(\gamma-(k-1)\alpha_f)\\
&>k(2+4(2k-1)\alpha_f)(\gamma-(k-1)\alpha_f).
\end{align*}
By Pigeonhole Principle, there is one machine in $\pi$ have the processing time at least $(2+4(2k-1)\alpha_f)(\gamma-(k-1)\alpha_f)$. Let $g(k) \triangleq (2+4(2k-1)\alpha_f)(\gamma-(k-1)\alpha_f)$, as a lower bound of $\pi$. In definition, $1=\pi\geq g(k)$. However, we have 
\[
\frac{\partial g}{\partial k}=(8\gamma-4(4k-3)\alpha_f-2)\alpha_f\ge 2(4\gamma-2(2m-3)\alpha-1)\alpha_f \geq 0,
\]
with one more condition 
\begin{empheq}[box=\shadowbox*]{equation}
    \label{condition:2gamma2}
        4\gamma-2(2m-3)\alpha-1\ge 0. 
        \tag{Condition 5}
\end{empheq}
Since we fix $\lambda^{0.24} = 4$, by \Cref{claim:aflow}, we obtain $\alpha_f>4^{-17.75}\alpha$. Then we know that 
\[
g(k) \geq g(1)=(2+4\alpha_f)\gamma>(2+4^{-17.75}\alpha)\gamma>1,
\]
if we require the condition
\begin{empheq}[box=\shadowbox*]{equation}
    \label{condition:2gamma1}
        (2+4^{-17.75}\alpha)\gamma>1.
        \tag{Condition 6}
\end{empheq}
This is a contradiction.
\end{proof} 

From \Cref{claim:sasbrnk}, we know that there exists at least one pair of critical jobs $a$ and $b$ such that either $s_a\ge r_{n}$ or $s_b\ge r_{n}$. Consider such a pair of critical jobs $a$ and $b$, called a \textbf{Late Critical Pair}. 

\begin{definition}[Idle Period]
    A period is idle if at least one machine remains idle. $[\ts,\theta^+]$ is defined as the final idle period before $s_n$. If there is no idle period before $s_n$, $\ts=\tf=0$.
\end{definition}
By definition, we have a basic property for this idle period.
\begin{proposition}
    \label{lem:idle_basic}
    $\theta^+ \leq \min_{j\in S} r_j$.
\end{proposition}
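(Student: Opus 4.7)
My plan is to establish two facts and combine them. First, I would show that $\theta^+ \le s_{n_k}$ using the structure of the last locking chain. Second, I would show that for every $j \in S$, $r_j \ge \theta^+$. Together these yield the proposition.

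For the first fact, I invoke \Cref{claim:lockingchain_basic}: for each $2 \le i \le k$, the job $n_i$ starts at $s_{n_i}$ and, by the dynamic locking step, locks every other machine until time $s_{n_i}+\alpha_{n_i}p_{n_i}=s_{n_{i-1}}$. Thus during each window $[s_{n_i}, s_{n_{i-1}})$ every machine is either busy or locked, and in particular no machine is idle. Concatenating these windows for $i=2,\ldots,k$ shows that $[s_{n_k}, s_n)$ contains no idle moment, so the last idle period before $s_n$ must terminate no later than $s_{n_k}$; that is, $\theta^+ \le s_{n_k}$.

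For the second fact, I would use that \GSLEEPY instantaneously assigns the longest pending job to any idle machine (and locks all other machines once it does so). Therefore during $(\theta^-, \theta^+)$ no pending job can exist, because otherwise the algorithm would act and the idle period would terminate strictly before $\theta^+$. In particular no release time $r_j$ lies in the open interval $(\theta^-, \theta^+)$, so any job $j$ with $r_j<\theta^+$ must in fact satisfy $r_j\le\theta^-$. Now fix $j\in S$. By the first fact, $s_j\ge s_{n_k}\ge\theta^+$, and in the nondegenerate case $\theta^-<\theta^+$ we obtain $s_j>\theta^-$. If also $r_j\le\theta^-$, then $r_j<s_j$, and \Cref{lem:delay_no_idle} tells us that the open interval $(r_j,s_j)\supseteq(\theta^-,\theta^+)$ contains no idle machine, contradicting the very definition of $(\theta^-,\theta^+)$. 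Hence $r_j\ge\theta^+$ for every $j\in S$, which yields the claim.

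The only subtlety I foresee is the treatment of the boundary, namely the possibilities $s_j=\theta^+$ or $r_j=\theta^-$, where the inclusion $(r_j,s_j)\supseteq(\theta^-,\theta^+)$ degenerates. These can be pinned down with an additional case check: $s_j=\theta^+$ forces $j=n_k$ and $s_{n_k}=\theta^+$, and then \Cref{lem:delay_no_idle} applied to $n_k$ still covers $(\theta^-,\theta^+)$ provided $r_{n_k}<s_{n_k}$ (the remaining sub-case $r_{n_k}=s_{n_k}$ directly gives $r_{n_k}=\theta^+$, which is what we want). The fully degenerate situation $\theta^-=\theta^+=0$ is vacuous. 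I expect these endpoint verifications, rather than the main argument, to be the only place that needs extra care.
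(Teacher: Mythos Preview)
Your argument is correct and follows essentially the same approach as the paper. The paper compresses the whole thing into two sentences: ``By definition of the last locking chain, there is no idle segment during $[\min_{j\in S}\{r_j\}, s_n]$. Therefore, $\theta^+ \leq \min_{j\in S}\{r_j\}$.'' This is exactly your two facts combined: no idle time on $[s_{n_k},s_n)$ from the chain structure, and no idle time on $(r_{j^*},s_{j^*})$ for $j^*=\arg\min_{j\in S} r_j$ from \Cref{lem:delay_no_idle}, with $s_{j^*}\ge s_{n_k}$ gluing the two intervals. Your detour through ``no pending job in $(\theta^-,\theta^+)$, hence no $r_j$ there'' is unnecessary---once you know $s_j\ge\theta^+$ for $j\in S$, any $r_j<\theta^+$ already gives an interval $(r_j,s_j)$ meeting the idle period, and \Cref{lem:delay_no_idle} yields the contradiction directly without first pushing $r_j$ down to $\theta^-$.
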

\begin{proof}
By definition of the last locking chain, there is no idle segment during $[\min_{j\in S}\{r_j\}, s_n]$. Therefore, $\tf \leq \min_{j\in S}\{r_j\}$.
\end{proof}

We will discuss the following two cases of the fixed late critical pair. 
\begin{enumerate}
    \item $\min\{s^\pi_a,s^\pi_b\} < \theta^+$ in \Cref{sec:case1}.
    \item $\min\{s^\pi_a,s^\pi_b\} \geq \theta^+$ in \Cref{sec:case2}.
\end{enumerate}

\subsection{Late Critical Pair: $\min \{s^\pi_a,s^\pi_b\} < \tf$}
\label{sec:case1}
In this section, we derive a contradiction that $\sigma$ processes a workload exceeding $m$ to show that this situation cannot occur. First, we eliminate cases where jobs $a$ and $b$ are late critical jobs. 
\begin{proposition}\label{claim:pre_case1}
    Let $c$ be $a$ if $s^\pi_a<s^\pi_b$ and $b$ otherwise. It follows that $s_c < r_{n}$, leading to $p_c > \gamma'$.  
\end{proposition}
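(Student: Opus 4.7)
The plan is to argue by contradiction. By the choice of $c$, $s^\pi_c = \min\{s^\pi_a, s^\pi_b\} < \tf$; I suppose further that $s_c \ge r_n$, i.e., $c$ is a late critical job, and aim for a conflict with the structure of the final idle period $[\ts, \tf]$. Three immediate bounds drive the argument: $r_c \le s^\pi_c < \tf$ (a job cannot start before its release), $\tf \le r_n$ by \Cref{lem:idle_basic} applied to $n = n_1 \in S$, and $\ts < \tf$ (since $\tf > s^\pi_c \ge 0$ an actual idle period exists, while the convention $\ts = \tf = 0$ is reserved for the no-idle-period case).

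The heart of the proof is a structural claim: no job remains pending on any positive-length sub-interval of the open window $(\ts, \tf)$. At every moment in this window at least one machine is idle, and the LPT step of \GSLEEPY instantly assigns the longest pending job to any idle machine, so the state ``idle machine coexists with a pending job'' cannot persist. More concretely, every job pending at $\ts$ must be scheduled at $\ts$---otherwise the idle machines present at $\ts$ would all be filled by longer pending jobs, ending the idle period already at $\ts$ and contradicting $\ts < \tf$---and any job released at a time $t \in (\ts, \tf)$ starts at time $t$ by the same reasoning.

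I then combine these facts with the hypothesis. Under $s_c \ge r_n \ge \tf$, job $c$ is pending throughout $[r_c, s_c) \supseteq [\max(\ts, r_c), \tf)$, an interval of positive length since $r_c < \tf$ and $\ts < \tf$. The structural claim forces $s_c \le \max(\ts, r_c) < \tf \le r_n$, contradicting $s_c \ge r_n$. Hence $s_c < r_n$, making $c$ an early critical job, and \Cref{lem:critical_large} delivers $p_c > \gamma'$. The main obstacle is pinning down the structural claim cleanly in the presence of simultaneous events (tied release times, or multiple machines becoming idle at once), but any such configuration can only schedule $c$ earlier or close the idle period sooner than claimed, which only reinforces the contradiction.
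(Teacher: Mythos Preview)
Your proof is correct and follows essentially the same route as the paper. Both argue by contradiction: assuming $s_c \ge r_n$, one has $r_c \le s^\pi_c < \tf \le r_n \le s_c$, so $c$ is pending on a nontrivial portion of the idle period $[\ts,\tf]$, contradicting the algorithm's invariant that an idle machine and a pending job cannot coexist (cf.\ \Cref{lem:delay_no_idle}); the paper compresses this into a single line, while you spell out the structural claim about $(\ts,\tf)$ and the edge cases explicitly.
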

\begin{proof}
    Without loss of generality, assume $s^\pi_a<s^\pi_b$. If $s_a\geq r_{n}$, then $s_a\ge r_{n}\ge \tf>s^\pi_a\geq r_a$, which contradicts that $[\ts,\tf]$ is an idle period.
\end{proof}

Finally, additional restrictions on $\alpha$ and $\gamma$ are imposed to demonstrate a contradiction, reinforcing the infeasibility of this scenario.
\begin{lemma}
\label{claim:sasbearly}
The first case, i.e., $\min \{s^\pi_a,s^\pi_b\} < \tf$, is impossible.
\end{lemma}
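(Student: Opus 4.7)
The plan is to derive a contradiction by proving that under the assumption $s^\pi_a < \theta^+$ (without loss of generality) the algorithm's total processed workload $P_\sigma$ strictly exceeds $m$, which is impossible because $P_\sigma = P_\pi \le m$ since $\sigma$ and $\pi$ process the same job set and each of the $m$ machines in $\pi$ runs for at most $\pi = 1$ unit of time.

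Without loss of generality assume $s^\pi_a < s^\pi_b$, so by \Cref{claim:pre_case1} we have $c = a$, $s_a < r_n$, and $p_a > \gamma'$, and the case assumption gives $r_a \le s^\pi_a < \theta^+$. First I would argue that $s_a < \theta^+$: during the idle period $[\theta^-,\theta^+)$ at least one machine is idle, and \GSLEEPY immediately dispatches the longest pending job on any idle machine, so no job can remain pending throughout this interval. Because $r_a < \theta^+$, this forces $a$ to have been started in $\sigma$ strictly before $\theta^+$, so $a$ is being processed at time $\theta^+$ in $\sigma$, and the left-over lemma will be able to transfer part of this work to $P_\pi(0,\theta^+)$.

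Next I would lower bound $P_\sigma$ by decomposing $[0,C_n)$ at $\theta^+$ and $s_n$. On $[\theta^+,s_n)$ every machine is non-idle (by the choice of $[\theta^-,\theta^+]$ as the final idle period before $s_n$, together with \Cref{lem:delay_no_idle}), so $P_\sigma(\theta^+,s_n) \ge m(s_n-\theta^+) - (m-1)\alpha(m-p_n)$, where the loss term bounds the total locked-not-busy time via \Cref{lem:boundwaste} and the bound $\hat P_\sigma(0,s_n) \le m - p_n$. On $[0,\theta^+)$ the left-over lemma (\Cref{lem:leftover}) gives $P_\sigma(0,\theta^+) \ge P_\pi(0,\theta^+) - \tfrac{m\theta^+}{4} - W_\sigma(0,\theta^+)$, and adding $p_n$ for job $n$'s processing on $[s_n,C_n)$ yields a master inequality of the form
\[
P_\sigma \;\ge\; P_\pi(0,\theta^+) + m(s_n - \theta^+) + p_n - \frac{m\theta^+}{4} - (m-1)\alpha(m-p_n).
\]
Then I would lower bound $P_\pi(0,\theta^+)$ using that in $\pi$ the job $a$ starts at $s^\pi_a < \theta^+$ with size $p_a > \gamma' \ge \gamma - (m-1)\alpha$, so it contributes at least $\min(p_a,\theta^+-s^\pi_a)$ to $P_\pi(0,\theta^+)$. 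Combining this with $s_n > \theta^+ + \gamma$ (from \Cref{lem:last_delay} and $\theta^+ \le r_n$), the bound $p_n > 1/3$ from \Cref{claim:lowerbound_pn} via \eqref{condition:pnlarge}, and the parameter constraints established in \Cref{sec:bound}, the right-hand side should exceed $m$, giving $P_\sigma > m$ and hence the desired contradiction.

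The step I expect to be the main obstacle is obtaining a sufficiently strong lower bound on $P_\pi(0,\theta^+)$: the contribution $\min(p_a,\theta^+-s^\pi_a)$ is weak when $\theta^+ - s^\pi_a$ is small, i.e., when $a$ has barely started in $\pi$ before $\theta^+$. To overcome this I plan to shift the split point from $\theta^+$ to the completion time $s^\pi_a + p_a$ of $a$ in $\pi$, so that the full $p_a > \gamma'$ enters $P_\pi$ on the left window, and pay for the slightly larger interval on the $\sigma$ side using the same no-idle bookkeeping together with \Cref{lem:boundwaste} to control the additional locked time. A careful case split on whether $s^\pi_a + p_a$ lies before or after $s_n$, together with the numerical conditions on $\alpha$, $\lambda$, and $\gamma$ fixed in \Cref{sec:bound}, should then close the inequality and yield $P_\sigma > m$, completing the contradiction.
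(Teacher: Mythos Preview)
Your overall target ($P_\sigma>m$) matches the paper, but the route you propose does not close the inequality: the left-over lemma together with the single-machine bound $P_\pi(0,t)\ge p_a$ recovers only about $p_a\approx\gamma'$ units of extra work, whereas you need roughly $m\gamma'$. Concretely, taking your split at $t=s^\pi_a+p_a$ (and using $t\le 1-p_b\le 1-p_n$, $s_n>1+\gamma-p_n$, $p_a>\gamma'$) your master inequality reduces to
\[
\gamma' \;+\; m\gamma \;+\; \Bigl(1+\tfrac{m}{4}\Bigr)p_n \;-\;\tfrac{m}{4}\;-\;(m-1)\alpha(m-p_n)\;>\;m,
\]
which with $\gamma,\gamma'\approx\tfrac12$ and $\alpha\approx 0$ forces $p_n>\dfrac{3m-2}{m+4}$, i.e.\ $p_n>1$ already for $m=3$. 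Shifting the split point cannot repair this: the loss is the factor $m$ in front of $\gamma'$, not the location of the cut.

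The ingredient you are missing is that the late pair $(a,b)$ forces job $a$ to be \emph{delayed in $\sigma$} by more than $\gamma'$. You never use that $b$ is a late critical job; from \Cref{claim:pre_case1} we get $s_a<r_n\le s_b$ and hence $p_b\ge p_n$, so in $\pi$ the packing constraint gives $s^\pi_a+p_a+p_n\le 1$. On the other hand $a$ is a critical job still running at time $s_{n_k}$ in $\sigma$, so $s_a+p_a>s_{n_k}>1+\gamma'-p_n$ by \Cref{cor:Cnk}. Subtracting yields $s_a-s^\pi_a>\gamma'$, hence $s_a-r_a>\gamma'$. This produces a \emph{second} non-idle window $[r_a,s_a)$ of length $>\gamma'$ on all $m$ machines, disjoint from $[r_n,s_n)$ (since $s_a<r_n$). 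Now \Cref{lem:delay_no_idle} and \Cref{lem:boundwaste} alone (no left-over lemma is needed here) give
\[
\hat P_\sigma(r_a,s_n)\;\ge\;\frac{m(\gamma+\gamma')}{1+(m-1)\alpha},
\]
and adding $p_n$ and invoking Conditions~7 and~8 yields $P_\sigma>m$. Your observation that $s_a<\theta^+$ is correct but is not the relevant fact; what matters is the quantitative delay $s_a-r_a>\gamma'$ coming from the $\pi$-packing of the pair, which converts one machine's worth of information ($p_a$) into $m$ machines' worth ($m\gamma'$).
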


\begin{proof}
Without loss of generality, assume $s^\pi_a< s^\pi_b$. By \Cref{claim:pre_case1}, $s_a< r_{n}\le s_b$, $p_a>\gamma'$ and $p_b\ge p_{n}$. Then we have that $1\ge s^\pi_a+p_a+p_b\ge s^\pi_a+p_a+p_{n}$. By \Cref{cor:Cnk}, we also have $s_a+p_a+p_{n}> s_{n_k}+p_{n}>1+\gamma'$. After connecting these two inequalities, we get $1-s^\pi_a\ge p_a+p_{n}>1+\gamma'-s_a$. So $s_a - r_a \geq s_a - s^\pi_a > \gamma'$. By the definition of idle, we know that there won't be idle machines during $[r_a,s_a)$ and $[r_{n},s_{n})$. In \Cref{lem:last_delay}, we prove $s_{n}-r_{n}>\gamma$. So the sum of the total waste of processing power and total processing time during $[r_a,s_{n})$ is greater than the sum of the total waste of processing power and total processing time during $[r_a,s_a)$ and $[r_{n},s_{n})$. Then we have that  $W_\sigma(r_a,s_{n})+P_\sigma(r_a,s_{n})\ge m(\gamma'+\gamma)$. Using \Cref{lem:boundwaste}, 
$
W_\sigma(r_a,s_{n})\le (m-1)\alpha\hat{P}_\sigma(r_a,s_{n}).
$
With these two inequalities above, we have that
\[
\hat{P}_\sigma(r_a,s_{n})\ge\frac{m(\gamma'+\gamma)}{1+(m-1)\alpha}.
\]
By the definition of $\hat{P}_\sigma(t_1,t_2)$, we have that $P_\sigma \geq \hat{P}_\sigma(r_a,s_{n})+\sum_{i=1}^k p_{n_i}$. Hence, we get 
\begin{align*}
  P_\sigma &\geq \hat{P}_\sigma(r_a,s_{n})+\sum_{i=1}^k p_{n_i} \geq \frac{m(\gamma'+\gamma)}{1+(m-1)\alpha}+\sum_{i=1}^k p_{n_i}\\
  &=\frac{m}{1+(m-1)\alpha}(2\gamma- \alpha_f\sum_{i=2}^k p_{n_i}) + \sum_{i=1}^k p_{n_i} \\
  &\ge\frac{2m\gamma}{1+(m-1)\alpha}+(1-\frac{m\alpha}{1+(m-1)\alpha})\sum_{i=2}^k p_{n_i}+p_n.
\end{align*}
Conditioned on 
\begin{empheq}[box=\shadowbox*]{equation}
    \label{condition:beforethetacondition1}
        1-\frac{m\alpha}{1+(m-1)\alpha}\ge0,
    \tag{Condition 7}
\end{empheq}
we have 
$P_\sigma \ge\frac{2m\gamma}{1+(m-1)\alpha}+p_n$.
By \Cref{claim:lowerbound_pn}, $p_n>\frac{m\gamma-\frac{m}{4}-m(m-1)\alpha}{\frac{3}{4}m-1-(m-1)\alpha}$. 
With another condition 
\begin{empheq}[box=\shadowbox*]{equation}
    \label{condition:beforethetacondition2}
    \frac{2\gamma}{1+(m-1)\alpha}+\frac{\gamma-\frac{1}{4}-(m-1)\alpha}{\frac{3}{4}m-1-(m-1)\alpha}\ge 1, 
    \tag{Condition 8}
\end{empheq}
we have $P_\sigma\geq\frac{2m\gamma}{1+(m-1)\alpha}+p_n>m(\frac{2\gamma}{1+(m-1)\alpha}+\frac{\gamma-\frac{1}{4}-(m-1)\alpha}{\frac{3}{4}m-1-(m-1)\alpha})\ge m$, which is a contradiction.
\end{proof} 
\subsection{Late Critical Pair: $\min \{s^\pi_a,s^\pi_b\} \ge \tf$}
\label{sec:case2}
In this section, we derive a contradiction that the overall workload of our algorithm (i.e., $p_\sigma$) is larger than that of OPT (i.e., $P_\pi$). First, we prove two technical lemmas concerning the efficiency of our algorithm to derive a lower bound of $P_{\sigma}- P_{\pi}$. Subsequently, we derive a contradiction by demonstrating that the lower bound is positive. 

% \section{Missing proof of \Cref{lem:latecase2main}}
% \label{sec:case2app}
% In this section, we derive a contradiction that the overall workload of our algorithm (i.e., $p_\sigma$) is larger than that of OPT (i.e., $P_\pi$). First, we use two technical lemmas \Cref{cor:leftover1} and \Cref{cor:leftover2}, concerning the efficiency of our algorithm, to derive a lower bound of $P_{\sigma}- P_{\pi}$ in \Cref{delta}. Subsequently, we will derive a contradiction by demonstrating that the lower bound is positive.
\begin{proposition}
\label{cor:leftover1}
$\forall t\in [\tf,s_n]$, $P_\pi(0,\tf) + m(t-\tf) -P_\sigma(0,t)\le\frac{m}{4}\tf + W_\sigma(0,t)$.
\end{proposition}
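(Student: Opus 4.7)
My plan is to split the interval at $\tf$ and treat $[0,\tf]$ and $[\tf,t]$ by different arguments. On the prefix, the natural move is a single application of the Left-over Lemma (\Cref{lem:leftover}), yielding $P_\pi(0,\tf) - P_\sigma(0,\tf) \le \frac{m}{4}\tf + W_\sigma(0,\tf)$. For the suffix I will deliberately avoid invoking \Cref{lem:leftover} on $[0,t]$, since doing so would introduce an extra $\frac{m}{4}(t-\tf)$ of slack on the right-hand side that the target inequality cannot absorb; instead I will use only the defining property of $\tf$.

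The key suffix claim I will establish is $P_\sigma(\tf,t) + W_\sigma(\tf,t) \ge m(t-\tf)$ for every $t \in [\tf,s_n]$. Because $[\ts,\tf]$ is the \emph{final} idle period before $s_n$, no machine is idle at any moment of $(\tf,s_n)$, so at each such moment every machine is busy or locked. Under the \GSLEEPY locking rule, a locked non-busy machine contributes to $\mathbbm{1}^W_\sigma$ exactly when some pending job is present. For $\tau \in [r_n,s_n)$ the job $n$ itself is pending by the definition of $s_n$; for $\tau \in (\tf,r_n)$ the same must hold, because otherwise a locked non-busy machine whose lock expires inside $(\tf,s_n)$ would have no pending job to start and would become idle, contradicting the maximality of the idle period ending at $\tf$. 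Hence pointwise $\mathbbm{1}^P_\sigma(M,\tau) + \mathbbm{1}^W_\sigma(M,\tau) \ge 1$ for every $\tau \in (\tf,t)$ and every machine $M$, and integrating delivers the claim.

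Combining the two bounds via
\[
P_\pi(0,\tf) + m(t-\tf) - P_\sigma(0,t) = \bigl[P_\pi(0,\tf) - P_\sigma(0,\tf)\bigr] + \bigl[m(t-\tf) - P_\sigma(\tf,t)\bigr],
\]
and then using $W_\sigma(0,\tf) + W_\sigma(\tf,t) = W_\sigma(0,t)$, yields exactly the stated inequality. The one delicate step is the pending-job argument on $(\tf,r_n)$ used to convert ``no idle'' into the pointwise bound $\mathbbm{1}^P_\sigma + \mathbbm{1}^W_\sigma \ge 1$; once that is in place, the rest is pure bookkeeping.
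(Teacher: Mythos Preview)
Your proposal is correct and matches the paper's proof exactly: apply the Left-over Lemma on $[0,\tf]$ to get $P_\pi(0,\tf)-P_\sigma(0,\tf)\le \tfrac{m}{4}\tf+W_\sigma(0,\tf)$, then use that no machine is idle on $(\tf,s_n)$ to obtain $m(t-\tf)-P_\sigma(\tf,t)\le W_\sigma(\tf,t)$, and add the two inequalities. The paper asserts the second inequality directly from the definition of the last idle period, without the additional pending-job justification you supply.
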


\begin{proof}
Using \Cref{lem:leftover} over time period $(0,\tf)$, we have that 
\begin{align*}
P_\pi(0,\tf)-P_\sigma(0,\tf) \le \frac{m}{4}\tf + W_\sigma(0,\tf).
\end{align*}
By the definition of the last idle period, there is no idle time during $(\tf,s_n)$. So for $t\in [\tf,s_n]$, $ m(t-\tf)-P_\sigma(\tf,t) \le W_\sigma(\tf,t)$. We can get $P_\pi(0,\tf) + m(t-\tf)-P_\sigma(0,t)\le \frac{m}{4}\tf+W_\sigma(0,t)$ after summing these two inequalities above.    
\end{proof}

\begin{corollary}\label{cor:leftover2}
$\forall t\in [\tf,s_n]$, $P_\pi(0,\tf) + m(t-\tf) -\hat{P}_\sigma(0,t)\le \frac{m}{4}\tf + (m-1) m \alpha t$.
\end{corollary}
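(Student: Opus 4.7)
The plan is to derive \Cref{cor:leftover2} from Proposition \ref{cor:leftover1} by two replacements: first tighten the left-hand side by promoting $P_\sigma(0,t)$ to the larger $\hat{P}_\sigma(0,t)$, and then bound the residual $W_\sigma(0,t)$ on the right-hand side by a clean $t$-linear expression using \Cref{lem:boundwaste} together with the trivial capacity bound $P_\sigma(0,t)\le mt$.

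Concretely, since $\hat{P}_\sigma(0,t)\ge P_\sigma(0,t)$ by \Cref{def:WPPhat}, I subtract the nonnegative quantity $\hat{P}_\sigma(0,t)-P_\sigma(0,t)$ from both sides of Proposition \ref{cor:leftover1}, which yields
\[
P_\pi(0,\tf)+m(t-\tf)-\hat{P}_\sigma(0,t)\le \tfrac{m}{4}\tf+W_\sigma(0,t)-\bigl(\hat{P}_\sigma(0,t)-P_\sigma(0,t)\bigr).
\]
It thus suffices to show that $W_\sigma(0,t)-(\hat{P}_\sigma(0,t)-P_\sigma(0,t))\le (m-1)m\alpha t$.

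For that, I apply \Cref{lem:boundwaste} with $t_1=0,t_2=t$ to get $W_\sigma(0,t)\le(m-1)\alpha\,\hat{P}_\sigma(0,t)$, and then split $\hat{P}_\sigma=P_\sigma+(\hat{P}_\sigma-P_\sigma)$ to obtain
\[
W_\sigma(0,t)-\bigl(\hat{P}_\sigma-P_\sigma\bigr)\le (m-1)\alpha\,P_\sigma(0,t)+\bigl[(m-1)\alpha-1\bigr]\bigl(\hat{P}_\sigma-P_\sigma\bigr).
\]
The first summand is at most $(m-1)\alpha\cdot mt=(m-1)m\alpha t$, since each of the $m$ machines processes at most $t$ units of work in $[0,t)$. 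For the second summand, \eqref{condition:lockingchain} (i.e.\ $\gamma/\alpha>m$) combined with $\gamma<1/2$ forces $(m-1)\alpha<1$, so its coefficient is negative; together with $\hat{P}_\sigma-P_\sigma\ge 0$, this summand is nonpositive and can be dropped.

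The one delicate point is verifying $(m-1)\alpha<1$, which would fail for large $m$ under only the loose constraint $\alpha\le\gamma$ but is secured by \eqref{condition:lockingchain}; without it, the unfinished-tail excess $\hat{P}_\sigma-P_\sigma$ would be amplified rather than absorbed. Once that sign check is in place, the remainder is a mechanical combination of the three ingredients—the left-over lemma \Cref{lem:leftover} (already packaged in Proposition \ref{cor:leftover1}), the waste bound \Cref{lem:boundwaste}, and the capacity bound $P_\sigma(0,t)\le mt$—to produce the stated inequality.
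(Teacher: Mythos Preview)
Your argument is correct and follows the paper's proof essentially step for step: start from Proposition~\ref{cor:leftover1}, subtract $\hat{P}_\sigma-P_\sigma$, apply \Cref{lem:boundwaste}, split $\hat{P}_\sigma=P_\sigma+(\hat{P}_\sigma-P_\sigma)$, drop the second summand by a sign check, and bound $P_\sigma(0,t)\le mt$. The only cosmetic difference is the justification of $(m-1)\alpha<1$: the paper records this explicitly as \eqref{condition:afterthetacondition1} (Condition~9), whereas you derive it from \eqref{condition:lockingchain} together with $\gamma<1/2$; both are fine in the paper's framework since the conditions are collected and verified later for the final parameter choice.
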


\begin{proof}
By \Cref{cor:leftover1},
\[P_\pi(0,\tf) + m(t-\tf)-\hat{P}_\sigma(0,t) \le \frac{m}{4}\tf + W_\sigma(0,t) - (\hat{P}_\sigma(0,t) - {P}_\sigma(0,t)).\]
By \Cref{lem:boundwaste},
\[W_\sigma(0,t)\le(m-1)\alpha\hat{P}_\sigma(0, t).\]
Then we have 
\begin{align*}
    P_\pi(0,\tf) + m(t-\tf)-\hat{P}_\sigma(0,t) &\leq \frac{m}{4}\tf + (m-1)\alpha \hat{P}_\sigma(0,t) - (\hat{P}_\sigma(0,t) - {P}_\sigma(0,t))\\
    &=\frac{m}{4}\tf + (m-1) \alpha {P}_\sigma(0,t) + ((m-1)\alpha - 1) (\hat{P}_\sigma(0,t) - {P}_\sigma(0,t))
\end{align*}
Conditioned on 
\begin{empheq}[box=\shadowbox*]{equation}
    \label{condition:afterthetacondition1}
        1-(m-1)\alpha\ge 0     \tag{Condition 9}
\end{empheq}
we have $P_\pi(0,\tf) + m(t-\tf)-\hat{P}_\sigma(0,t)\leq \frac{m}{4}\tf + (m-1) \alpha {P}_\sigma(0,t)\leq \frac{m}{4}\tf + (m-1) m \alpha t$.
\end{proof}
% Then, we obtain an inequality about $P_\sigma-P_\pi$ in the following lemma. Subsequently, we will derive a contradiction by demonstrating that the RHS of the following inequality exceeds zero.
\begin{claim}
    $P_\sigma-P_\pi\geq p_n+\sum_{i=2}^k p_{n_i}-\frac{m}{4}\tf-m(m-1)\alpha s_{n_k}-m(1-s_{n_k})$.
    \label{delta}
\end{claim}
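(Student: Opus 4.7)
The plan is to apply \Cref{cor:leftover2} at $t=s_{n_k}$, convert the resulting lower bound on $\hat{P}_\sigma(0,s_{n_k})$ into a lower bound on $P_\sigma$ via the last locking chain, and finally subtract a trivial capacity bound on $\pi$ over $[\tf,1]$.

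First I would verify that $s_{n_k}\in[\tf,s_n]$, so that \Cref{cor:leftover2} is applicable. The upper bound $s_{n_k}\le s_{n_1}=s_n$ is immediate from \Cref{claim:lockingchain_basic}. For the lower bound, \Cref{lem:idle_basic} gives $\tf\le\min_{j\in S}r_j\le r_n$ (since $n=n_1\in S$), and \Cref{cor:snk} then gives $s_{n_k}>\gamma'+r_n\ge\tf$. Plugging $t=s_{n_k}$ into \Cref{cor:leftover2} yields
\[
\hat{P}_\sigma(0,s_{n_k})\;\ge\;P_\pi(0,\tf)+m(s_{n_k}-\tf)-\tfrac{m}{4}\tf-(m-1)m\alpha\,s_{n_k}.
\]

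Second I would upgrade this to a bound on $P_\sigma$. By definition of $\hat{P}_\sigma$, the quantity $\hat{P}_\sigma(0,s_{n_k})$ equals the total processing time of the jobs whose start time lies strictly before $s_{n_k}$. By \Cref{claim:lockingchain_basic}, every job in the last locking chain satisfies $s_{n_i}\ge s_{n_k}$, so $n_1,\dots,n_k$ are disjoint from that collection, and their processing times can be added on top of $\hat{P}_\sigma(0,s_{n_k})$ within the overall workload $P_\sigma$. This gives
\[
P_\sigma\;\ge\;\hat{P}_\sigma(0,s_{n_k})+p_n+\sum_{i=2}^{k}p_{n_i}.
\]

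Finally I would apply the trivial capacity bound on $\pi$ over $(\tf,1)$, namely $P_\pi\le P_\pi(0,\tf)+m(1-\tf)$. Subtracting this from the combination of the previous two displays cancels the $P_\pi(0,\tf)$ term, and using the identity $m(s_{n_k}-\tf)-m(1-\tf)=-m(1-s_{n_k})$ produces exactly the claimed inequality. The argument is essentially bookkeeping with definitions; the only subtle step is confirming $s_{n_k}\in[\tf,s_n]$ so that \Cref{cor:leftover2} may be invoked, which is where the structural results on the locking chain and the final idle period are genuinely used.
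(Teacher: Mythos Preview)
Your proposal is correct and follows essentially the same approach as the paper's proof: apply \Cref{cor:leftover2} at $t=s_{n_k}$, observe that $P_\sigma\ge\hat{P}_\sigma(0,s_{n_k})+\sum_{i=1}^k p_{n_i}$ since the chain jobs start no earlier than $s_{n_k}$, and subtract the trivial bound $P_\pi\le P_\pi(0,\tf)+m(1-\tf)$. The only cosmetic difference is that the paper phrases the middle step as $P_\sigma(s_{n_k},C_n)-\bigl(\hat{P}_\sigma(0,s_{n_k})-P_\sigma(0,s_{n_k})\bigr)\ge\sum_{i=1}^k p_{n_i}$ and then recombines, which is algebraically identical to your direct observation that $\hat{P}_\sigma(0,s_{n_k})=\sum_{j:s_j<s_{n_k}}p_j$; also, the paper verifies $\tf\le s_{n_k}$ simply from the definition of $\tf$ rather than via \Cref{cor:snk} (indeed $\tf\le r_{n_k}\le s_{n_k}$ already follows from \Cref{lem:idle_basic}).
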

\begin{proof}
From the definition of $\tf$, we know $\tf \leq s_{n_k} \leq s_n$. By \Cref{cor:leftover2}, $P_\pi(0,\tf) + m(s_{n_k}-\tf)-\hat{P}_\sigma(0,s_{n_k})\le \frac{m}{4}\tf+m(m-1)\alpha s_{n_k}$, from which we can yield that 
\begin{equation}
    \hat{P}_\sigma(0,s_{n_k}) - P_\pi(0,\tf) \geq m(s_{n_k}-\tf) - \frac{m}{4}\tf - m(m-1)\alpha s_{n_k}.
    \label{subcase1-1}
\end{equation}
By definition of the last locking chain, the workload of $\sigma$ during $(s_{n_k}, C_n)$, except the workload from jobs that start before $s_{n_k}$ is
\begin{equation}
    P_\sigma(s_{n_k},C_n) - (\hat{P}_\sigma(0,s_{n_k}) - {P}_\sigma(0,s_{n_k})) \geq \sum_{i=1}^k p_{n_i}=p_n+\sum_{i=2}^k p_{n_i}.
    \label{subcase1-2}
\end{equation}
Therefore, with \Cref{subcase1-1} and \Cref{subcase1-2}, 
\begin{align*}
    P_\sigma-P_\pi&= P_\sigma(s_{n_k},C_n) + {P}_\sigma(0,s_{n_k}) - (P_\pi(0,\tf) + P_\pi(\tf,1))\\
    &\geq P_\sigma(s_{n_k},C_n) + {P}_\sigma(0,s_{n_k}) - (P_\pi(0,\tf) + m(1-\tf))\\
    &= P_\sigma(s_{n_k},C_n) - (\hat{P}_\sigma(0,s_{n_k}) - {P}_\sigma(0,s_{n_k})) + (\hat{P}_\sigma(0,s_{n_k}) - P_\pi(0,\tf)) - m(1-\tf)\\
    &\geq p_n+\sum_{i=2}^k p_{n_i} + m(s_{n_k}-\tf) - \frac{m}{4}\tf - m(m-1)\alpha s_{n_k} - m(1-\tf)\\
    &= p_n+\sum_{i=2}^k p_{n_i}-\frac{m}{4}\tf-m(m-1)\alpha s_{n_k}-m(1-s_{n_k}).
\end{align*}

\end{proof}

% On the other hand, by definition, in this case (i.e., $\min\{s^\pi_a,s^\pi_b\} \geq \tf$), we can bound $p_a$ and $p_b$ by $\tf$.

% \begin{claim}
%     If $\min \{s^\pi_a,s^\pi_b\} \ge \tf$, $p_a + p_b \leq 1 - \tf$. 
%     \label{pa+pb}
% \end{claim}
% \begin{proof}
%     In this case, we have that $1\ge \min \{s^\pi_a,s^\pi_b\}+p_a+p_b\ge\tf+p_a+p_b$.
% \end{proof}

Then, we move towards two subcases of $a$ and $b$, based on the relationship between $s_a$, $s_b$, and $r_{n}$. Recall that $s_a\geq r_{n}$ or $s_b\geq r_{n}$ is proved in \Cref{claim:sasbrnk}. Without loss of generality, assuming $s_b \geq s_a$, it remains to discuss the following two cases:
\begin{enumerate}
    \item $s_a < r_{n} \leq s_b$: $a$ is early critical job and $b$ is late. 
    \item $r_{n} \leq s_a \leq s_b$: $a$ and $b$ are both late critical jobs.  
\end{enumerate}

\begin{proposition}[Subcase 1]
\label{claim:sasbrnk_case1}
If $\min \{s^\pi_a,s^\pi_b\} \ge \tf$, it is impossible that $s_a < r_{n}\le s_b$.
\end{proposition}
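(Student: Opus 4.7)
The plan is to prove this by contradiction, assuming $s_a<r_n\le s_b$ together with the Subcase~2 hypothesis $\min\{s_a^\pi,s_b^\pi\}\ge\tf$. First I harvest the structural consequences: since $a$ is early critical, $p_a>\gamma'$ by \Cref{lem:critical_large}; since $b$ is late critical (because $s_b\ge r_n$), the \LPT{} rule applied at time $s_b$ forces $p_b\ge p_n$, because $n$ is pending at $s_b$ while \LPT{} chose $b$ over it. Combining with the fact that $a,b$ share a machine in $\pi$ (so $\min\{s_a^\pi,s_b^\pi\}+p_a+p_b\le 1$), the Subcase~2 hypothesis yields $\tf\le 1-p_a-p_b<1-\gamma'-p_n$.

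The core technical step is to couple this upper bound on $\tf$ with the lower bound $s_n>1+\gamma-p_n$ (coming from $C_n=s_n+p_n>1+\gamma$), producing the key inequality $s_n-\tf>\gamma+\gamma'$. This plays for Subcase~1 of \Cref{sec:case2} the role that the union $[r_a,s_a)\cup[r_n,s_n)$ of length $>\gamma'+\gamma$ played in the proof of \Cref{claim:sasbearly}: it exhibits a single long no-idle window anchored at $\tf$, on which the same type of efficiency bound can be extracted.

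Next I feed this into the efficiency machinery. By the definition of $\tf$ no machine is idle on $(\tf,s_n)$, so $P_\sigma(\tf,s_n)+W_\sigma(\tf,s_n)=m(s_n-\tf)>m(\gamma+\gamma')$. Combining with \Cref{lem:boundwaste} and $\hat{P}_\sigma(\tf,s_n)\ge P_\sigma(\tf,s_n)$ gives $\hat{P}_\sigma(\tf,s_n)>m(\gamma+\gamma')/(1+(m-1)\alpha)$. Since $n$ starts exactly at $s_n$, its processing time is not counted inside $\hat{P}_\sigma(\tf,s_n)$, and the workload decomposition yields $P_\sigma\ge \hat{P}_\sigma(\tf,s_n)+p_n$.

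Finally, to close the contradiction I compare against $P_\pi=P_\sigma\le m$, sharpened when necessary by the refined bound $P_\pi(\tf,1)\le(m-1)(1-\tf)+p_a+p_b$ (exploiting that $a,b$ sit on a common machine in $\pi$ after $\tf$) together with \Cref{lem:leftover} applied on $[0,\tf)$. The hard part will be handling the worst case over the locking-chain mass $\sum_{i=2}^k p_{n_i}$: the quantity $\gamma'=\gamma-\alpha_f\sum_{i=2}^k p_{n_i}$ shrinks as this sum grows, but simultaneously $\hat{P}_\sigma(\tf,s_n)\ge\sum_{i=2}^k p_{n_i}$ grows in compensation, so one has to take the pointwise maximum of the two lower bounds on $\hat{P}_\sigma(\tf,s_n)$ and minimize it over the admissible $\sum_{i=2}^k p_{n_i}$ before invoking the $p_n$-lower bound of \Cref{claim:lowerbound_pn}. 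This reduces the contradiction to a new clean condition on $\alpha,\lambda,\gamma$ to be appended to the list in \Cref{sec:bound} and verified for $\alpha=1/(4m^2)$, $\lambda=4^{25/6}$, $\gamma=1/2-1/(4^{20}m^2)$.
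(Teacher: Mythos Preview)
Your opening moves match the paper exactly: from $\min\{s_a^\pi,s_b^\pi\}\ge\tf$ together with $p_a>\gamma'$ and $p_b\ge p_n$ you correctly extract $\tf<1-\gamma'-p_n$, and hence $s_n-\tf>\gamma+\gamma'$. From there, however, the paper and you diverge.

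The paper does not route the argument through $\hat P_\sigma(\tf,s_n)$ and the crude comparison $P_\sigma\le m$. Instead it plugs the bounds $\tf<1-\gamma'-p_n$ and $s_{n_k}>1+\gamma'-p_n$ directly into the already-proved \Cref{delta},
\[
P_\sigma-P_\pi\;\ge\;p_n+\sum_{i=2}^k p_{n_i}-\tfrac{m}{4}\tf-m(m-1)\alpha\,s_{n_k}-m(1-s_{n_k}),
\]
obtaining a single linear expression in $p_n$ and $\sum_{i=2}^k p_{n_i}$. It then uses $p_n<1-\gamma'$ (which follows from $p_n\le p_b\le 1-\tf-p_a<1-\gamma'$), expands $\gamma'=\gamma-\alpha_f\sum_{i=2}^k p_{n_i}$, and closes with Conditions~10--12. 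The $\sum p_{n_i}$ trade-off is thus handled \emph{linearly}: both the positive term $\sum_{i=2}^k p_{n_i}$ and the loss through $\gamma'$ sit in the same inequality, and the net coefficient of $\sum_{i=2}^k p_{n_i}$ is shown nonnegative (Condition~11).

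Your route is workable but looser in two places. First, the bare comparison $\hat P_\sigma(\tf,s_n)+p_n$ against $m$ does not close: plugging in $\gamma'\ge\gamma-(m-1)\alpha$ and the paper's parameters forces roughly $p_n>\tfrac12$, which is not available. You already anticipate needing the left-over lemma on $[0,\tf)$; once you bring that in, you are essentially rederiving \Cref{delta} with $s_n$ in place of $s_{n_k}$, which yields a bound strictly weaker than the paper's by the nonnegative term $\sum_{i=2}^k p_{n_i}\bigl(1-m\alpha_f(1-(m-1)\alpha)\bigr)$. Second, your ``take the pointwise maximum of the two lower bounds and minimize'' heuristic is not the right mechanism: neither bound alone reaches $m-p_n$, and the maximum does not combine them additively. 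The correct handling is to keep $\sum_{i=2}^k p_{n_i}$ and $\gamma'$ in one inequality and check the sign of the net coefficient, as the paper does.
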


\begin{proof}
We have that $p_{n}\le p_b\le1-p_a-\tf<1-\gamma'-\tf$, from which we can see $\tf<1-\gamma'-p_{n}<s_n-\gamma-\gamma'$. By \Cref{delta}, $P_\sigma-P_\pi\geq p_n+\sum_{i=2}^k p_{n_i}-\frac{m}{4}\tf-m(m-1)\alpha s_{n_k}-m(1-s_{n_k})$. Since $\tf<1-\gamma'-p_{n}$ and $s_{n_k}+p_{n}>1+\gamma'$ (\Cref{cor:Cnk}),

\begin{align*}
    P_\sigma-P_\pi&\geq p_n+\sum_{i=2}^k p_{n_i}-\frac{m}{4}\tf-m(m-1)\alpha s_{n_k}-m(1-s_{n_k})\\
    &> p_n+\sum_{i=2}^k p_{n_i}-\frac{m}{4}(1-\gamma'-p_{n})-m(m-1)\alpha (1+\gamma'-p_{n})-m(p_{n}-\gamma')\\
    &=(1+m(m-1)\alpha-\frac{3}{4}m)p_n+\sum_{i=2}^k p_{n_i}-\frac{m}{4}-m(m-1)\alpha (1+\gamma')+\frac{5}{4}m\gamma'.
\end{align*}
Conditioned on
\begin{empheq}[box=\shadowbox*]{equation}
    \frac{3}{4}m-1-m(m-1)\alpha+1\geq0,
    \tag{Condition 10}
\end{empheq}
the coefficient of $p_n$ is not greater than $0$. Then with $p_{n}<1-\gamma'$, we obtain
\begin{align*}
    P_\sigma-P_\pi&>1-\gamma'+\sum_{i=2}^k p_{n_i}-2m(m-1)\alpha\gamma'-m(1-2\gamma')\\
    &=(1+\alpha-2m\alpha(1-(m-1)\alpha))\sum_{i=2}^k p_{n_i} - (m(1-2\gamma)+(2m(m-1)\alpha+1)\gamma-1).
\end{align*}
Conditioned on 
\begin{empheq}[box=\shadowbox*]{align}
    1+\alpha-2m\alpha(1-(m-1)\alpha)\ge 0,
    \tag{Condition 11}
\end{empheq}
\begin{empheq}[box=\shadowbox*]{align}
    m(1-2\gamma)+(2m(m-1)\alpha+1)\gamma-1\le 0,
    \tag{Condition 12}
\end{empheq}
we get $P_\sigma-P_\pi> 0$. This is a contradiction.
\end{proof} 
\begin{proposition}[subcase 2]
\label{claim:sasbrnk_case2}
If $\min \{s^\pi_a,s^\pi_b\} \ge \tf$, it is impossible that $r_{n} \leq s_a \leq s_b$.
\end{proposition}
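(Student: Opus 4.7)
The goal is to derive a contradiction by showing $P_\sigma > P_\pi$, paralleling the approach in Subcase 1 (\Cref{claim:sasbrnk_case1}), but exploiting the stronger structure of the current subcase. The key observation is that both $a$ and $b$ are now late critical jobs, so by \Cref{lem:critical_large}, $p_a \ge p_n$ and $p_b \ge p_n$. Together with the assumption $\min\{s_a^\pi, s_b^\pi\} \ge \tf$ and the fact that $a, b$ are on the same machine in $\pi$, we may assume WLOG that $s_a^\pi + p_a \le s_b^\pi$ and $s_b^\pi + p_b \le 1$, which gives $\tf + p_a + p_b \le 1$. This yields two crucial quantitative bounds that replace the inequality $\tf < 1 - \gamma' - p_n$ from Subcase 1: namely, $\tf \le 1 - 2 p_n$ and $p_n \le \tfrac{1}{2}$.

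The plan is then to plug these bounds into \Cref{delta}, together with $s_{n_k} > 1 + \gamma' - p_n$ from \Cref{cor:Cnk}. After substitution, $P_\sigma - P_\pi$ becomes an affine function of $p_n$ and $\sum_{i=2}^k p_{n_i}$, with remaining parameters $\gamma, \alpha, m$ and $\gamma' = \gamma - \alpha_f \sum_{i=2}^k p_{n_i}$. I expect the coefficient of $p_n$ to be negative for $m \ge 4$ once Condition 9 is in force, so I will upper-bound $p_n$ by $\tfrac{1}{2}$. Conversely, once $\gamma'$ is unfolded, the coefficient of $\sum_{i=2}^k p_{n_i}$ should be $1 - m(1-(m-1)\alpha)\alpha_f$, which is positive for the intended scaling $\alpha = \tfrac{1}{4m^2}$ since $\alpha_f \le \alpha$ and $m\alpha \ll 1$; so I will lower-bound that sum by $0$. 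What remains is a closed-form expression in $m$, $\gamma$, and $\alpha$, and setting it $> 0$ produces a new inequality of the form
\[
m\gamma\bigl(1-(m-1)\alpha\bigr) \;>\; \tfrac{m-1}{2} + \tfrac{m(m-1)\alpha}{2},
\]
which I will add to the boxed list as a new Condition. Since its asymptotic requirement is $\gamma > \tfrac{m-1}{2m}$, it is comfortably satisfied by the paper's parameter choice $\gamma = \tfrac{1}{2} - \tfrac{1}{4^{20}m^2}$ for all $m \ge 4$, completing the contradiction.

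The main obstacle will be the careful sign analysis: both the coefficient of $p_n$ and the coefficient of $\sum p_{n_i}$ must be checked against the monotonicity directions used when substituting $p_n \le 1/2$ and $\sum p_{n_i} \ge 0$, so that each substitution genuinely weakens the lower bound. A secondary concern is that the substitution $s_{n_k} > 1 + \gamma' - p_n$ is only useful because the coefficient of $s_{n_k}$ in \Cref{delta} is $m\bigl(1 - (m-1)\alpha\bigr)$, which is positive by \eqref{condition:afterthetacondition1}; I will verify this reuse of Condition 9 is legitimate here. Everything else is routine bookkeeping, and the expected failure mode — the new condition clashing with an earlier one — does not materialize because the new inequality is strictly weaker than \eqref{condition:2gamma1} for the target parameter regime.
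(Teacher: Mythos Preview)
Your plan is correct and matches the paper's proof essentially line for line: the paper uses exactly the bounds $\tf \le 1 - 2p_n$ and $p_n \le \tfrac{1}{2}$, substitutes them into \Cref{delta} via \Cref{cor:Cnk}, and your final inequality $m\gamma(1-(m-1)\alpha) > \tfrac{m-1}{2} + \tfrac{m(m-1)\alpha}{2}$ is precisely the paper's Condition~15 (with Condition~14 handling the sign of the $\sum p_{n_i}$ coefficient). One small correction: the nonpositivity of the $p_n$-coefficient, which works out to $1 + m(m-1)\alpha - \tfrac{m}{2}$, is the paper's Condition~13 rather than Condition~9; Condition~9 is indeed what justifies the $s_{n_k}$ substitution, as you correctly note separately.
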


\begin{proof}
If $s_a \ge r_{n}$ and $s_b \ge r_{n}$, we have $p_a\ge p_{n}$ and $p_b\ge p_{n}$. 
Then, because $\min \{s^\pi_a,s^\pi_b\} + p_a + p_b \leq 1$, we have $p_a+p_b \leq 1 - \tf$. It implies that $p_{n}\le \frac{p_a+p_b}{2}\leq\frac{1-\tf}{2}$, so we have 
$\tf\le 1-2p_n$. Combining $\tf\leq 1-2p_{n}$, $s_{n_k}+p_{n}>1+\gamma'$ (\Cref{cor:Cnk}) and \Cref{delta}, we get

\begin{align*}
    P_\sigma-P_\pi&\geq p_n+\sum_{i=2}^k p_{n_i}-\frac{m}{4}\tf-m(m-1)\alpha s_{n_k}-m(1-s_{n_k})\\
    &> p_n+\sum_{i=2}^k p_{n_i}-\frac{m}{4}(1-2p_{n})-m(m-1)\alpha (1+\gamma'-p_n)-m(p_n-\gamma')\\
    &=(1+m(m-1)\alpha-\frac{1}{2}m)p_n+\sum_{i=2}^k p_{n_i}-\frac{m}{4}-m(m-1)\alpha (1+\gamma')+m\gamma'.
\end{align*}
Conditioned on
\begin{empheq}[box=\shadowbox*]{equation}
    \frac{1}{2}m-1-m(m-1)\alpha\geq 0,\tag{Condition 13}
\end{empheq}
the coefficient of $p_n$ is not greater than 0. Then with $p_{n}\le\frac{1}{2}$, we obtain
\begin{align*}
    P_\sigma-P_\pi&> \frac{1}{2}+\sum_{i=2}^k p_{n_i}-m(m-1)\alpha (\frac{1}{2}+\gamma')-m(\frac{1}{2}-\gamma')\\
    &=(1-m\alpha(1-(m-1)\alpha))\sum_{i=2}^k p_{n_i}-(m(\frac{1}{2}-\gamma)+m(m-1)\alpha (\frac{1}{2}+\gamma)-\frac{1}{2}).
\end{align*}
Conditioned on 
\begin{empheq}[box=\shadowbox*]{equation}
    1-m\alpha(1-(m-1)\alpha)\ge0,\tag{Condition 14}
\end{empheq}
\begin{empheq}[box=\shadowbox*]{equation}
    m(\frac{1}{2}-\gamma)+m(m-1)\alpha (\frac{1}{2}+\gamma)-\frac{1}{2}\le 0,\tag{Condition 15}
\end{empheq}
we get $P_\sigma-P_\pi> 0$. This is a contradiction.
\end{proof} 

\begin{restatable}{lemma}{secondcase}
\label{lem:latecase2main}
The second case, i.e., $\min \{s^\pi_a,s^\pi_b\} \geq \tf$, is impossible.
\end{restatable} 

\begin{proof}
    It is a straightforward combination of \Cref{claim:sasbrnk_case1} and \Cref{claim:sasbrnk_case2}. 
\end{proof}

% In conclusion, we complete the proof of Lemma \ref{lem:latecase2main}.
% \secondcase*
% \begin{proof}
%     It is a straightforward combination of \Cref{claim:sasbrnk_case1} and \Cref{claim:sasbrnk_case2}. 
% \end{proof}

% Finally, we conclude the proof of \Cref{thm:ratio_m>3}.
% \ratiogeneral*
% \begin{proof}
% The proof is a combination of \Cref{claim:no3}, \Cref{claim:sasbrnk}, \Cref{claim:sasbearly} and \Cref{lem:latecase2main}. Recall that these lemmas work only when all the Conditions (1-15) hold. We observe that by fixing $\alpha = \frac{1}{4m^2}$, $\lambda = 4^{25/6}$ and $\gamma = \frac{1}{2} - \frac{1}{4^{20}m^2}$, all conditions are satisfied. Detailed verification can be found in \Cref{sec:bound}. It means that our algorithm is $(1.5-\frac{1}{O(m^2)})$-competitive. 
% \end{proof}

\begin{comment}
   The detailed proof of \Cref{lem:latecase2main} is in \Cref{sec:case2app}.

\begin{restatable}{lemma}{secondcase}
\label{lem:latecase2main}
The second case, i.e., $\min \{s^\pi_a,s^\pi_b\} \geq \tf$, is impossible.
\end{restatable} 
\end{comment}

\section{Bound of $\gamma$ and $\alpha$}
\label{sec:bound}

Given the conditions concerning variables $\lambda, \ \alpha, \ \gamma$ and $m$ as mentioned earlier in the text, we prove that there does not exist a minimum size set of jobs $\mathcal{J}$ with optimal makespan $\pi(\mathcal{J})=1$ that makes our algorithm have a makespan $\sigma(\mathcal{J})=C_n=s_n+p_n>1+\gamma$. In other words, for each job set $\mathcal{J}$, our algorithm guarantees that $\frac{\sigma(\mathcal{J})}{\pi(\mathcal{J})}\le 1+\gamma$. The conditions are systematically reorganized and presented below for clarity.
\begin{itemize}
    \item Conditions on $\alpha$ in terms of $m$:
\end{itemize}
\begin{align}
    1-\frac{m\alpha}{1+(m-1)\alpha}\ge0
    \tag{Condition 7}\\
    1-(m-1)\alpha\ge0
    \tag{Condition 9}\\
    % 1-(\frac{5}{4}m-1)\alpha\ge0
    % \tag{Condition 10}\\
    % -\frac{3}{4}m+(m-1)\alpha+1\le0
    % \tag{Condition 11}\\
    \frac{3}{4}m-1-m(m-1)\alpha\ge0
    \tag{Condition 10}\\
    1+\alpha-2m\alpha(1-(m-1)\alpha)\ge 0
    \tag{Condition 11}\\
    -\frac{1}{2}m+(m-1)\alpha+1\le0
    % \tag{Condition 16}\\
    % \frac{1}{2}m-1-m(m-1)\alpha\ge0
    \tag{Condition 13}\\
    1-m\alpha(1-(m-1)\alpha)\ge 0
    \tag{Condition 14}
\end{align}
\begin{itemize}
    \item Conditions on $\gamma$ in terms of $m$ and $\alpha$:
\end{itemize}
\begin{align}
    \frac{\gamma}{\alpha} > m
    \tag{Condition 1}\\
    \gamma-(m-1)\alpha\ge0.4
    \tag{Condition 2}\\
    \frac{m\gamma-\frac{m}{4}-m(m-1)\alpha}{\frac{3}{4}m-1-(m-1)\alpha}>\frac{1}{3}
    \tag{Condition 3}\\
    \gamma\ge\frac{1+3.5(m-1)\alpha}{2.5}
    \tag{Condition 4}\\
    4\gamma-2(2m-3)\alpha-1\ge 0
    \tag{Condition 5}\\
    (2+4^{-17.75}\alpha)\gamma>1
    \tag{Condition 6}\\
    \frac{2\gamma}{1+(m-1)\alpha}+\frac{\gamma-\frac{1}{4}-(m-1)\alpha}{\frac{3}{4}m-1-(m-1)\alpha}\ge 1
    \tag{Condition 8}\\
    % \frac{m}{4}(1-2\gamma)+(m-1)\alpha(m-\gamma)-\gamma\le 0     
    % \tag{Condition 12}\\
    m(1-2\gamma)+(2m(m-1)\alpha+1)\gamma-1\le 0    
    \tag{Condition 12}\\
    m(\frac{1}{2}-\gamma)+m(m-1)\alpha (\frac{1}{2}+\gamma)-\frac{1}{2}\le 0  
    \tag{Condition 15}
\end{align}
Due to Conditions 12 and 15, the value of $\alpha$ needs to be on the order of $\frac{1}{O(m^2)}$. We can see that all conditions on $\alpha$ in terms of $m$ are satisfied if 
\begin{equation*}
    \alpha\le\frac{1}{4m^2}.
\end{equation*}
Set $\alpha = \frac{1}{4m^2}$ and then consider those conditions on $\gamma$ in terms of $m$ and $\alpha$. Note that \eqref{condition:2gamma1} is the tightest condition. Set $\gamma = \frac{1}{2}-\frac{1}{4^{20}m^2}$ to satisfy \eqref{condition:2gamma1}. Then, substitute the values of $\alpha$ and $\gamma$, we can verify that all conditions on $\gamma$ in terms of $m$ and $\alpha$ are satisfied, which proves our main theorem:
\begin{theorem}[Precise version of \Cref{thm:ratio_m>3}]
    By setting $\alpha = \frac{1}{4m^2}$, $\lambda = 4^{25/6}$, the \GSLEEPY algorithm with dynamic locking achieves a competitive ratio of $1.5-\frac{1}{4^{20}m^2}$.
\end{theorem}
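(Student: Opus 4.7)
The plan is to reduce \Cref{thm:ratio_m>3} to a purely algebraic verification. The framework of \Cref{sec:basic}--\Cref{sec:case2} already establishes that, under a minimal counterexample $\mathcal{J}$ with $\pi(\mathcal{J}) = 1$ and $\sigma(\mathcal{J}) > 1 + \gamma$, one of the lemmas \Cref{claim:no3}, \Cref{claim:sasbrnk}, \Cref{claim:sasbearly}, or \Cref{lem:latecase2main} must fire and produce a contradiction---each of them conditional on a subset of the fifteen inequalities collected above. Thus the proof reduces to exhibiting concrete values of $\lambda$, $\alpha$, $\gamma$ that simultaneously satisfy all of Conditions~1--15 for every $m \ge 4$.

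First I would fix $\lambda = 4^{25/6}$, which is the value forced by the equality $\lambda^{0.24} = 4$ used inside \Cref{claim:sasbrnk}; no other lemma constrains $\lambda$. Next I would handle the six $\alpha$-only conditions (7, 9, 10, 11, 13, 14). With $m \ge 4$ and $\alpha = \tfrac{1}{4m^2}$, both $m\alpha$ and $(m-1)\alpha$ are at most $\tfrac{1}{16}$, so Conditions~7 and 9 are trivial; Conditions~10 and 13 are linear in $\alpha$ with coefficient $m(m-1) \le m^2$, and evaluating at $\alpha = 1/(4m^2)$ yields comfortable slacks of order $m$; the quadratic Conditions~11 and 14 both have leading form $1 - m\alpha + m(m-1)\alpha^2$, which is clearly nonnegative when $m\alpha \le \tfrac{1}{4}$.

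For the nine $\gamma$-conditions, substituting $\alpha = 1/(4m^2)$ and $\gamma = \tfrac{1}{2} - \tfrac{1}{4^{20}m^2}$ and expanding to leading order in $1/m$ reduces each inequality to a sign check on its constant term. Conditions~1--5 and 8 are trivially loose because they only demand $\gamma \ge c$ or $\gamma/\alpha \ge m$ for absolute constants $c < 0.41$ and for an $\alpha$ of order $1/m^2$. The decisive condition is Condition~6, $(2 + 4^{-17.75}\alpha)\gamma > 1$, which pins $\gamma$ arbitrarily close to $1/2$ from below; rearranging gives $\gamma > \tfrac{1}{2} - \Theta(\alpha) = \tfrac{1}{2} - \Theta(1/m^2)$, and the proposed slack $1/(4^{20}m^2)$ is safely smaller than this window because the exponent $20$ is chosen to dominate $17.75$. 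Conditions~12 and 15 symmetrically push $\gamma$ \emph{away} from $1/2$; writing $\gamma = \tfrac{1}{2} - \varepsilon$ turns each into $m\varepsilon \le \tfrac{1}{2} - m(m-1)\alpha \cdot \Theta(1) + o(1)$, and with our choices the right-hand side is $\Theta(1)$ while the left-hand side is $O(1/m)$.

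The main obstacle, and the reason the gap $1/(4^{20}m^2)$ has this particular shape, is the three-way interaction between Condition~6 and Conditions~12, 15: Condition~6 forces $\gamma$ to lie above $\tfrac{1}{2} - O(\alpha)$, so $\alpha$ cannot be $o(1/m^2)$, while Conditions~12 and 15 require $\alpha = O(1/m^2)$. The choice $\alpha = 1/(4m^2)$ is the unique scale that makes all three families simultaneously feasible, and it is the coefficient $4^{-17.75}$ appearing inside Condition~6 (through \Cref{claim:aflow} and the fixed $\lambda^{0.24}=4$) that ultimately determines the $4^{20}$ in the statement. Once these numerical checks are in place, none of the lemmas in \Cref{sec:basic}--\Cref{sec:case2} can admit a counterexample, and the competitive ratio is at most $1 + \gamma = 1.5 - 1/(4^{20}m^2)$, as claimed.
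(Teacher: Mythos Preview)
Your proposal is correct and follows essentially the same route as the paper: reduce the theorem to the simultaneous verification of Conditions~1--15 at the specific parameter values, identify Condition~6 as the binding constraint that fixes $\gamma = \tfrac12 - \Theta(\alpha)$, and observe that Conditions~12 and~15 force $\alpha = O(1/m^2)$. The paper's own argument in \Cref{sec:bound} is terser---it simply asserts that the $\alpha$-only conditions hold for $\alpha\le 1/(4m^2)$, names Condition~6 as tightest, and leaves the remaining substitutions to the reader---so your expanded discussion of the three-way tension among Conditions~6, 12, and~15 is a welcome elaboration rather than a departure. One small wording slip: Conditions~12 and~15 do not ``push $\gamma$ away from $1/2$''; rather, for fixed $\gamma\approx 1/2$ they cap $\alpha$ at $O(1/m^2)$, which is exactly the point you then use.
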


%For other conditions, we can consolidate them and get 

%\begin{align*}
    %\gamma>\max\{0.5-\frac{1}{3m}+(1-\frac{1}{3m})(m-1)\alpha,\ \frac{1}{2+\frac{\frac{2}{3}(1-(3m-1)(m-1)\alpha)}{(m-1)(1+(m-1)\alpha)}}\}. 
    %\tag{simplified condition on $\gamma$}
%\end{align*}

\section{Further Discussion on $m=3$}
\label{sec:m=3}

Here, we fix $\lambda = 1$. There is an interesting fact about this special case.

\begin{lemma}
\label{claim:nispushed}
    If $3\gamma-5\alpha-1 > 0$, the size of the last locking chain $|S|=1$.
\end{lemma}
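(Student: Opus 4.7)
The plan is to argue by contradiction. Suppose $|S| \geq 2$, so that $n_1 = n$ is a locked job and there is some job $n_2 \in S$ responsible for locking it. Since we are in the regime $\lambda = 1$, the locking parameter is simply $\alpha$, and \Cref{claim:lockingchain_basic} together with \Cref{lem:finallockingchain} give $s_n = s_{n_2} + \alpha p_{n_2}$ as well as $p_{n_2} \geq p_n$. My goal is to derive a lower bound on $p_n$ that exceeds $1$, contradicting the fact that $p_n \leq 1 - r_n$ forced by $\pi(\mathcal{J}) = 1$.

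The key observation is that the locking relation itself automatically produces a sizeable ``leftover'' workload term $\hat{P}_\sigma(0,s_n) - P_\sigma(0,s_n)$. Because $s_n = s_{n_2} + \alpha p_{n_2} < s_{n_2} + p_{n_2} = C_{n_2}$, job $n_2$ is still being processed at time $s_n$, and hence contributes its unfinished portion $C_{n_2} - s_n = (1-\alpha) p_{n_2} \geq (1-\alpha) p_n$ to the sum $\sum_{j:\, s_j < s_n < C_j}(C_j - s_n)$ appearing in the definition of $\hat{P}_\sigma(0,s_n)$. This is precisely the setting where the second, refined bound in \Cref{claim:lowerbound_pn} gives strictly more than the generic one, which discards this term.

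I would then plug $\hat{P}_\sigma(0,s_n) - P_\sigma(0,s_n) \geq (1-\alpha)p_n$ into the $m=3$ version of \Cref{claim:lowerbound_pn}. The $p_n$ appearing on the right combines with the coefficient on the left to leave an inequality of the form $(\tfrac{1}{4} - \alpha)\, p_n > 3\gamma - \tfrac{3}{4} - 6\alpha$. A routine rearrangement shows that $p_n > 1$ is equivalent to $3\gamma > 1 + 5\alpha$, which is exactly the hypothesis $3\gamma - 5\alpha - 1 > 0$. So the assumption $|S| \geq 2$ forces $p_n > 1$, a contradiction, and therefore $|S| = 1$.

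The main obstacle is not the algebra but identifying the right quantity to feed into the refined bound: the whole argument rests on noticing that an in-chain locker $n_2$ is necessarily unfinished at $s_n$ by exactly a $(1-\alpha)$ fraction of its own length, and that this unfinished mass is at least $(1-\alpha) p_n$. Once this is in place, the condition $3\gamma - 5\alpha - 1 > 0$ is calibrated precisely so that the resulting lower bound on $p_n$ crosses the threshold $1$. A minor sanity check is that $\alpha < \tfrac{1}{4}$ in the regime of interest, so that the sign of $\tfrac{1}{4} - \alpha$ is the one used when rearranging.
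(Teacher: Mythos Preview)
Your proposal is correct and matches the paper's own proof essentially line for line: both argue by contradiction from $|S|\ge 2$, extract the leftover $(1-\alpha)p_{n_2}\ge(1-\alpha)p_n$ contributed by $n_2$ at time $s_n$, plug it into the refined $m=3$ bound of \Cref{claim:lowerbound_pn}, and rearrange to $(\tfrac14-\alpha)p_n>3\gamma-\tfrac34-6\alpha$, which forces $p_n>1$ exactly when $3\gamma-5\alpha-1>0$. Your write-up is in fact slightly more explicit than the paper's about why the leftover term equals $(1-\alpha)p_{n_2}$ and why $\alpha<\tfrac14$ holds.
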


\begin{proof}
    If the size of the last locking chain $|S|\ge2$, by the definition of the last locking chain and \Cref{claim:lockingchain_basic}, 
    $\hat{P}_\sigma(0,s_n)-P_\sigma(0,s_n)\ge(1-\alpha)p_n$. Then by \Cref{claim:lowerbound_pn}, \[p_n > \frac{3\gamma-\frac{3}{4}-6\alpha+\hat{P}_\sigma(0,s_n)- P_\sigma(0,s_n)}{\frac{5}{4}-2\alpha}\geq\frac{3\gamma-\frac{3}{4}-6\alpha+(1-\alpha)p_n}{\frac{5}{4}-2\alpha}.\] Therefore, under the condition that
\begin{empheq}[box=\shadowbox*]{equation}
    \label{conm=3:1}
    3\gamma-5\alpha-1 > 0,
    \tag{{Condition 1 for m=3}}
\end{empheq}
    we have $p_n>\frac{3\gamma-6\alpha-\frac{3}{4}}{\frac{1}{4}-\alpha}>1$, which contradicts the assumption that $\pi(\mathcal{J})=1$.
\end{proof}
% \begin{figure}[H]
%     \centering
%     \includegraphics[width=0.6\textwidth]{}
%     \caption{The case of $m = 3$ and $|S| \geq 2$. In this case, the total workload will be too much.}
%     \label{fig:m=3&2locking_chain}
% \end{figure}
\Cref{claim:nispushed} implies that the job $n$ is a pushed job. Thus, there will be exactly $m$ processing jobs at time $s_{n}$. Identify them by indices $j_1,j_2,\dots,j_m$. Define $\mathcal{J}_{C,m=3}=\{n,j_1,\dots,j_m\}$ as the critical set of tasks in this case. These $m+1$ jobs in $\mathcal{J}_{C,m=3}$ satisfy the following straightforward property.

\begin{proposition} [Stronger Version of \Cref{lem:earlyandlate}]
    For each critical job $j$, we call it
\begin{itemize}
    \item \textbf{Early Critical Job}, if $s_{j} < r_{n}$, and we have $p_{j} > \gamma$;
    \item \textbf{Late Critical Job}, if $s_{j} \geq r_{n}$, and we have $p_{j} \geq p_n$. 
\end{itemize}
\end{proposition}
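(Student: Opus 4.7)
The plan is to observe that this proposition is essentially a direct corollary of two results that have already been established for the $m=3$ setting: \Cref{claim:nispushed}, which shows that the last locking chain satisfies $|S|=1$ (assuming \eqref{conm=3:1}), and \Cref{lem:critical_large}, which gives the weaker bounds $p_j > \gamma'$ for early critical jobs and $p_j \geq p_n$ for late critical jobs. The only gap to close is the strengthening from $\gamma'$ to $\gamma$ in the early-critical case.

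First, I would invoke \Cref{claim:nispushed} to conclude $|S|=1$, i.e., $S = \{n_1\} = \{n\}$. Then by \Cref{def:gamma'}, since the summation $\sum_{j=2}^{k} p_{n_j}$ is empty when $k=1$, we immediately get
\[
\gamma' = \gamma - \alpha_f \sum_{j=2}^{k} p_{n_j} = \gamma.
\]
Plugging this identity into \Cref{lem:critical_large} yields $p_j > \gamma' = \gamma$ for every early critical job, which is exactly the first bullet of the proposition.

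For the late critical case ($s_j \geq r_n$), the inequality $p_j \geq p_n$ is inherited verbatim from \Cref{lem:critical_large}, but I would also point out that it admits a self-contained one-line justification via the LPT rule: at the moment $s_j \geq r_n$ when the algorithm selects $j$, job $n$ is already released and pending, so the longest-pending-job rule forces $p_j \geq p_n$. Either route suffices.

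I do not anticipate any obstacle. The substantive work has already been absorbed into \Cref{claim:nispushed}, whose proof leverages the refined efficiency bound in \Cref{claim:lowerbound_pn} (the second bullet, which keeps the term $\hat{P}_\sigma(0,s_n) - P_\sigma(0,s_n)$) to force $|S|=1$; once that reduction is in hand, the proposition is a two-line rewriting.
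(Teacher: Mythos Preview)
Your proposal is correct and essentially matches the paper's argument. The paper's proof is even terser: rather than routing through $\gamma'=\gamma$ and \Cref{lem:critical_large}, it directly invokes $s_n-r_n>\gamma$ (\Cref{lem:last_delay}) to conclude $p_j\ge s_n-s_j>s_n-r_n>\gamma$ for early critical jobs, and the LPT rule for late ones—but since $|S|=1$ forces $s_{n_k}=s_n$ and $\gamma'=\gamma$, your reduction to \Cref{lem:critical_large} is the same argument in slightly different packaging.
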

\begin{proof}
    It is trivial that $p_n\ge p_n$. The processing time for early critical jobs exceeds $\gamma$, derived from $s_n-r_n>\gamma$, and late critical jobs' processing time is at least $p_n$ according to the LPT strategy.
\end{proof}

By the pigeonhole principle, at least $2$ of these $m+1$ jobs in $\mathcal{J}_{C,m=3}$ will be assigned to one machine in $\pi$. We denote this pair of jobs as jobs $a$ and $b$. Firstly, we eliminate the case that jobs $a$ and $b$ are both early critical jobs. Remark that we do not require the dynamic locking factor $\lambda$, because we do not have locked jobs. 

\begin{proposition}
\label{claim:m=3sasbrn}
    It is impossible that jobs $a$ and $b$ are both early critical jobs.
\end{proposition}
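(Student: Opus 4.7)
The plan is to combine the OPT bin-packing bound $p_a + p_b \le 1$ (since $a,b$ share a machine in $\pi$ and $\pi=1$) with the algorithm-side locking gap between the two start times in $\sigma$. Without loss of generality I take $s_a < s_b$; since both $a$ and $b$ are processing at $s_n$ they occupy distinct machines in $\sigma$, and the (static) locking triggered when $a$ starts forces $s_b \ge s_a + \alpha p_a$ (with $\lambda=1$ the locking parameter is the constant $\alpha$).

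Next, I would exploit $s_b < r_n$ (which is exactly what ``early critical'' means for $b$) to write $s_a < r_n - \alpha p_a$, and hence $s_n - s_a > (s_n - r_n) + \alpha p_a$. Combining with $p_a > s_n - s_a$ (as $a$ is still processing at $s_n$), this rearranges to $(1-\alpha) p_a > s_n - r_n > \gamma$, i.e.\ $p_a > \gamma/(1-\alpha)$. The bound $p_b > s_n - s_b > s_n - r_n > \gamma$ is immediate from $s_b < r_n$. Adding the two inequalities yields
\[
    p_a + p_b \;>\; \frac{2-\alpha}{1-\alpha}\,\gamma,
\]
and confronting this with $p_a + p_b \le 1$ gives $\gamma(2-\alpha) < 1-\alpha$, which the parameter choice for the $1.482$ ratio is designed to rule out.

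The main obstacle I expect is the parameter calibration: the new constraint here requires $\gamma \ge (1-\alpha)/(2-\alpha)$, which pushes $\alpha$ away from zero, while the other conditions already in force for the $m=3$ pipeline, namely $3\gamma - 5\alpha - 1 > 0$ from \Cref{claim:nispushed} (needed so that $|S|=1$) and the strict $p_n > 1/3$ lower bound from \Cref{claim:lowerbound_pn} (needed so that the $m=3$ analogue of \Cref{claim:no3} rules out three critical jobs on one machine in $\pi$), pull $\alpha$ toward zero when $\gamma$ is close to the target. Verifying that a common feasible $(\alpha,\gamma)$ achieving $1+\gamma$ near $1.482$ satisfies all three inequalities simultaneously is the quantitative check that drives the final competitive ratio for this case.
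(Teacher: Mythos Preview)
Your proposal is correct and matches the paper's proof essentially line for line: the paper also takes $s_a<s_b$, uses the static lock to get $s_b\ge s_a+\alpha p_a$, derives $(1-\alpha)p_a>\gamma$ from $p_a\ge s_n-s_a>\gamma+\alpha p_a$, and combines with $p_b>\gamma$ to obtain $p_a+p_b>\frac{2-\alpha}{1-\alpha}\gamma$, contradicting $p_a+p_b\le 1$ under the parameter condition $\frac{2-\alpha}{1-\alpha}\gamma\ge 1$ (which is exactly the paper's Condition~2 for $m=3$). Your discussion of the parameter tension is also on point, though it belongs to the final calibration step rather than to this proposition itself.
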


\begin{proof}
    Assume $s_a<s_b$. By our locking strategy and \Cref{lem:last_delay}, $s_b\geq s_a+\alpha p_a$ and $(1-\alpha)p_a>\gamma$. Then $p_a+p_b\ge 2(s_n-s_b)+s_b-s_a\ge 2\gamma+\alpha p_a>\frac{2-\alpha}{1-\alpha}\gamma$. Conditioned on
\begin{empheq}[box=\shadowbox*]{equation}
    \label{conm=3:2}
    \frac{2-\alpha}{1-\alpha}\gamma\ge1,
    \tag{Condition 2 for m=3}
\end{empheq}
$p_a+p_b>1$. This contradicts the supposition that $\pi(\mathcal{J})=1$.
\end{proof}

From \Cref{claim:m=3sasbrn}, we know that either $s_a\ge r_{n_k}$ or $s_b\ge r_{n_k}$. By \Cref{lem:idle_basic}, $\theta^+ \leq r_{n}$. Similarly, as in the general analysis, we will discuss the following two cases of this pair of jobs. 
\begin{enumerate}
    \item $\min\{s^\pi_a,s^\pi_b\} < \theta^+$.
    \item $\min\{s^\pi_a,s^\pi_b\} \geq \theta^+$.
\end{enumerate}

For the first case, assume $s_a^\pi<s_b^\pi$ without loss of generality. By \Cref{claim:pre_case1}, $s_a<r_n$ and $p_a>\gamma$. Then $s_a<r_n<s_b$. Similarly to the proof of \Cref{claim:sasbrnk_case1}, we get $s_a-r_a\ge s_a-s_a^\pi>\gamma$. We know that there won't be idle machines during $[r_a,s_a)$ and $[r_n,s_n)$. Since $C_a\geq s_n$, there is at least one busy machine during $(r_a, C_n)$. Assume the minimal release time is $r_{min}$. Therefore, $3(1-r_{min})\ge P_\pi=P_\sigma\ge s_n+p_n-r_{min}+2\gamma(\frac{3}{1+2\alpha}-1)>1+\gamma+2\gamma(\frac{3}{1+2\alpha}-1)-r_{min}$. Conditioned on 
\begin{empheq}[box=\shadowbox*]{equation}
    \label{conm=3:3}
    \gamma(\frac{6}{1+2\alpha}-1)\ge2,
    \tag{Condition 3 for m=3}
\end{empheq}
$3>1+\gamma(\frac{6}{1+2\alpha}-1)+2r_{min}\ge3$, which is a contradiction. The illustration of the impossibility of the second case is the same as \Cref{sec:case2}. Therefore, we should follow Conditions 9-15. 

In conclusion, to prove \Cref{thm:ratio_m=3}, we need not discuss the property of the last locking chain because we prove $|S|=1$ by a new condition of \eqref{conm=3:1}. Therefore, Conditions 1-3 can be removed. On the other hand, it also becomes easier to handle early critical pairs. The proof of \Cref{claim:m=3sasbrn} replaces Conditions 4-6 by \eqref{conm=3:2}. Then, for late critical pairs, in the first case, we also replace Conditions 7-8 with a weaker one-\eqref{conm=3:3}. In the second case, we keep the same proof as the general case. Therefore, we keep Conditions 9-15. 
\subsection{Conditions Required in This Special Case}
\begin{align}
    3\gamma-5\alpha-1 &> 0
    \tag{Condition 1 for m=3}\\
    \frac{2-\alpha}{1-\alpha}\gamma&\ge1
    \tag{Condition 2 for m=3}\\
    \gamma(\frac{6}{1+2\alpha}-1)&\geq2
    \tag{Condition 3 for m=3}\\
    1-2\alpha&\ge 0     
    \tag{Condition 9}\\
    \frac{5}{4}-6\alpha&\ge0   
    \tag{Condition 10}\\
    1-5\alpha+12\alpha^2&\ge 0\ \text{(always holds)}     
    \tag{Condition 11}\\
    3(1-2\gamma)+(12\alpha+1)\gamma-1&\le 0    
    \tag{Condition 12}\\
    % -\frac{1}{2}+2\alpha&\le0   
    % \tag{Condition 16}\\
    \frac{1}{2}-6\alpha&\ge0     
    \tag{Condition 13}\\
    1-3\alpha(1-2\alpha)&\ge 0\ \text{(always holds)}    
    \tag{Condition 14}\\
    3(\frac{1}{2}-\gamma)+6\alpha (\frac{1}{2}+\gamma)-\frac{1}{2}&\le 0    
    \tag{Condition 15}
\end{align}

First, it is straightforward to verify Conditions 9, 10, 11, 13, 14 hold if $\alpha\le\frac{1}{12}$.
% 9, 10, 11, 13, 14, 16, 17, 18
For the remaining conditions, it can be checked that $\alpha=0.07066$ and $\gamma = 0.4817$ are feasible for all of them. We plot a figure for help verifying. 

\begin{figure}[H]
    \centering
    \includegraphics[width=0.6\textwidth]{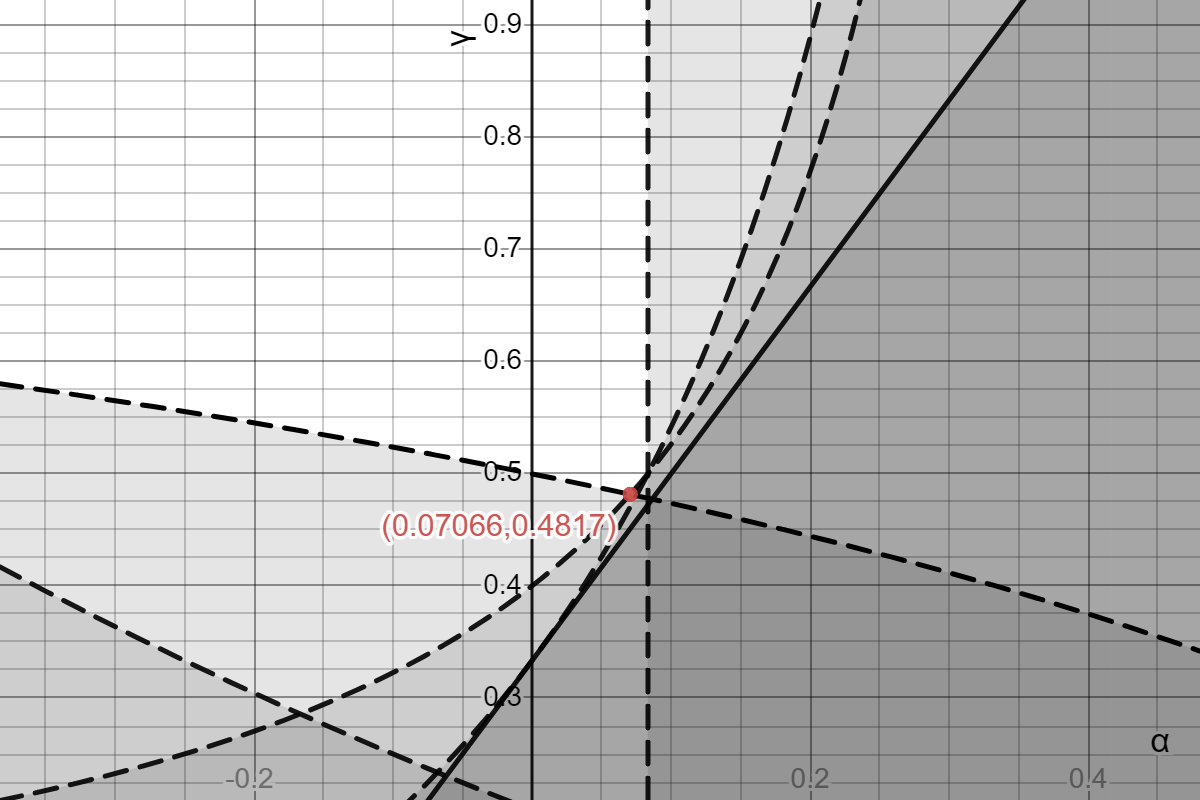}
    \caption{We plot all conditions in the figure and mark the infeasible region in grey. Finally, the white region means the feasible region where all constraints are satisfied.}
    \label{fig:enter-label}
\end{figure}
\iffalse
\begin{figure}[H]
    \centering
    \begin{tikzpicture}
        \begin{axis}[axis lines=middle,
%       grid,
    width = 0.6\textwidth, % Adjust the width as needed
    height = 0.4\textwidth, % Adjust the height as needed
        thick,xmin=-0.4,xmax=0.5,
        ymin=0.2,ymax=1
        ]
            \addplot+[no marks,domain=-0.25:0.59]{1/(1+(1-x)^5)+(1+2*x)*(1-(1-(1-x)^5)/(1+(1-x)^5))};
            %\addlegendentry{$f(\alpha)$}
            \draw [thick, red, dashed]  (axis cs: 0.1, 1) -- (axis cs: 0.1, 1.55);
            \draw[fill=black] (axis cs:0,1.5) circle (1.5pt);
            \draw[fill=black] (axis cs:0.1,1.51975) circle (1.5pt);
            \node[anchor=west] at (axis cs:0.1,1.51975) {$(0.1,1.5198)$};
            %\node[anchor=south east] at (axis cs:0,1.5) {$(\pgfmathprintnumber{0}, \pgfmathprintnumber{1.5})$};
        \end{axis}
    \end{tikzpicture}
    
    \caption{We plot all conditions in the figure and mark the infeasible region in grey. Finally, the white region means the feasible region where all constraints are satisfied.}
\end{figure}
\fi
\section{Hard Instance}
\label{sec:hard}

 Here, we prove the following theorem to demonstrate the necessity of introducing dynamic locking.
We recall \Cref{thm:dynamic locking}:

\dynamiclocking*

The proof is given by case-by-case analysis on the locking parameter $\alpha$ in the \GSLEEPY.

\subsection{Case 1: $\alpha \geq \frac{1}{2(m-1)}$}

Consider the following instance: $m$ jobs are released at time $0$, with processing time $1$. In the optimum schedule, by starting the $m$ jobs at time 0, we obtain the optimum makespan of $1$. In the \GSLEEPY algorithm's schedule, the last started job of the $m$ jobs would be $(m-1)\alpha$, so the completion time of the job is $(m-1)\alpha + 1 \geq 1.5$. In this case, the competitive ratio of \GSLEEPY is at least $1.5$. Refer to Figure \ref{fig:hard_instance_1}.

\begin{figure}[htbp]
    \centering
    \begin{subfigure}[b]{0.2\textwidth}
         \centering
         \includegraphics[width=\textwidth]{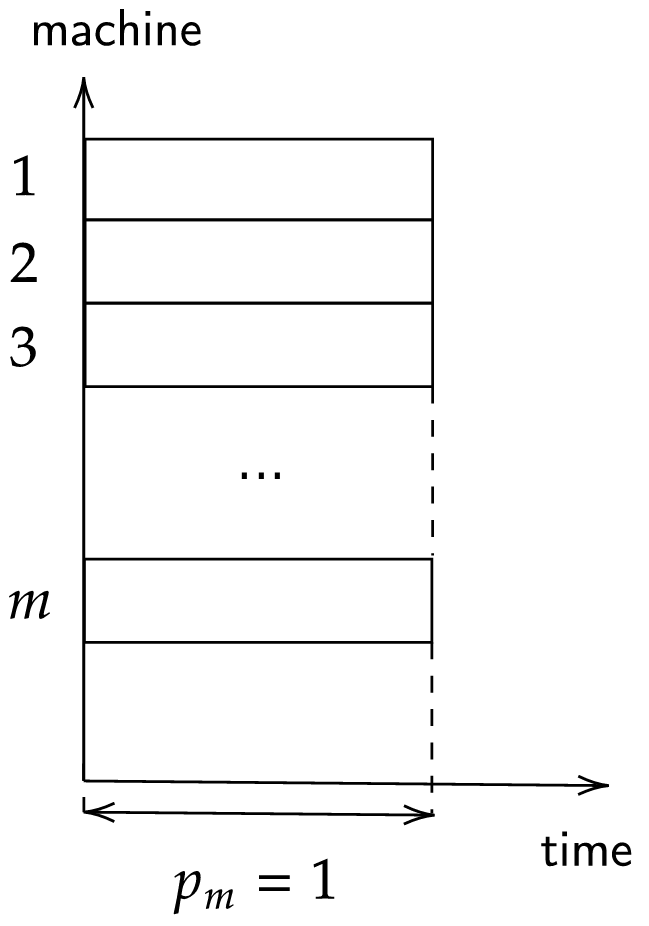}
     \end{subfigure}
     \begin{subfigure}[b]{0.32\textwidth}
         \centering
         \includegraphics[width=\textwidth]{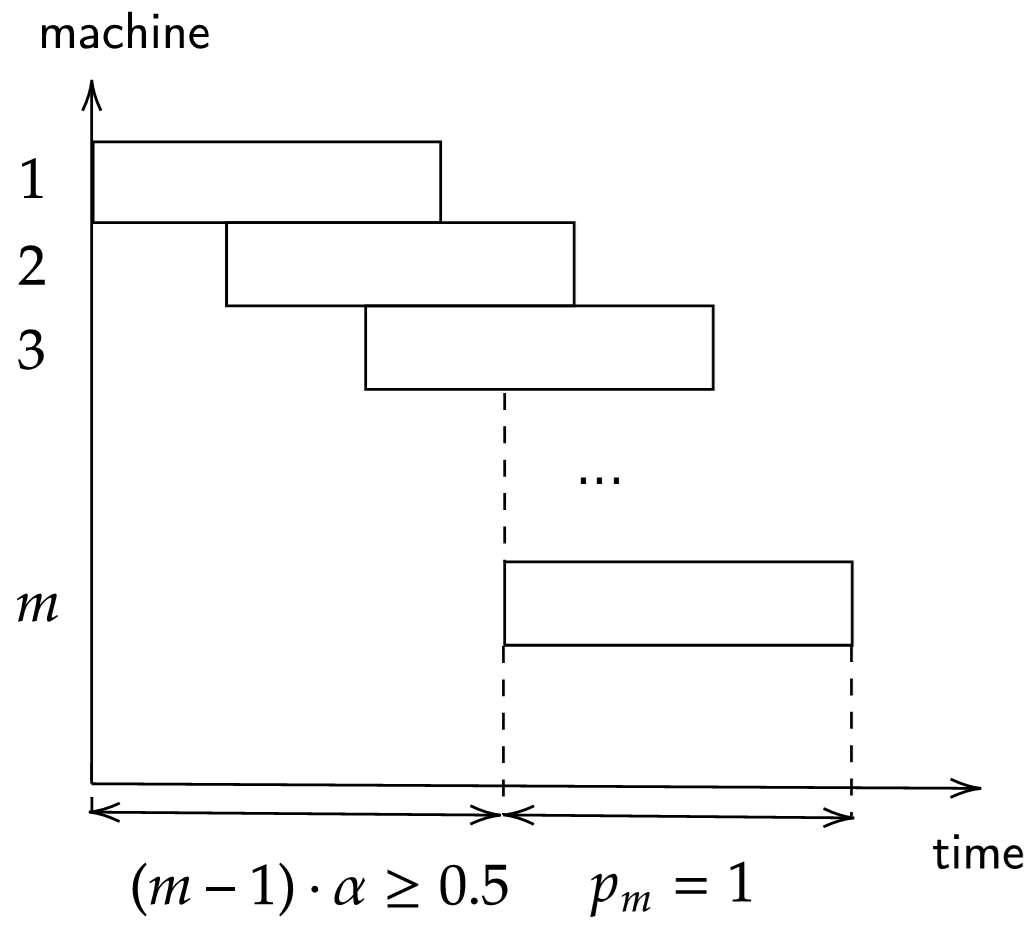}
     \end{subfigure}
    \caption{The schedule of $\pi$ (left) and  $\sigma$ (right) in the hard instance of \textbf{Case 1}.}
    \label{fig:hard_instance_1}
\end{figure}

\subsection{Case 2: $\alpha < \frac{1}{2(m-1)}$ ($m\geq6$)}

Consider the following instance:
\begin{itemize}
    \item $6$ jobs labeled from $1$ to $6$, with $\forall i \in [1, 6].~ r_i = 0$, $p_1 = \frac{1}{1 + (1-\alpha)^5}$, and $\forall i \in [2, 6].~ p_i = (1-\alpha)^{i-1} p_1$.
    \item $m - 3$ jobs labeled from $7$ to $m + 3$, with $\forall i\in [7,m+3].~ r_i = (1 - (1-\alpha)^5)p_1+\varepsilon$, where $\varepsilon \rightarrow 0$, and $p_i = 1 -  (1 - (1-\alpha)^5)p_1$. 
\end{itemize}

\subsubsection{\OPT's Schedule.}

The $i$-th machine processes the $i$-th and the $7-i$-th jobs in the first three machines and the completion time is $p_i + p_{7-i} = p_1 \cdot ((1-\alpha)^{i-1} + (1-\alpha)^{6-i}) \leq p_1 \cdot (1 + (1-\alpha)^{5}) = 1$. In the last $m-3$ machines, each of them processes one job labeled from $[7, m+3]$, the finish time is $ (1 - (1-\alpha)^{5})p_1+\varepsilon  + (1 -  (1 - (1-\alpha)^{5})p_1)= 1+\varepsilon \rightarrow 1$. Refer to \Cref{fig:hard_instance_case_2_pi} below. 
\begin{figure}[H]
    \centering
    \includegraphics[width=0.55\textwidth]{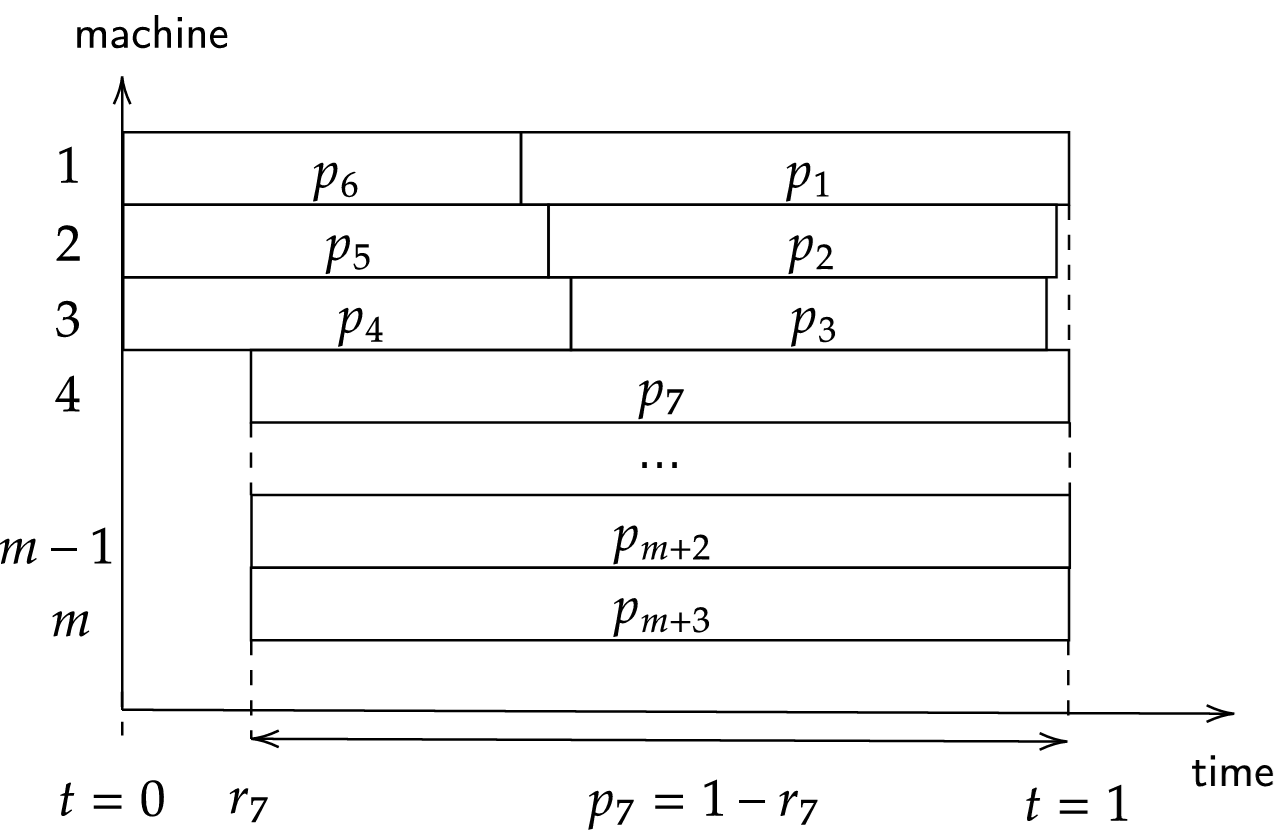}
    \caption{The schedule of $\pi$ in the hard instance of \textbf{Case 2}.}
    \label{fig:hard_instance_case_2_pi}
\end{figure}

\subsubsection{Generalized SLEEPY's Schedule.} 

For $1 \leq i \leq 6$, because the processing time decreases from $1$ to $6$, the jobs would be placed in this order and processed in different machines. 
Formally, we have $s_1=0$ and for all $2 \leq i \leq 6$, $s_i = \sum_{j = 1}^{i-1} \alpha p_j$. Therefore, for all $1 \leq i \leq 6$,
\begin{align*}
    C_i = s_i + p_i 
   &= \sum_{j = 1}^{i-1} \alpha p_j + (1-\alpha)^{i-1} p_1\\
   &= \sum_{j = 1}^{i-2}\alpha p_j + \alpha  (1 - \alpha)^{i-2}p_1 +  (1 - \alpha)^{i-1}p_1 \\
   &= \sum_{j = 1}^{i-2}\alpha p_j +  (1 - \alpha)^{i-2}p_1 \\
   &= \dots = p_1.
\end{align*}
Note that $s_{6} = p_1 - p_{6}  = p_1 (1 - (1-\alpha)^{5}) = r_{7}-\varepsilon<r_7$, the last $m - 3$ jobs are released just after the start of the job $6$. Because only $m-6$ machines are free before $p_1$, three of the last $m-6$ jobs must be scheduled after $p_1$ or even after $r_7+p_7=1 > p_1$. Without loss of generality, we call them $m+1$, $m+2$, and $m+3$. By the definition of \GSLEEPY, the start time of the last job $s_{m+3}$ must be large, where
\[
    s_{m+3} = s_{m+1}+ 2\alpha (1- (1 - (1-\alpha)^{5})p_1) = p_1 + 2\alpha (1- (1 - (1-\alpha)^{5})p_1). 
\]
Therefore, 
\begin{align*}
    C_{m+3} &= p_1 + 2\alpha (1- (1 - (1-\alpha)^{5})p_1) + p_{m+3}\\
    &= \frac{1}{1 + (1-\alpha)^5} + (1+2\alpha) (1-\frac{1 - (1-\alpha)^{5}}{1 + (1-\alpha)^5})\\
    &= p_1+(1+2\alpha)p_7.
\end{align*}
Refer to \Cref{fig:hard_instance_case_2_sigma} below. 

\begin{figure}[H]
    \centering
    \includegraphics[width=0.8\textwidth]{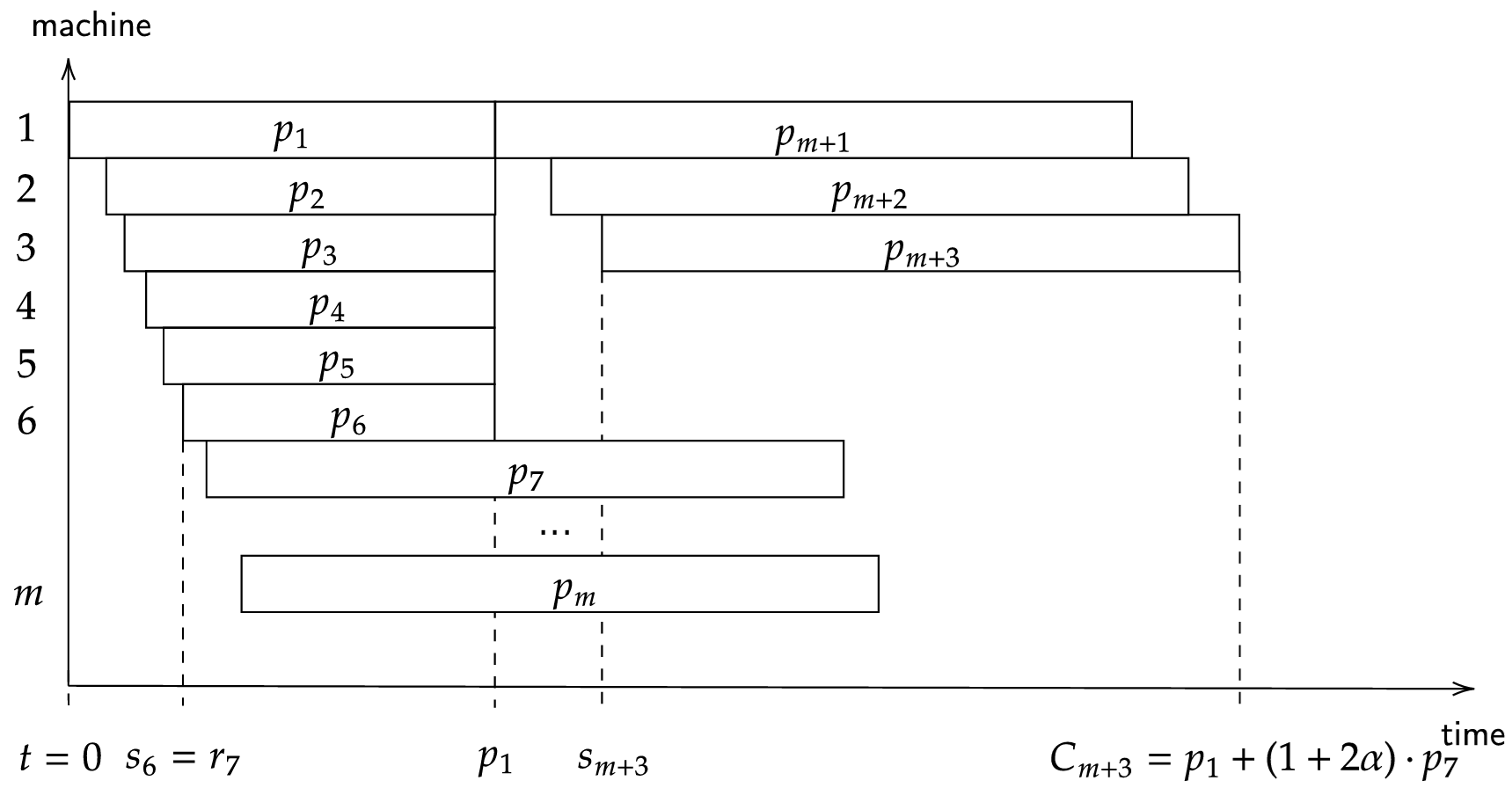}
    \caption{The schedule of $\sigma$ in the hard instance of \textbf{Case 2}.}
    \label{fig:hard_instance_case_2_sigma}
\end{figure}
Let $f(\alpha) = \frac{1}{1 + (1-\alpha)^5} + (1+2\alpha) (1-\frac{1 - (1-\alpha)^{5}}{1 + (1-\alpha)^5})$. We observe that for all $\alpha \leq \frac{1}{2(m-1)}$, $f(\alpha)\geq 1.5$. We plot the function in the following \Cref{fig:hardnessratio} for verification. The competitive ratio of \GSLEEPY is at least $1.5$ in this case.

% For the remaining $m-6$ machines, $m - 6$ jobs labeled from $[7, m+3]$ would be processed in each of the machines.
% The starting time of the last started job among the $m-6$ jobs would be $p_1 (1 - (1-\alpha)^{5}) + \alpha p_1((1-\alpha)^{5}) + (m-7)\alpha(1-p_1 (1 - (1-\alpha)^{5})) \leq 6 p_1 \alpha + (m-7)\alpha < (m-1) \alpha = 1/2 < p_1$ And their finish time is at least $1 > p_1$

% So the left $3$ jobs would be processed after the time $p_1$.

% The starting time of the last started job among the left $3$ jobs would be $p_1 + 2\alpha (1-p_1 (1 - (1-\alpha)^{5}))$.So the finish time of this time is $p_1 + 2\alpha (1-p_1 (1 - (1-\alpha)^{5})) + (1-p_1 (1 - (1-\alpha)^{5})) = \frac{1}{1 + (1-\alpha)^5} + (1 + 2\alpha) (1-\frac{1}{1 + (1-\alpha)^5} (1 - (1-\alpha)^{5}))$

% Consider $f(\alpha) = \frac{1}{1 + (1-\alpha)^5} + (1 + 2\alpha) (1-\frac{1}{1 + (1-\alpha)^5} (1 - (1-\alpha)^{5}))$

% The graph of this function is shown in the figure above:
\begin{figure}[H]
    \centering
    \includegraphics[width=0.5\textwidth]{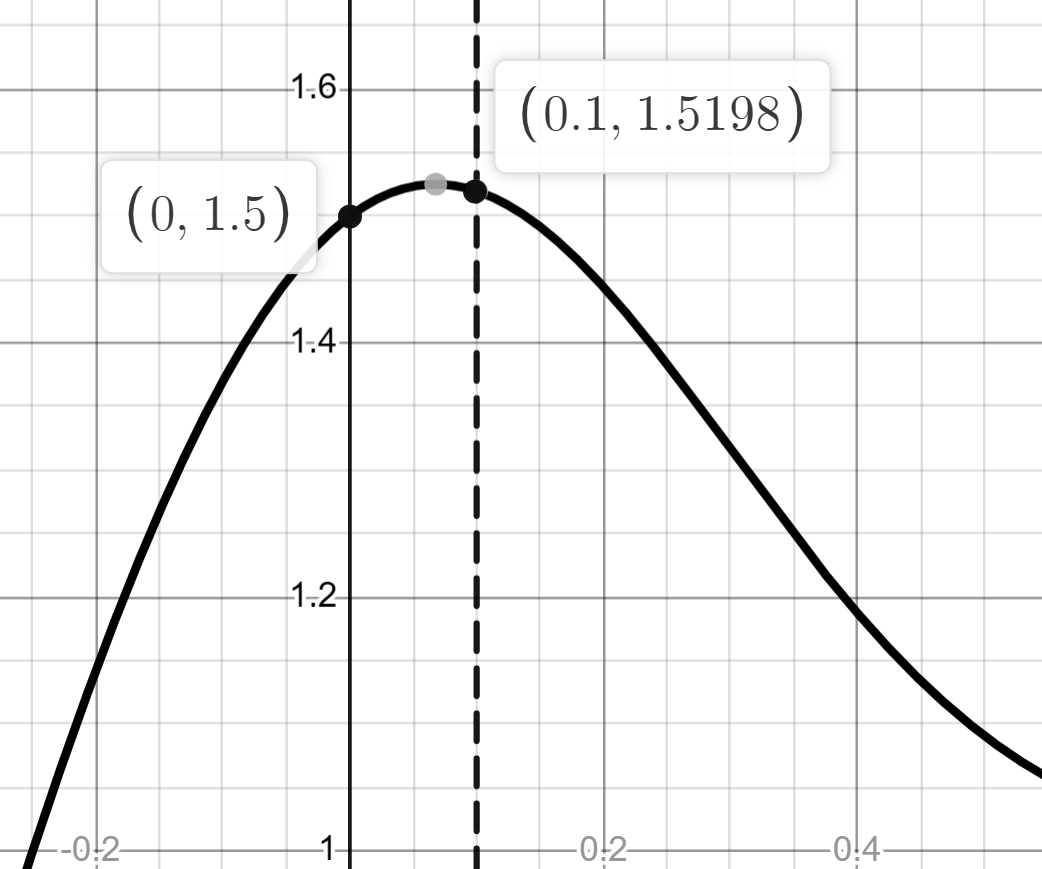}
    \caption{Function $f(\alpha)$. We have $\forall \alpha \leq \frac{1}{2(m-1)},~f(\alpha)\geq 1.5$. Note that the vertical line is $\alpha = 0.1 = \frac{1}{2(m-1)}$ when $m=6$.}
    \label{fig:hardnessratio}
\end{figure}
\section{Conclusion}
In conclusion, our main contribution is to show that we can beat the $1.5$-competitive ratio of \LPT for every constant $m$, marking an important step toward understanding whether \LPT is the optimal algorithm for this over-time online makespan minimization problem. 
We believe that our results and techniques can inspire further work toward fully resolving this open question:
\begin{itemize}
    \item If we believe that \LPT is indeed optimal when $m$ becomes large, then it would be necessary to construct a counterexample involving an infinite number of machines. This is because, for any constant number of machines, our algorithm achieves a competitive ratio strictly below $1.5$. Prior to our results, many researchers may have believed that a counterexample with some constant $m$, for example $m=4$, could already demonstrate the tight bound of $1.5$.
    \item On the other hand, if we believe that there exists a better algorithm that can beat $1.5$ for general $m$, then several challenges must be addressed. In our paper, we use a $o(1)$ locking parameter; otherwise, the algorithm could waste too much processing power, making it difficult to use efficiency arguments. To improve beyond $1.5$ for general $m$, one would need to apply an $\Omega(1)$ locking parameter and resolve the associated efficiency issues by designing a more refined version of the left-over lemma. Furthermore, the complications arising from the last locking chain become even more severe, suggesting that a similar idea to dynamic locking must also be incorporated.
\end{itemize}

On the other hand, we also believe that our algorithmic idea is not merely a theoretical trick for analysis. In this over-time model, locking machines for a period to wait for more future information can play an important role in improving \LPT. Our dynamic locking rule suggests that a dynamic locking strategy may outperform a static one. Although in practice the locking parameter and the locking strategy may not be set exactly as in our work, it is an interesting direction to investigate whether there exists a suitable way to lock machines that can enhance the performance of \LPT in practical settings.

%%
%% Bibliography
%%

%% Please use bibtex, 
\bibliographystyle{wine/splncs04}
\bibliography{sample}

\end{document}